\def\dOi{13(2:1)2017}
\newtheorem{theorem}{Theorem}[section]
\newtheorem{corollary}[theorem]{Corollary}
\newtheorem{lemma}[theorem]{Lemma}
\newtheorem{definition}[theorem]{Definition}
\newtheorem{remark}[theorem]{Remark}
\newtheorem{example}[theorem]{Example}
\def\buyer{{\role{Buyer}}}
\def\seller{{\role{Seller}}}
\def\bank{{\role{Bank}}}
\def\AdaptRule{Up}
\def\NoAdaptRule{NoUp}
\def\NoAdaptLabel{\mbox{\normalfont\textbf{no-up}}}
\def\net{\mathcal{N}}
\def\event{\func{events}}
\def\devent{\func{events}}
\def\rules{\mathbf{I}}
\def\APOC{DPOC\xspace}
\def\AIOC{DIOC\xspace}
\def\AIOCJ{AIOCJ\xspace}
\def \mid {\; | \;}
\newcommand{\one}{\mathbf{1}}
\newcommand{\zero}{\mathbf{0}}
\newcommand{\tick}{\surd}
\newcommand{\fromC}{\mathbf{from}}
\newcommand{\toC}{\mathbf{to}}
\newcommand{\ifC}{\mathbf{if}}
\newcommand{\elseC}{\mathbf{else}}
\newcommand{\whileC}{\mathbf{while}}
\newcommand{\scopeC}{\mathbf{scope}}
\newcommand{\extFunc}[1]{\mathtt{#1}}
\newcommand{\trueC}{\mathbf{true}}
\newcommand{\falseC}{\mathbf{false}}
\newcommand{\okC}{\mathbf{ok}}
\newcommand{\noC}{\mathbf{no}}
\newcommand{\state}{\Sigma}
\newcommand{\substLabel}[3]{[#1 / #2,#3]}
\newcommand{\substAPOC}[2]{[#1 / #2]}
\newcommand{\comm}[5]{#1 : #2(#3) \rightarrow #4(#5)}
\newcommand{\commLabel}[5]{#1 : #2(#3) \rightarrow #4(#5)}
\newcommand{\co}[2]{\overline{#1} \at {#2}}
\newcommand{\cout}[3]{#1 : #2\; \toC\; {#3}}
\newcommand{\coutLabel}[3]{\overline{#1}\langle#2\rangle \at {#3}}
\newcommand{\ci}[2]{#1 \at #2}
\newcommand{\cinp}[3]{#1 : #2 \;\fromC\; #3}
\newcommand{\cinpLabel}[4]{#1(#2 \leftarrow #3) \at #4 }
\newcommand{\upd}{\func{upd}}
\newcommand{\Prop}{\func{compl}}
\newcommand{\ssim}{\func{clean}}
\newcommand{\gram}{::=}
\newcommand{\proj}{\func{proj}}
\newcommand{\sset}[2]{\left\{~#1  \left |
                    \begin{array}{l}#2\end{array} \right.     \right\}}
\newcommand{\ruleName}[1]{[{\sc #1}]}
\newcommand{\rulebox}[1]{\!\!\fcolorbox{gray}{white}{\mbox{$#1$}}}
\newcommand{\auxiliaryCode}[1]{\color{gray!99}#1}
\newcommand{\arule}[3]{#3}
\newcommand{\cnd}{\mathcal{C}}
\NewDocumentCommand\scope{mmmmg}{\scopeC \ \at #2 \ \{ #3 \}}
\NewDocumentCommand\pscope{mmmmmg}{\idx{#1} \scopeC \ \at #3 \ \{ #4 \}\IfNoValueTF{#6}{}{\ \mathbf{roles} \ \{ #6 \}}}
\newcommand{\psscope}[4]{\idx{#1} \scopeC \ \at #3 \ \{ #4 \}}
\def\grass#1{[\![#1]\!]}
\def\ev{\varepsilon}
\def\leqapoc{\leq_{\APOC}}
\def\leqaioc{\leq_{\AIOC}}
\newcommand{\code}[1]{\mbox{$\mathtt{#1}$}}
\newcommand{\func}[1]{\mbox{$\mathsf{#1}$}}
\newcommand{\did}[2]{	\left\lfloor^\text{\raisebox{-1pt}{#1}}|_\textsc{#2}\right\rceil}
\newcommand{\Div}{|}
\newcommand{\seqOp}{;}
\newcommand{\parOpI}{|}
\newcommand{\parOpP}{|}
\newcommand{\assign}[3]{#1 \at #2 = #3 }
\newcommand{\at}{\mathtt{@}}
\newcommand{\rolesFuncPair}[2]{#1 \rightarrow #2}
\newcommand{\adapt}[4]{\scopeC \; \at #4 \; \{ #1 \}}
\newcommand{\ifthen}[3]{\ifC \; #1  \; \{ #2 \} \; \elseC \; \{ #3 
\}}
\newcommand{\ifthenKey}[4]{\idx{#4} \ifC \; #1  \; \{ #2 \} \; \elseC \; \{ #3 
\}}
\newcommand{\while}[2]{\whileC \; #1 \;  \{ #2 \}}
\newcommand{\whileKey}[3]{\idx{#3} \whileC \; #1 \;  \{ #2 \}}
\newcommand{\arro}[1]{\xrightarrow[]{#1}}
\newcommand{\tuple}[1]{\left\langle #1\right\rangle}
\DeclareMathOperator{\transI}{\func{transI}}
\DeclareMathOperator{\transF}{\func{transF}}
\DeclareMathOperator{\rolesFunc}{\func{roles}}
\DeclareMathOperator{\bisim}{\sim}
\newcommand{\roleExec}[2]{\left(#1\right)_{\role{#2}}}
\newcommand{\role}[1]{\mathbf{#1}}
\definecolor{color:keyword}{RGB}{150,0,150}
\definecolor{color:comment}{RGB}{0,128,0}
\definecolor{color:string}{RGB}{0,0,255}
\lstdefinelanguage{MyChor}{
morekeywords={
  while, if, else, scope, true, false, not, getInput, and, 
  prop, from, to, via, on, do, and, roles, null, rule, N, 
  aioc, preamble, starter, include, or, E, with
},
sensitive=true,
morecomment=[l]{//},
morecomment=[s]{/*}{*/},
morestring=[b]",
otherkeywords={;,|,@,!}
}
\newcommand{\setMyChor}{
\lstset{
  language=MyChor,
  basicstyle=\small\ttfamily
  }
}
\newcommand{\toolplugin}{\func{AIOCJ-ecl}}
\newcommand{\toolmid}{\func{AIOCJ-mid}}
\newcommand{\ambient}{\Sigma,\rules}
\newcommand{\ambientN}{\rules}
\newcommand{\idxSign}[1]{\mathbf{#1}}
\newcommand{\idx}[1]{\idxSign{#1}\hspace{-3pt}:}
\newcommand{\idxTrueSign}[1]{\idxSign {#1}_{\mathtt{T}}}
\newcommand{\idxFalseSign}[1]{\idxSign {#1}_{\mathtt{F}}}
\newcommand{\idxIfRecvSign}[1]{\idxSign {#1}_?}
\newcommand{\idxCloseSign}[1]{\idxSign {#1}_{\mathtt{C}}}
\newcommand{\idxTrue}[1]{\idx{\idxTrueSign{#1}}}
\newcommand{\idxFalse}[1]{\idx{\idxFalseSign{#1}}}
\newcommand{\idxIfRecv}[1]{\idx{\idxIfRecvSign{#1}}}
\newcommand{\idxClose}[1]{\idx{\idxCloseSign{#1}}}
\newcommand{\pnum}[1]{\raisebox{-1pt}{\large \textcircled{\raisebox{1.1pt} {\scriptsize #1}}}}
\newcommand{\AuxSb}{sb}
\newcommand{\AuxSe}{se}
\newcommand{\AuxIf}{cnd}
\newcommand{\AuxWb}{wb}
\newcommand{\AuxWe}{we}
\newcounter{NotationCounter}
\begin{document}

\title[Dynamic Choreographies]{Dynamic Choreographies: Theory and Implementation}

\author[M.~Dalla Preda]{Mila Dalla Preda\rsuper a}
\address{{\lsuper a}Department of Computer Science, University of Verona}
\email{mila.dallapreda@univr.it}

\author[M. Gabbrielli]{Maurizio Gabbrielli\rsuper b}
\address{{\lsuper{b,c,d}}Department of Computer Science and Engineering,
 University of Bologna/INRIA}
\email{\{gabbri, sgiallor, lanese\}@cs.unibo.it}

\author[S.~Giallorenzo]{Saverio Giallorenzo\rsuper c}
\address{\vspace{-18 pt}}
\thanks{{\lsuper c}Supported by the EU EIT Digital project \emph{SMAll}}
\author[I.~Lanese]{Ivan Lanese\rsuper d}
\address{\vspace{-18 pt}}
\thanks{{\lsuper d}Supported by the GNCS group of INdAM via project 
   \emph{Logica, Automi e Giochi per Sistemi Auto-adattivi}}

\author[J.~Mauro]{Jacopo Mauro\rsuper e}
\address{{\lsuper e}Department of Informatics, University of Oslo}
\email{jacopom@ifi.uio.no}
\thanks{{\lsuper e}Supported by the EU project FP7-644298
   \emph{HyVar: Scalable Hybrid Variability for Distributed, Evolving Software
Systems}}

\keywords{Choreographies, Adaptable Systems, Deadlock Freedom}

\begin{abstract}
Programming distributed applications free from communication deadlocks and
race conditions is complex. Preserving these properties when applications are
updated at runtime is even harder.
We present a choreographic approach for programming updatable, distributed
applications. We define a choreography language, called Dynamic
Interaction-Oriented Choreography (\AIOC{}), that allows the programmer to
specify, from a global viewpoint, which parts of the application can be
updated. At runtime, these parts may be replaced by new \AIOC{} fragments from
outside the application. \AIOC{} programs are compiled, generating code for
each participant in a process-level language called Dynamic Process-Oriented
Choreographies (\APOC{}). We prove that \APOC{} distributed applications
generated from \AIOC{} specifications are deadlock free and race free and that
these properties hold also after any runtime update.
We instantiate the theoretical model above into a programming framework called
Adaptable Interaction-Oriented Choreographies in Jolie (AIOCJ)
that comprises an integrated development environment, a compiler from an
extension of \AIOC{}s to distributed Jolie programs, and a runtime environment
to support their execution.
\end{abstract}

\maketitle

\section{Introduction}\label{sec:intro}

Programming distributed applications is an error-prone activity. Participants
send and receive messages but, if the application is badly programmed, they may
get stuck waiting for messages that never arrive (communication deadlock), or
they may receive messages in an unexpected order, depending on the speed of the
other participants and of the network (races).

Recently, language-based approaches have been proposed to tackle the
complexity of programming concurrent and distributed applications. These
approaches provide high-level primitives that avoid by construction some of
the risks of concurrent programming. Some renowned examples are the
ownership/borrowing mechanisms of Rust~\cite{rust} and the \texttt{separate}
keyword of SCOOP~\cite{scoop}. In these settings most of the work
needed to ensure a correct behaviour is done by the language compiler and
runtime support. The use of these languages requires a conceptual shift from
traditional ones, but reduces times and costs of development, testing, and
maintenance by avoiding some of the most common programming errors.

We propose an approach based on \emph{choreographic
programming}~\cite{hondaESOPext,poplmontesi,SEFM08,scribble,Montesi15}
following a similar philosophy. A choreography is a global description of a
distributed application expressed as the composition of the expected
interactions between its components. Based on such an abstraction, we present
high-level primitives that avoid some of the main errors linked to programming
communication-centred concurrent and distributed applications that can be
updated at run time.

For an informal introduction of our approach, let us consider a simple
application where a $\buyer$ asks the price of some product to a
$\seller$.
\vspace{5pt}
\begin{lstlisting}[mathescape=true,numbers=left,xleftmargin=120pt]
$priceReq : \buyer( prod ) \rightarrow \seller( order );$
$order\_price@\seller = \extFunc{getPrice}( order );$
$\mathit{offer} : \seller( order\_price ) \rightarrow \buyer(prod\_price)$
\end{lstlisting}
\vspace{5pt}
A unique choreographic program describes the behaviour of multiple
participants, here the $\buyer$ and the $\seller$. 
The first line specifies that the $\role{Buyer}$ sends, along channel
$priceReq$, the name of the desired product $prod$ to the $\role{Seller}$,
which stores it in its local variable $order$. At Line 2, the $\role{Seller}$
computes the price of the product with function $\extFunc{getPrice}$. Finally,
at Line 3 the $\role{Seller}$ sends the computed price to the $\role{Buyer}$
on channel $\mathit{offer}$. The $\role{Buyer}$ stores the  value in its local
variable $prod\_price$.

Choreographic languages focus on describing \emph{message-based interactions}
between distributed participants. As shown by the example, the distinctive
feature of choreographies is that communications are atomic entities, i.e.,
not split into send and receive actions. This makes impossible to write
programs with common errors of concurrent and distributed applications like
deadlocks or race conditions.
However, a choreographic description is not directly executable. Indeed,
choreographies describe atomic interactions from a global point of view whilst
executable programs are written in lower level languages (like Java or C) that
support only local behaviours and communication/synchronisation primitives
such as message send and message receive.
Hence, to run a choreographic description, we have to compile it into a set of
lower level programs. The correct compilation of choreographic descriptions is
one of the main challenges of choreographic languages, yet the ease of
development and the strong guarantees of deadlock and race freedom make it a
worth endeavour.

In this work, we take this challenge one step further: we consider
\emph{updatable} distributed applications whose code can change dynamically,
i.e., while the application is running. In particular, the applications we
consider can integrate external code at runtime. Such a feature, tricky in a
sequential setting and even more in a distributed one, has countless uses: to
deal with emergency requirements, to cope with rules and requirements
depending on contextual properties or to improve and specialise the
application to user preferences.
We propose a general mechanism to structure application updates. Inside
applications, we delimit blocks of code, called \emph{scope}s, that may be
dynamically replaced by new blocks of code, called \emph{update}s. Updates are
loosely related to scopes: it is not necessary to know the details of the
behaviour of updates when writing scopes, and updates may be written while
applications are running.

To illustrate our approach, let us consider the previous example and let us
suppose that we would like to introduce new commercial offers at runtime, e.g.,
to provide a discount on the computed prices. Since we want to be able to
change how prices are computed, we enclose Lines 2--3 of the example within a
scope, as shown below.
\vspace{5pt}
\begin{lstlisting}[mathescape=true,numbers=left,xleftmargin=100pt]
$priceReq : \buyer( prod ) \rightarrow \seller( order );$
$\scopeC\ @\seller \{$
  $order\_price@\seller = \extFunc{getPrice}( order );$
  $\mathit{offer} : \seller( order\_price ) \rightarrow \buyer(prod\_price)$
$\}$
\end{lstlisting}
\vspace{5pt}
In essence, a scope is a delimiter that defines which part of the application
can be updated. Each scope identifies a \emph{coordinator} of the update,
i.e., the participant entitled to ensure that either none of the participants
updates, or they all apply the same update. In the example above, the
coordinator of the update is the $\seller$ (Line 2).

Since now the code includes a scope, we can introduce runtime updates. Let us
assume that the $\seller$ issued a fidelity card to some frequent customers and
(s)he wants to update the application to let $\buyer$s insert their fidelity
card to get a discount. The update in Figure~\ref{fig:rule_price_inquiry}
answers this business need.
\begin{figure}[b]
\begin{center}
\begin{lstlisting}[mathescape=true,numbers=left,xleftmargin=100pt]
$cardReq : \seller( null ) \rightarrow \buyer( \_ );$
$card\_id@\buyer = \extFunc{getInput()};$
$cardRes : \buyer( card\_id ) \rightarrow \seller( buyer\_id );$
$\mathtt{if} \ \extFunc{isValid}( buyer\_id )@\seller \ \{$
  $order\_price@\seller = \extFunc{getPrice}( order ) * 0.9 $
$\} \ \mathtt{else} \ \{$
  $order\_price@\seller = \extFunc{getPrice}( order )$
$\};$
$\mathit{offer} : \seller( order\_price ) \rightarrow \buyer( prod\_price )$
\end{lstlisting}
\end{center}
\caption{Fidelity Card Update. \label{fig:rule_price_inquiry}}
\end{figure}
At runtime, if the $\seller$ (which is the coordinator of the update) applies
the update in Figure~\ref{fig:rule_price_inquiry}, the code of the update
replaces the scope. When this new code executes, the $\buyer$ sends his/her
$card\_id$ to the $\seller$. If the $card\_id$ is valid, the $\seller$ issues
a 10\% discount on the price of the selected good, otherwise it reports the
standard price to the $\buyer$.

We remark that expressing the behaviour described above using lower level
languages that lack a dedicated support for distributed runtime updates (e.g.,
Java, C, etc.) is extremely error prone. For instance, in the example above,
let us consider the case in which the $\buyer$ is updated first and it starts
the execution whilst the $\seller$ has not been updated yet. The $\buyer$
waits for a message from the $\seller$ on channel $cardReq$, whilst the
$\seller$ sends a message on channel $\mathit{offer}$. Thus, the application
is deadlocked. Furthermore, in our setting the available updates may change at
any time, posing an additional challenge.
To avoid errors, participants must select the same update and, since updates
appear and disappear at runtime, they must also be able to retrieve the same
selected update.

Since at choreographic level updates are applied atomically to all the
involved participants, these problems cannot arise if both the original
application and the updates are described as choreographies.
However, to execute our choreographic specifications we need to provide a
correct compilation from choreographies to lower level, executable programs.
Such task is very challenging and, in particular, we need to make sure that
our compilation generates correct behaviours that avoid inconsistencies on
updates.

\paragraph{\emph{Contributions}.}

In this paper, we present a correctness-by-construction approach to solve the
problem of dynamic updates of distributed applications. We provide a general
solution that comprises:
\begin{itemize}
\item the definition of a \emph{Dynamic Interaction-Oriented Choreography}
language, called \AIOC{}, to program distributed applications and supporting
runtime code update (\S~\ref{sec:aioc});
\item the definition of a \emph{Dynamic Process-Oriented Choreography}
language, called \APOC{}. \APOC{}s are based on standard send and receive
primitives but they are tailored to program updatable systems.
We introduce
\APOC{}s to describe implementations corresponding to \AIOC{}s.
(\S~\ref{sec:apoc});
\item the definition of a behaviour-preserving \emph{projection function} to
compile \AIOC{}s into \APOC{}s (\S~\ref{sec:endpoint_projection});
\item the \emph{proof of the correctness} of the projection function
  (\S~\ref{sec:corr}). Correctness is guaranteed even in a scenario where the
  set of available updates dynamically changes, at any moment and without
  notice;
 \item one instantiation of our theory into a \emph{development framework} for
  adaptive distributed applications called AIOCJ (\S~\ref{sec:aiocj}). In
  AIOCJ updates are embodied into adaptation rules, whose application is not
  purely non-deterministic (as in \AIOC{}s), but depends on the state of the
  system and of its environment. AIOCJ comprises an Integrated Development
  Environment, a compiler from choreographies to executable programs, and a
  runtime environment to support their execution and update.
\end{itemize}

\noindent
This paper integrates and extends material from~\cite{coordination_paper},
which outlines the theoretical aspects, and \cite{sle_paper}, which describes
the AIOCJ framework. Main extensions include the full semantics of \AIOC{} and
\APOC{}, detailed proofs and technical definitions, and a thorough description
of the example. Furthermore, both the presentation and the technical
development have been deeply revised and refined.

\section{Dynamic Interaction-Oriented Choreographies}\label{sec:aioc}

In this section we introduce  the syntax of \AIOC{}s, we illustrate the
constructs of the \AIOC{} language with a comprehensive example, and we finally
present the semantics of \AIOC{}s.

\subsection{\AIOC{} Syntax}\label{sub:aioc_syntax}

\AIOC{}s rely on a set of $\mathit{Roles}$, ranged over by
$\role{R},\role{S},\dots$, to identify the participants in the choreography. We
call them roles to highlight that they have a specific duty in the
choreography. Each role has its local state.

Roles exchange messages over public channels, also called
\emph{operations}, ranged over by $o$.  We denote with $\mathit{Expr}$
the set of expressions, ranged over by $e$. We deliberately do not give a
formal definition of expressions and of their typing, since our results do not
depend on it. We only require that expressions include at least values,
belonging to a set $\mathit{Val}$ ranged over by $v$, and variables, belonging
to a set $\mathit{Var}$ ranged over by $x,y,\dots$. We also assume a set of
boolean expressions ranged over by $b$.

The syntax of \emph{\AIOC{} processes}, ranged over by ${\mathcal I}, {\mathcal
I}', \ldots$, is defined as follows:
$$
\begin{array}{rll}
\mathcal I \gram 
  & \comm{o}{\role R}{e}{\role S}{x}  & \textit{(interaction)}      \\[2pt]
\Div & \mathcal I \seqOp{\mathcal I}'   & \textit{(sequence)}       \\[2pt]
\Div & \mathcal I \parOpI{\mathcal I}'  & \textit{(parallel)}       \\[2pt]
\Div & \assign{x}{\role R}{e}           & \textit{(assignment)}     \\[2pt]
\Div & \one                             & \textit{(inaction)}       \\[2pt]
\end{array}
\begin{array}{lll}
\Div & \zero                            & \textit{(end)}            \\[2pt]
\Div & \ifthen{b \at \role{R}}{\mathcal{I}}{\mathcal{I}'} 
                                        & \textit{(conditional)}    \\[2pt]
\Div & \while{b \at \role{R}}{\mathcal{I}}
                                        & \textit{(while)}          \\[2pt]
\Div & \scope{l}{\role{R}}{\mathcal{I}}{}
                                        & \textit{(scope)}
\vspace{1em}                                      
\end{array}
$$
Interaction $\comm{o}{\role R}{e}{\role S}{x}$ means that role $\role{R}$ sends
a message on operation $o$ to role $\role{S}$ (we require $\role{R}
\neq \role{S}$). The sent value is obtained by evaluating expression $e$ in the
local state of $\role{R}$ and it is then stored in the local variable $x$ of $\role{S}$.
Processes ${\mathcal I}\seqOp{\mathcal I}'$ and ${\mathcal I}\parOpI{
  \mathcal I}'$ denote sequential and parallel composition, respectively.
Assignment $\assign{x}{\role R}{e}$ assigns the evaluation of expression $e$ in
the local state of $\role R$ to its local variable $x$. 
The empty process
$\one$ defines a \AIOC{} that can only terminate. $\zero$ represents a
terminated \AIOC{}. It is needed for the definition of the operational
semantics and it is not intended to be used by the programmer.
We call \emph{initial} a \AIOC{} process where $\zero$ never occurs.
The conditional $\ifthen{b \at \role{R}}{\mathcal{I}}{\mathcal{I}'}$ and the
iteration $\while{b \at \role{R}}{\mathcal{I}}$ are guarded by the evaluation
of the boolean expression $b$ in the local state of $\role{R}$.The construct $\scope{l}{\role R}{\mathcal{I}}{\Delta}$ delimits a subterm
$\mathcal{I}$ of the \AIOC{} process that may be updated in the future.
In $\scope{l}{\role R}{\mathcal{I}}{\Delta}$, role $\role R$ is the
coordinator of the update: it ensures that either none of the participants
update, or they all apply the same update.

\vspace{1em}
\paragraph{\emph{A Running Example.}}
\label{sub:example}
We report in Figure~\ref{fig:purchase_scenario} a running example of a \AIOC{}
process that extends the one presented in the Introduction: the example
features a $\buyer$ that orders a product from a $\seller$, and a $\bank$ that
supports the payment from the $\buyer$ to the $\seller$.
The \AIOC{} process describes the behaviour of all of them. In this sense, the
\AIOC{} implements a protocol they agreed upon to integrate their business.
The \AIOC{} protocol also interacts with functionalities that are only
available at the premises of some of the participants (for instance, the
function $\extFunc{getPrice}$ is provided by the $\seller$ IT system and may
be only locally available). For this reason, it is important that the
execution of the \AIOC{} is distributed across the participants.

\begin{figure}
\begin{center}
\begin{lstlisting}[mathescape=true,numbers=left,xleftmargin=80pt]
$price\_ok@\buyer = \falseC;$
$continue@\buyer = \trueC;$
$\whileC (\ !price\_ok \ \andC \ continue \ )@\buyer \{$
  $prod@\buyer = \extFunc{getInput()};$
  $priceReq : \buyer( prod ) \rightarrow \seller( order );$
  $\scopeC\ @\seller \{$
    $order\_price@\seller = \extFunc{getPrice}( order );$
    $\mathit{offer} : \seller( order\_price ) \rightarrow \buyer(prod\_price)$   
  $\};$
  $price\_ok@\buyer = \extFunc{getInput()};$
  $\ifC (\ !price\_ok\ )@\buyer \{$
    $\ continue@\buyer = \extFunc{getInput()}$
  $\} $
$\};$
$\ifC (\ price\_ok \ )@\buyer \{$
  $payReq : \seller(\ \extFunc{payDesc}(order\_price) \ ) \rightarrow \bank( desc );$
  $\scopeC\ @\bank \{$
    $pay : \buyer(\ \extFunc{payAuth}(prod\_price)\ ) \rightarrow \bank( auth )$
  $\};$
  $payment\_ok@\bank = \extFunc{makePayment}( desc, auth );$
  $\ifC (\ payment\_ok \ )@\bank \{$
    $confirm : \bank( null ) \rightarrow \seller( \_ )$
    $|$
    $confirm : \bank( null ) \rightarrow \buyer( \_ )$
  $\} \ \elseC \ \{$
    $abort : \bank( null ) \rightarrow \buyer( \_ )$
  $\}$
$\}$
\end{lstlisting}
\end{center}
\caption{\AIOC{} process for Purchase Scenario.\label{fig:purchase_scenario}}
\end{figure}

At Lines 1--2 the $\buyer$ initialises its local variables $price\_ok$ and
$continue$. These variables control the while loop used by the $\buyer$ to ask
the price of some product to the $\seller$. 
The loop execution is controlled by the $\buyer$, but it impacts also the
behaviour of the $\seller$. We will see in \S~\ref{sec:endpoint_projection}
that this is done using automatically generated auxiliary communications. A
similar approach is used to manage conditionals.
At Line 4, the $\buyer$ takes the name of the $prod$uct from the user with
function $\extFunc{getInput}$, which models interaction with the user, and
proceeds to send it to the $\seller$ on operation $priceReq$ (Line 5).
The $\seller$ computes the price of the product calling the function
$\extFunc{getPrice}$ (Line 7) and, via operation $\mathit{offer}$, it sends
the price to the $\buyer$ (Line 8), that stores it in a local variable
$prod\_price$. These last two operations are performed within a scope and
therefore they can be updated at runtime to implement some new business
policies (e.g., discounts).
At Lines 10--12 the $\buyer$ checks whether the user is willing to buy the
product, and, if (s)he is not interested, whether (s)he wants to ask prices
for other products. If the $\buyer$ accepts the offer of the $\seller$, the
$\seller$ sends to the $\bank$ the payment details (Line 16), computed using
function $\extFunc{payDesc}$. Next, the $\buyer$ authorises the payment via
operation $pay$, computing the payment authorisation form using function
$\extFunc{payAuth}$. Since the payment may be critical for security reasons,
the related communication is enclosed in a scope (Lines 17--19), thus allowing
the introduction of more refined procedures later on. After the scope
successfully terminates, at Line 20 the $\bank$ locally executes the actual
payment by calling function $\extFunc{makePayment}$. The function
$\extFunc{makePayment}$ abstracts an invocation to the $\bank$ IT system. We
show in
\S~\ref{sec:aiocj} that, using functions, one can integrate existing
service-oriented software into a choreographic program.\enlargethispage{\baselineskip}

Finally, the $\bank$ acknowledges the payment to the $\seller$ and the
$\buyer$ in parallel (Lines 22--24). If the payment is not successful, the
failure is notified explicitly only to the $\buyer$.
Note that at Lines 1--2 the annotation $@\buyer$ means that the
variables $price\_ok$ and $continue$ belong to the $\buyer$.
Similarly, at Line 3, the annotation $@\buyer$ means that the guard of
the while loop is evaluated by the $\buyer$. The term $@\seller$ at
Line 6 is part of the scope construct and indicates the $\seller$ as
coordinator of the update.

\subsection{Annotated \AIOC{}s and their Semantics}

\label{sub:annotated_aioc_semantics}
In the remainder of the paper, we define our results on an annotated
version of the \AIOC{} syntax.Annotations are numerical indexes $\idxSign{i} \in \mathbb{N}$ assigned to
\AIOC{} constructs. 
We only require indexes to be distinct. Any annotation that satisfies this
requirement provides the same result. Indeed, programmers do not need to
annotate \AIOC{}s: the annotation with indexes is mechanisable and can be
performed by the language compiler\footnote{In fact, the \AIOCJ compiler
implements such a feature.}. Indexes are used both in the proof of our results
and in the projection to avoid interferences between different constructs.
From now on we consider only well-annotated \AIOC{}s, defined as follows.
\begin{definition}[Well-annotated \AIOC{}]\label{def:annAIOC}\emph{
\emph{Annotated \AIOC{} processes} are obtained by indexing every interaction,
assignment, conditional, while loop, and scope in a \AIOC{} process with a
positive natural number $\idxSign i \in \mathbb{N}$, resulting in the following
grammar:
$$
\begin{array}{rl}
\mathcal I \gram & \idx{i} \comm{o}{\role R}{e}{\role S}{x}
\\
\Div & \mathcal I \seqOp{\mathcal I}'
\\
\Div & \mathcal I \parOpI{\mathcal I}'
\\
\Div & \idx{i} \assign{x}{\role R}{e}
\\
\Div & \one                                         
\end{array}
\qquad
\begin{array}{rl}
\Div & \zero
\\
\Div & \idx{i} \ifthen{b \at \role{R}}{\mathcal{I}}{\mathcal{I}'}
\\
\Div & \idx{i} \while{b \at \role{R}}{\mathcal{I}}
\\
\Div & \idx{i} \scope{l}{\role{R}}{\mathcal{I}}{}
\\
\vphantom{C}
\end{array}
$$
A \AIOC{} process is \emph{well annotated} if all its indexes are distinct.
}\end{definition}

\vspace{5pt}
\AIOC{} processes do not execute in isolation: they are equipped with a
\emph{global state} $\Sigma$ and a set of available updates $\rules$, i.e.,
a set of \AIOC{}s that may replace scopes. Set $\rules$ may change at runtime.
A global state $\Sigma$ is a map that defines the value $v$ of each variable
$x$ in a given role $\role R$, namely $\Sigma: \mathit{Roles} \times
\mathit{Var} \rightarrow \mathit{Val}$.
The local state of role $\role R$ is denoted as $\Sigma_{\role R}:\mathit{Var}
\rightarrow \mathit{Val}$ and it verifies that $\forall x \in \mathit{Var}: \;
\Sigma(\role R,x) = \Sigma_{\role R}(x)$. Expressions are always evaluated by
a given role $\role R$: we denote the evaluation of expression $e$ in local
state $\Sigma_{\role R}$ as $\grass{e}_{\Sigma_{\role R}}$. We assume that
$\grass{e}_{\Sigma_{\role R}}$ is always defined (e.g., an error value is
given as a result if evaluation is not possible) and that for each boolean
expression $b$, $\grass{b}_{\Sigma_{\role R}}$ is either $\trueC$ or
$\falseC$.

\begin{remark}\label{remark:assumption_theory}\emph{
  The above assumption on expressions is needed for our results. To satisfy
  it, when programming, one should prevent runtime errors and notify
  abnormal events to partners using normal constructs to guarantee error
  management or smooth termination (see e.g., Lines 21--26 in
  Figure~\ref{fig:purchase_scenario}). A more elegant way to deal with errors
  would be to include in the language well-known constructs, such as
  try-catch, which are however left as future work. This could be done by
  adapting the ideas presented in~\cite{giachinoescape}.
}\end{remark}

\begin{definition}[\AIOC{} systems]\emph{
A \AIOC{} system is a triple $\tuple{\Sigma, \rules,\mathcal{I}}$ denoting a
\AIOC{} process $\mathcal{I}$ equipped with a global state $\Sigma$ and a set
of updates $\rules$.
}\end{definition}

\begin{definition}[\AIOC{} systems semantics]\emph{
The semantics of \AIOC{} systems is defined as the smallest labelled
transition system (LTS) closed under the rules in Figure~\ref{fig:ioclts},
where symmetric rules for parallel composition have been omitted.
}\end{definition}
\begin{figure}
$$
\begin{array}{c}
\infer[\did{\AIOC}{Interaction}]
{
	\tuple{\ambient, \idx{i}\comm{o}{\role R}{e}{\role S}{x}} 
		\arro{\commLabel{o}{\role R}{v}{\role S}{x}} 
	\tuple{\ambient,\idx{i} \assign{x}{\role S}{v}}}
{\grass{e}_{\Sigma_{\role R}} = v}
\\[8pt]
\infer[\did{\AIOC}{Assign}]
{
	\tuple{\state,\rules,{\idx{i} \assign{x}{\role R}{e}}}
	\arro{\tau}
	\tuple{\state\substLabel{v}{x}{\role R},\rules, \one}
}
{
\grass{e}_{\Sigma_{\role R}} = v
}
\\[5pt]
\infer[\did{\AIOC}{Sequence}]
{
	\tuple{\ambient,{\mathcal I}\seqOp{\mathcal J}} 
		\arro{\mu} 
	\tuple{\ambient',{\mathcal I}'\seqOp{\mathcal J}}
}
{
	\tuple{\ambient,{\mathcal I}} 
		\arro{\mu} 
	\tuple{\ambient',{\mathcal I}'} &
	\mu \neq \tick
}
\\[8pt]
\infer[\did{\AIOC}{Seq-end}]
{
	\tuple{\ambient,{\mathcal I}\seqOp{\mathcal J}} 
		\arro{\mu} 
	\tuple{\ambient,{\mathcal J}'}
}
{
	\tuple{\ambient,{\mathcal I}} 
		\arro{\tick} 
	\tuple{\ambient,{\mathcal I}'}
	&
	\tuple{\ambient,{\mathcal J}} 
		\arro{\mu} 
	\tuple{\ambient,{\mathcal J}'}
}
\\[5pt]
\infer[\did{\AIOC}{Parallel}]
{
	\tuple{\ambient,{\mathcal I}\parallel{\mathcal J}} 
		\arro{\mu}  
	\tuple{\ambient',{\mathcal I}'\parallel{\mathcal J}}
}
{
	\tuple{\ambient,{\mathcal I}} \arro{\mu} \tuple{\ambient',{\mathcal I}'}
	&
	\mu \neq \tick
}
\\[8pt]
\infer[\did{\AIOC}{Par-end}]
{
	\tuple{\ambient,{\mathcal I}\parallel{\mathcal J}} 
		\arro{\tick} 
	\tuple{\ambient,{\mathcal I}'\parallel{\mathcal J}'}
}
{
	\tuple{\ambient,{\mathcal I}} \arro{\tick} 
	\tuple{\ambient,{\mathcal I}'} 
	&
	\tuple{\ambient,{\mathcal J}} \arro{\tick} 
	\tuple{\ambient,{\mathcal J}'}
}
\\[5pt]
\infer[\did{\AIOC}{If-then}]
{
	\tuple{\ambient,\idx{i} \ifthen{b \at \role R}{\mathcal{I}}{\mathcal{I}'}} 
		\arro{\tau} 
	\tuple{\ambient,{\mathcal I}}
}
{
	\grass{b}_{\Sigma_{\role R}} = \trueC
}
\\[8pt]
\infer[\did{\AIOC}{If-else}]
{
	\tuple{\ambient, \idx{i} \ifthen{b \at \role R}{\mathcal{I}}{\mathcal{I}'}}
	\arro{\tau} 
	\tuple{\ambient,{\mathcal I}'}
}
{
	\grass{b}_{\Sigma_{\role R}} = \falseC
}
\\[8pt]
\infer[\did{\AIOC}{While-unfold}]
{
	\tuple{\ambient, \idx{i} \while{b \at \role R}{\mathcal{I}}} 
		\arro{\tau} 
	\tuple{\ambient,{\mathcal I} \seqOp 
		\idx{i}\while{b \at \role R}{\mathcal{I}}}
}
{
	\grass{b}_{\Sigma_{\role R}} = \trueC
}
\\[8pt]
\infer[\did{\AIOC}{While-exit}]
{
	\tuple{\ambient,\idx{i} \while{b \at \role R}{\mathcal{I}}} 
		\arro{\tau} 
	\tuple{\ambient,\one}
}
{
	\grass{b}_{\Sigma_{\role R}} = \falseC
}
\\[8pt]
\infer[\did{\AIOC}{\AdaptRule}]
{
	\tuple{\ambient, \idx{i} \scope{l}{\role R}{\mathcal I}{\Delta}} 
		\arro{{\mathcal I}'} 	
	\tuple{\ambient, \mathcal{I}'}
}
{ 
	\rolesFunc({\mathcal{I}'}) \subseteq \rolesFunc({\mathcal{I})} 
	& 
	\mathcal{I}' \in \rules
	&
	\func{connected}(\mathcal{I}')
	&
	\func{freshIndexes}(\mathcal{I}')
}
\\[8pt]
\infer[\did{\AIOC}{\NoAdaptRule}]
{
	\tuple{\ambient,\idx{i} \scope{l}{\role R}{\mathcal I}{\Delta}}
		\arro{\texttt{\NoAdaptLabel}}
	\tuple{\ambient,{\mathcal I}}
}{}
\\[8pt]
\infer[\did{\AIOC}{End}]
{\tuple{\ambient,\one} \arro{\tick} \tuple{\ambient,\zero}}{}
\quad
\infer[\did{\AIOC}{Change-Updates}]
{
	\tuple{\state,\rules,{\mathcal I}} 
		\arro{\rules'} 
	\tuple{\state,\rules',{\mathcal I}}
}{}
\end{array}
$$
\caption{Annotated \AIOC{} system semantics.}\label{fig:ioclts} 
\end{figure}\enlargethispage{\baselineskip}
The rules in Figure~\ref{fig:ioclts} describe the behaviour of a \AIOC{}
system by induction on the structure of its \AIOC{} process, with a case
analysis on its topmost operator.  We use $\mu$ to range over labels. The possible values for $\mu$
are described below.
\[\begin{array}{rllcll}
\mu \gram & \;
				 \commLabel{o}{\role R}{v}{\role S}{x}  & \textit{(interaction)}   	&
	\; \Div \; & \tau 											      & \textit{(silent)}		
\\
\Div  &  \;
					\mathcal{I} 													& \textit{(update)} &
	\; \Div \; & 
					\texttt{\NoAdaptLabel}								& \textit{(no update)}
\\
\Div	& \;
	\rules 																 				& \textit{(change updates)}     &
	\; \Div \; & \tick 										        & \textit{(termination)} 
\end{array}
\]
Rule $\did{\AIOC}{Interaction}$ executes a communication from $\role R$ to
$\role S$ on operation $o$, where $\role R$ sends to $\role S$ the value $v$
of an expression $e$. The communication reduces to an assignment that inherits
the index $\idxSign i$ of the interaction. The assignment stores value $v$ in
variable $x$ of role $\role S$.
Rule $\did{\AIOC}{Assign}$ evaluates the expression $e$ in the local state
$\Sigma_{\role R}$ and stores the resulting value $v$ in the local variable
$x$ in role $\role R$ ($\substLabel{v}{x}{\role R}$ represents the
substitution).
Rule $\did{\AIOC}{Sequence}$ executes a step in the first process of a
sequential composition, while Rule $\did{\AIOC}{Seq-end}$ acknowledges the
termination of the first process, starting the second one.
Rule $\did{\AIOC}{Parallel}$ allows a process in a parallel composition to
compute, while Rule $\did{\AIOC}{Par-end}$ synchronises the termination of
two parallel processes.
Rules $\did{\AIOC}{If-then}$ and $\did{\AIOC}{If-else}$ evaluate the boolean
guard of a conditional, selecting the ``then'' and the ``else'' branch,
respectively.
Rules $\did{\AIOC}{While-unfold}$ and $\did{\AIOC}{While-exit}$ correspond
respectively to the unfolding of a while loop when its condition is satisfied
and to its termination otherwise.
Rule $\did{\AIOC}{\AdaptRule}$ and Rule $\did{\AIOC}{\NoAdaptRule}$ deal with
updates: the former applies an update, while the latter allows the body of the
scope to be executed without updating it.
More precisely, Rule $\did{\AIOC}{\AdaptRule}$ models the application of the
update $\arule{l}{\cnd}{\mathcal{I}'}$ to the scope $\scope{l}{\role
R}{\mathcal I}{\Delta}$ which, as a result, is replaced by the \AIOC{} process
${\mathcal{I}'}$.
In the conditions of the rule, we use the function $\rolesFunc$ and the
predicates $\func{connected}$ and $\func{freshIndexes}$. Function
$\rolesFunc({\mathcal I})$, defined in Figure~\ref{fig:roles}, computes the
roles of a \AIOC{} process ${\mathcal I}$. The condition of the rule requires
that the roles of the update are a subset of the roles of the body of the
scope.
\begin{figure}[t]
$$
\begin{array}l
\rolesFunc(\idx{i}\comm{o}{\role R}{e}{\role S}{x}) =  \{\role R,\role S\} 
\\
\rolesFunc(\one) = \rolesFunc(\zero) = \emptyset
\\
\rolesFunc(\idx{i} x \at \role R = e) = \{\role R\}
\\
\rolesFunc({\mathcal I}\seqOp{\mathcal I}') = 
  \rolesFunc({\mathcal I} | {\mathcal I}') = \rolesFunc({\mathcal I}) \cup 
    \rolesFunc({\mathcal I}')
\\
\rolesFunc(\idx{i} \ifthen{b \at \role R}{\mathcal{I}}{\mathcal{I}'}) = 
  \{\role R\} \cup \rolesFunc({\mathcal{I}}) \cup \rolesFunc(\mathcal{I}')
\\
\rolesFunc(\idx{i} \while{b \at \role R}{\mathcal{I}}) = 
\rolesFunc(\idx{i} \scope{l}{\role R}{\mathcal I}{\Delta}) = 
  \{\role R\} \cup \rolesFunc(\mathcal{I})
\end{array}
$$
\caption{Auxiliary function $\rolesFunc$.\label{fig:roles}}
\end{figure}
Predicate $\func{connected}(\mathcal{I}')$ holds if $\mathcal{I}'$ is
connected. Connectedness is a well-formedness property of \AIOC{}s and is
detailed in \S~\ref{sub:connectedness}. Roughly, it ensures that roles
involved in a sequential composition have enough information to enforce the
correct sequence of actions.
Predicate $\func{freshIndexes}(\mathcal{I}')$ holds if all indexes in
$\mathcal{I}'$ are fresh with respect to all indexes already present in the
target \AIOC{}\footnote{We do not give a formal definition of
$\func{freshIndexes}(\mathcal{I}')$ to keep the presentation simple. However,
freshness of indexes can be formally ensured using restriction as in
$\pi$-calculus~\cite{sangiorgi2003pi}.}. This is needed to avoid interferences
between communications inside the update and communications in the original
\AIOC{}. This problem is discussed in more details in
\S~\ref{sec:endpoint_projection}, Example~\ref{example:indexes}.
Rule $\did{\AIOC}{\NoAdaptRule}$, used when no update is applied, removes the
scope boundaries and starts the execution of the body of the scope.
Rule $\did{\AIOC}{End}$ terminates the execution of an empty process.  
Rule $\did{\AIOC}{Change-Updates}$ allows the set $\rules$ of available updates
to change. This rule is always enabled and models the fact that the set of
available updates is not controlled by the system, but by the external world:
the set of updates can change at any time, the system cannot forbid or delay
these changes, and the system is not notified when they happen.
Label $\rules$ is needed to make the changes to the set of
available updates observable (cf.~Definition~\ref{def:bisim}).

\begin{remark}\emph{
Whether to update a scope or not, and which update to apply if many are
available, is completely non-deterministic. We have adopted this view to
maximise generality. However, for practical applications it is also possible to
reduce the non-determinism using suitable policies to decide when and whether a
given update applies. One of such policies is defined in AIOCJ (see
\S~\ref{sec:aiocj}).
}\end{remark}

We can finally provide the definition of \emph{\AIOC{} traces} and 
\emph{weak \AIOC{} traces}, which we use to express our results of behavioural
correspondence between \AIOC{}s and \APOC{}s. Intuitively, in \AIOC{} traces
all the performed actions are observed, whilst in weak \AIOC{} traces silent
actions $\tau$ are not visible.
\begin{definition}[\AIOC{} traces]\emph{
A \emph{(strong) trace} of a \AIOC{} system $\tuple{ \Sigma_1,
\rules_1,{\mathcal I}_1}$ is a sequence (finite or infinite) of labels $\mu_1,
\mu_2, \dots$ such that there is a sequence of \AIOC{} system transitions
$\tuple{\Sigma_1,\rules_1,{\mathcal I}_1} \arro{\mu_1}
\tuple{\Sigma_2,\rules_2,{\mathcal I}_2} \arro{\mu_2} \dots$.}

\emph{A \emph{weak trace} of a \AIOC{} system $\tuple{ \Sigma_1,\rules_1,{\mathcal
I}_1}$ is a sequence of labels $\mu_1, \mu_2, \dots$ obtained by removing all
silent labels $\tau$ from a trace of $\tuple{
\Sigma_1,\rules_1,{\mathcal I}_1}$.
}\end{definition}

\section{Dynamic Process-Oriented Choreographies}\label{sec:apoc}

In this section we define the syntax and semantics of \APOC{}s, the target
language of our projection from \AIOC{}s.
We remind that \AIOC{}s are not directly executable since their basic
primitives describe distributed interactions. On the contrary, mainstream
languages like Java and C, used for implementation, describe distributed
computations using local behaviours and communication/synchronisation
primitives, such as message send and message receive. In order to describe
implementations corresponding to \AIOC{}s we introduce the \APOC{} language, a
core language based on this kind of primitives, but tailored to program
updatable systems. Indeed, differently from \AIOC{} constructs, \APOC{}
constructs are locally implementable in any mainstream language. In AIOCJ (see
\S~\ref{sec:aiocj}) we implement the \APOC{} constructs in the
Jolie~\cite{MGZ14} language.

\subsection{DPOC syntax}\label{sub:dpoc_syntax}

\APOC{}s include \emph{processes}, ranged over by $P,P',\ldots$,
describing the behaviour of participants.  $(P,\Gamma)_{\role R}$ denotes a
\emph{\APOC{} role} named $\role R$, executing process $P$ in a local state
$\Gamma$. \emph{Networks}, ranged over by $\net$, $\net'$, $\ldots$, are
parallel compositions of \APOC{} roles with different names. \APOC{} systems,
ranged over by ${\mathcal S}$, are \APOC{} networks equipped with a set of
updates $\rules$, namely pairs $\tuple{\rules, \net}$.

\APOC{}s, like \AIOC{}s, feature operations $o$. Here we call them
\emph{programmer-specified operations} to mark the difference with respect to
\emph{auxiliary operations}, ranged over by $o^*$. We use $o^?$ to range over
both programmer-specified and auxiliary operations.
Differently from communications on programmer-specified operations,
communications on auxiliary operations have no direct correspondent at the
\AIOC{} level. Indeed, we introduce auxiliary operations in \APOC{}s to
implement the synchronisation mechanisms needed to realise the global
constructs of \AIOC{}s (conditionals, while loops, and scopes) at \APOC {}
level.

Like \AIOC{} constructs, also \APOC{} constructs are annotated using indexes.
However, in \APOC{}s we use two kinds of indexes: \emph {normal indexes}
$\idxSign i \in \mathbb{N}$ and \emph{auxiliary indexes} of the forms
$\idxTrueSign{i}$, $\idxFalseSign{i}$, $\idxIfRecvSign{i}$, and
$\idxCloseSign{i}$ where $\idxSign{i} \in \mathbb{N}$.
Auxiliary indexes are introduced by the projection, described in
\S~\ref{sec:endpoint_projection}, and are described in detail there.
We let $\iota$ range over \APOC{} indexes.

In \APOC{}s, normal indexes are also used to prefix the operations in send and
receive primitives\footnote{In principle, one may use just indexes and drop
operations. However, explicit operations are the standard for
communication-based systems, in particular in the area of Web Services, as
they come handy to specify and to debug such systems.}. Thus, a send and a
receive can interact only if they are on the same operation and they are
prefixed by the same normal index. This is needed to avoid interferences
between concurrent communications, in particular when one of them comes from
an update. We describe in greater detail this issue in
\S~\ref{sec:endpoint_projection}, Example~\ref{example:indexes}.

The syntax of \APOC{}s is the following.

$$\begin{array}{rll}
P \gram 
		 & \idx{\iota} \cinp{\idxSign{i}.o^?}{x}{\role R}	& \textit{(receive)} 			\\
\Div & \idx{\iota} \cout{\idxSign{i}.o^?}{e}{\role R}	& \textit{(send)} 				\\
\Div & \idx{i} \cout{\idxSign{i}.o^\ast}{X}{\role R} 	& \textit{(send-update)} 	\\
\Div &	P\seqOp P' 																		& \textit{(sequence)}			\\
\Div &	P \mid P'																			& \textit{(parallel)}			\\
\Div &	\idx{\iota} x = e 														& \textit{(assignment)}		\\
\end{array}
\begin{array}{rll}
\Div &	\one 																					& \textit{(inaction)}			\\
\Div &	\zero 																				& \textit{(end)}					\\
\Div &	\ifthenKey{b}{P}{P'}{i} 											& \textit{(conditional)}		\\
\Div &	\whileKey{b}{P}{i}														& \textit{(while)}				\\
\Div &	\pscope{i}{l}{\role R}{P}{\Delta}{S}  				& \textit{(scope-coord)} 	\\
\Div &	\psscope{i}{l}{\role R}{P}										& \textit{(scope)}				
\end{array}
$$
\vspace{5pt}
$$
X \gram \qquad {\noC} 											\quad \mid \quad  
				P 
\hspace{50pt}
\net \gram \qquad \roleExec{P,\Gamma}{\role R} \quad \mid \quad 
					 \net \parallel \net'
$$
\vspace{5pt}

\APOC{} processes include receive action
$\idx{\iota}\cinp{\idxSign{i}.o^?}{x}{\role R}$ on a specific operation
$\idxSign{i}.o^?$ (either programmer-specified or auxiliary) of a message from
role $\role R$ to be stored in variable $x$, send action $\idx{\iota}
\cout{\idxSign{i}.o^?}{e}{\role R}$ of the value of an expression $e$ to be sent
to role $\role R$, and higher-order send action $\idx{i}
\cout{\idxSign{i}.o^*}{X}{\role R}$ of the higher-order argument $X$ to be sent
to role $\role R$. Here $X$ may be either a \APOC{} process $P$, which is the
new code for a scope in $\role R$, or a token $\tt no$, notifying that no
update is needed. $P \seqOp P'$ and $P \parOpP P'$ denote the sequential and
parallel composition of $P$ and $P'$, respectively. Processes also feature
assignment $\idx{\iota} x = e$ of the value of expression $e$ to variable $x$,
the process $\one$, that can only successfully terminate, and the terminated
process $\zero$. \APOC{} processes also include conditionals
$\ifthenKey{b}{P}{P'}{i}$ and loops $\whileKey{b}{P}{i}$. Finally, there are
two constructs for scopes. Construct $\pscope{i}{l}{\role R}{P}{\Delta}{S}$
defines a scope with body $P$ and set of participants $S$, and may occur only
inside role $\role R$, which acts as coordinator of the update.  The shorter
version $\psscope{i}{l}{\role R}{P}$ is used instead inside the code of some
role $\role R_1$, which is not the coordinator $\role R$ of the update. In
fact, only the coordinator $\role R$ needs to know the set $S$ of involved
roles to be able to send to them their updates.

\subsection{\APOC{} semantics}\label{sec:apoc_semantics}

\APOC{} semantics is defined in two steps: we define the semantics of
\APOC{} roles first, and then we define how roles interact giving rise to the
semantics of \APOC{} systems.

\begin{definition}[\APOC{} roles semantics]\emph{ The semantics of \APOC{} roles
is defined as the smallest LTS closed under the rules in
Figure~\ref{fig:apoc-proc_1}, where we report the rules dealing with
computation, and Figure~\ref{fig:apoc-proc_2}, in which we define the
rules related to updates. Symmetric rules for parallel composition
have been omitted.
}\end{definition}

\begin{figure}
$$
\begin{array}{c}
\infer[\did{\APOC}{One}]
{\roleExec{\one,\Gamma}{R} \arro{\tick}
\roleExec{\zero,\Gamma}{R}
}{}
\qquad
\infer[\did{\APOC}{Assign}]
{\roleExec{\idx{\iota} x = e,\Gamma}{R} \arro{\tau} 
	\roleExec{\one,\Gamma\substAPOC{v}{x}}{R}
}
{\grass{e}_\Gamma = v}
\\[5pt]
\infer[\did{\APOC}{Send}]{
	\roleExec{\idx{\iota} \cout{\idxSign{i}.o^?}{e}{\role S},\Gamma}{R} 
		\arro{\coutLabel{\idxSign{i}.o^?}{v}{\role S}:\role R} 
	\roleExec{\one,\Gamma}{R}}
	{\grass{e}_\Gamma = v}
\\[5pt]
\infer[\did{\APOC}{Recv}]{
	\roleExec{\idx{\iota} \cinp{\idxSign{i}.o^?}{x}{\role S},\Gamma}{R} 
		\arro{\cinpLabel{\idxSign{i}.o^?}{x}{v}{\role S}:\role R}
	\roleExec{\idx{\iota} x = v,\Gamma}{R}}{}\\[5pt]
\infer[\did{\APOC}{Send-\AdaptRule}]
{\roleExec{\idx{i} \cout{\idxSign{i}.o^?}{X}{\role S},\Gamma}{R}
	\arro{\coutLabel{\idxSign{i}.o^?}{X}{\role S}:\role R}
	\roleExec{\one,\Gamma}{R}}{}
\\[5pt]
\infer[\did{\APOC}{Sequence}]
	{\roleExec{P \seqOp Q,\Gamma}{R} \arro{\delta} \roleExec{P' \seqOp
	Q,\Gamma'}{R}} {\roleExec{P,\Gamma}{R} \arro{\delta} \roleExec{P',\Gamma'}{R}
	&
		\delta \neq \tick}
\\[5pt]
\infer[\did{\APOC}{Seq-end}]
{\roleExec{P \seqOp Q,\Gamma}{R} \arro{\delta} \roleExec{Q',\Gamma'}{R}}
{\roleExec{P,\Gamma}{R} \arro{\tick} \roleExec{P',\Gamma}{R} \quad
\roleExec{Q,\Gamma}{R}\arro{\delta} \roleExec{Q',\Gamma'}{R}}
\\[5pt]
\infer[\did{\APOC}{Parallel}]
{\roleExec{P \mid Q,\Gamma}{R} \arro{\delta} \roleExec{P' \mid Q,\Gamma'}{R}}
{\roleExec{P,\Gamma}{R} \arro{\delta} \roleExec{P',\Gamma'}{R} & 
	\delta \neq \tick
}
\\[5pt]
\infer[\did{\APOC}{Par-end}]
{\roleExec{P \mid Q,\Gamma}{R} \arro{\tick}\roleExec{P' \mid Q',\Gamma}{R}}
{\roleExec{P,\Gamma}{R} \arro{\tick} \roleExec{P',\Gamma}{R} &
\roleExec{Q,\Gamma}{R}
\arro{\tick} \roleExec{Q',\Gamma}{R}}
\\[5pt]
\infer[\did{\APOC}{If-Then}]
{\roleExec{\ifthenKey{b}{P}{P'}{i},\Gamma}{R}\arro{\tau} 
	\roleExec{P,\Gamma}{R}
}
{\grass{b}_\Gamma = \trueC}
\\[5pt]
\infer[\did{\APOC}{If-Else}]
{\roleExec{\ifthenKey{b}{P}{P'}{i},\Gamma}{R}\arro{\tau}
\roleExec{P',\Gamma}{R}} {\grass{b}_\Gamma = \falseC}
\\[5pt]
\infer[\did{\APOC}{While-unfold}]
{\roleExec{\whileKey{b}{P}{i},\Gamma}{R} \arro{\tau}
\roleExec{P \seqOp \whileKey{e}{P}{i},\Gamma}{R}
} 
{\grass{b}_\Gamma = \trueC}
\\[5pt]
\infer[\did{\APOC}{While-exit}]
{\roleExec{\whileKey{b}{P}{i},\Gamma}{R} \arro{\tau}
	\roleExec{\one,\Gamma}{R}
}
{\grass{b}_\Gamma = \falseC}
\end{array}
$$
\caption{\APOC{} role semantics. Computation rules. (Update rules in
Figure~\ref{fig:apoc-proc_2})\label{fig:apoc-proc_1}}
\end{figure}

\begin{figure}
{\small
$$
\begin{array}c
\infer[\did{\APOC}{Lead-\AdaptRule}]
{
\begin{array}{l}
\roleExec{\pscope{i}{l}{\role R}{P}{\Delta}{S},\Gamma}{R} \arro{{\mathcal I}} 
\\ \quad \quad
\roleExec{\prod\limits_{\role R_j \in S \setminus \{\role R\}}
\idx{i} \cout{\idxSign{i}.\AuxSb^*_{\idxSign{i}}}{\pi(\mathcal{I},\role R_j)}{\role
R_j}\seqOp \pi(\mathcal{I},\role R)\seqOp
\prod\limits_{\role R_j \in S \setminus \{\role R\}} \idx{i}
\cinp{\idxSign{i}.\AuxSe^*_{\idxSign{i}}}{\_}{\role R_j},\Gamma}{R}\\
\end{array}
}
{
\rolesFunc(\mathcal{I}) \subseteq S
&
\func{freshIndexes}(\mathcal{I})
&
\func{connected}(\mathcal{I})
}
\\[35pt]
\infer[\did{\APOC}{Lead-\NoAdaptRule}]
{
\begin{array}{l}
\roleExec{\pscope{i}{l}{\role R}{P}{\Delta}{S},\Gamma}{R}
\arro{\texttt{\NoAdaptLabel}} 
\\ \quad
\roleExec{
\prod\limits_{\role R_j \in S \setminus \{\role R\}} 
\idx{i} \cout{\idxSign{i}.\AuxSb^*_{\idxSign{i}}}{\noC}{\role R_j}\seqOp
P \seqOp \prod\limits_{\role R_j \in S \setminus \{\role R\}}
\idx{i} \cinp{\idxSign{i}.\AuxSe^*_{\idxSign{i}}}{\_}{\role R_j},\Gamma}{R}
\end{array}
}{}
\\[35pt]
\infer[\did{\APOC}{\AdaptRule}]
{
	\roleExec{\psscope{i}{l}{\role S}{P},\Gamma}{R}
		\arro{\cinpLabel{\idxSign{i}.\AuxSb^*_{\idxSign{i}}}{}{P'}{\role
		S}:\role{R}}
	\roleExec{P' \seqOp \idx{i}
\cout{\idxSign{i}.\AuxSe^*_{\idxSign{i}}}{\okC}{\role S},\Gamma}{R} }{}
\\[15pt]
\infer[\did{\APOC}{\NoAdaptRule}] {\roleExec{\psscope{i}{l}{\role
S}{P},\Gamma}{R}
\arro{\cinpLabel{\idxSign{i}.\AuxSb^*_{\idxSign{i}}}{}{\noC}{\role
S}:\role{R}} \roleExec{P \seqOp \idx{i}
\cout{\idxSign{i}.\AuxSe^*_{\idxSign{i}}}{\okC}{\role S},\Gamma}{R} }{}
\end{array}
$$}
\caption{\APOC{} role semantics. Update rules. (Computation rules in
Figure~\ref{fig:apoc-proc_1})}\label{fig:apoc-proc_2}
\end{figure}

\paragraph{\emph{\APOC{} role semantics.}}
We use $\delta$ to range over labels. The possible values for $\delta$ are as follows: 

\begin{align*}
\delta \gram \; &
	\coutLabel{\idxSign{i}.o^?}{v}{\role S}:\role R 		& \textit{(send)} &
\; \Div \;
	\cinpLabel{\idxSign{i}.o^?}{x}{v}{\role S}:\role R 	& \textit{(receive)} &
\\
\Div \;	& 
	\coutLabel{\idxSign{i}.o^*}{X}{\role S}:\role R			& \textit{(send-update)} &
	\; \Div \;
	\cinpLabel{\idxSign{i}.o^*}{}{X}{\role S}:\role R 	& \textit{(receive-update)}	&
\\
\Div \; &
	\mathcal{I}																					& \textit{(update)} &
	\; \Div \;
	\texttt{\NoAdaptLabel}															& \textit{(no-update)} 	&
\\
\Div \; &
\tau 																									& \textit{(silent)} &
	\; \Div \;
\tick 																								& \textit{(termination)} &	
\end{align*}

The semantics is in the early style.
Rule $\did{\APOC}{Recv}$ receives a value $v$ from role $\role S$ and assigns it
to local variable $x$ of $\role R$. Similarly to Rule $\did{\AIOC}{Interact}$
(see Figure~\ref{fig:ioclts}), the reception reduces to an assignment that
inherits the index $\idxSign i$ from the receive primitive.

Rules $\did{\APOC}{Send}$ and $\did{\APOC}{Send-\AdaptRule}$ execute send and
higher-order send actions, respectively. Send actions evaluate expression
$e$ in the local state $\Gamma$.
Rule $\did{\APOC}{One}$ terminates an empty process.
Rule $\did{\APOC}{Assign}$ executes an assignment ($\substAPOC{v}{x}$
represents the substitution of value $v$ for variable $x$).
Rule $\did{\APOC}{Sequence}$ executes a step in the first process of a
sequential composition, while Rule $\did{\APOC}{Seq-end}$ acknowledges the
termination of the first process, starting the second one.
Rule $\did{\APOC}{Parallel}$ allows a process in a parallel composition to
compute, while Rule $\did{\APOC}{Par-end}$ synchronises the termination of two
parallel processes.
Rules $\did{\APOC}{If-then}$ and $\did{\APOC}{If-else}$ select the ``then'' and
the ``else'' branch in a conditional, respectively.
Rules $\did{\APOC}{While-unfold}$ and $\did{\APOC}{While-exit}$ model
respectively the unfolding and the termination of a while loop.

The rules reported in Figure~\ref{fig:apoc-proc_2} deal with code updates.
Rules $\did{\APOC}{Lead-\AdaptRule}$ and $\did{\APOC}{Lead-\NoAdaptRule}$
specify the behaviour of the coordinator $\role R$ of the update, respectively
when an update is performed and when no update is performed.
In particular, $\role R$ non-deterministically selects whether to update or
not and, in the first case, which update to apply.
The coordinator communicates the selection to the other roles in the scope
using operations $\AuxSb^*_\idxSign{i}$. The content of the message is either the new
code that the other roles need to execute, if the update is performed, or a
token $\noC$, if no update is applied.
Communications on operations $\AuxSb^*_\idxSign{i}$ also ensure that no role starts
executing the scope before the coordinator has selected whether to update or
not. The communication is received by the other roles using Rule
$\did{\APOC}{\AdaptRule}$ if an update is selected and Rule
$\did{\APOC}{\NoAdaptRule}$ if no update is selected.
Similarly, communications on $\AuxSe^*_\idxSign{i}$ ensure that the updated
code has been completely executed before any role can proceed to execute the
code that follows the scope in its original $\APOC{}$ process. Communications
on $\AuxSe^*_\idxSign{i}$ carry no relevant data: they are used for synchronisation
purposes only.

As already discussed, Rule $\did{\APOC}{Lead-\AdaptRule}$ models the fact that
the coordinator $\role R$ of the update non-deterministically selects an
update $\mathcal{I}$. The premises of Rule $\did{\APOC}{Lead-\AdaptRule}$ are
similar to those of Rule $\did{\AIOC}{\AdaptRule}$ (see
Figure~\ref{fig:ioclts}). Function $\rolesFunc$ is used to check that the
roles in $\mathcal{I}$ are included in the roles of the scope. Freshness of
indexes is checked by predicate $\func{freshIndexes}$, and well formedness of
$\mathcal{I}$ by predicate $\func{connected}$ (formally defined later on, in
Definition~\ref{def:connectedness} in
\S~\ref{sub:connectedness}).
In particular, the coordinator $\role R$ generates the processes to be
executed by the roles in $S$ using the process-projection function $\pi$
(detailed in \S~\ref{sec:endpoint_projection}). More precisely,
$\pi(\mathcal{I},\role R_i)$ generates the code for role $\role R_i$. The
processes $\pi(\mathcal{I},\role R_i)$ are sent via auxiliary higher-order
communications on $\AuxSb_\idxSign{i}^*$ to the roles that have to execute
them. These communications also notify the other roles that they can start
executing the new code.  Here, and in the remainder of the paper, we define
$\prod_{\role R_i \in S} P_i$ as the parallel composition of \APOC{} processes
$P_i$ for $R_i \in S$. We assume that $\prod$ binds more tightly than
sequential composition.
After the communication of the updated code to the other participants, $\role R$
starts its own updated code $\pi(\mathcal{I},\role R)$. Finally, auxiliary
communications $\AuxSe^*_{\idxSign{i}}$ are used to synchronise the end of the
execution of the update (here $\_$ denotes a fresh variable to store the
synchronisation message $\texttt{ok}$). 
Rule $\did{\APOC}{Lead-\NoAdaptRule}$ defines the behaviour of the coordinator
$\role R$ when no update is applied. In this case, $\role R$ sends a token
$\noC$ to the other involved roles, notifying them that no update is applied
and that they can start executing their original code. End of scope
synchronisation is the same as that of Rule $\did{\APOC}{Lead-\AdaptRule}$.
Rules $\did{\APOC}{\AdaptRule}$ and $\did{\APOC}{\NoAdaptRule}$ define the
behaviour of the other roles involved in the scope. The scope waits for a
message from the coordinator. If the content of the message is $\noC$,
the body of the scope is executed. Otherwise, the content of the message is a
process $P'$ which is executed instead of the body of the scope.

We highlight the importance of the coordinator $\role R$. Since the set of
updates may change at any moment, we need to be careful to avoid that the
participants in the scope get code projected from different updates.
Given that only role $\role R$ obtains and delivers the new code, one is
guaranteed that all the participants receive their projection of the same
update.

\paragraph{\emph{\APOC{} system semantics.}}
\begin{definition}[\APOC{} systems semantics]\emph{ The semantics of \APOC{} systems
is defined as the smallest LTS closed under the rules in
Figure~\ref{fig:apoc-sys}. Symmetric rules for parallel composition
have been omitted.
}\end{definition}

\begin{figure}
$$
\begin{array}{c}

\infer[\did{\APOC}{Lift}]
{\tuple{\rules, \net} \arro{\delta} \tuple{\rules, \net'}}
{\net \arro{\delta} \net' & \delta \neq {\mathcal I}}

\qquad

\infer[\did{\APOC}{Lift-\AdaptRule}]
{\tuple{\rules, \net} \arro{\mathcal{I}} \tuple{\rules,\net'}}
{\net \arro{\mathcal{I}} \net' 
		& \mathcal{I} \in \rules}

\\[10pt]

\infer[\did{\APOC}{Synch}]
{
	\tuple{\rules, \net \parallel \net''}
		\arro{\commLabel{o^?}{\role R}{v}{\role S}{x} } 
	\tuple{\rules, \net' \parallel \net'''}
}
{	
	\tuple{\rules, \net} 
		\arro{\coutLabel{\idxSign{i}.o^?}{v}{\role S}:\role R} 
	\tuple{\rules, \net'} &
	\tuple{\rules,\net''} 
		\arro{\cinpLabel{\idxSign{i}.o^?}{x}{v}{\role R}:\role S} 
	\tuple{\rules,\net'''}
}
\\[10pt]

\infer[\did{\APOC}{Synch-\AdaptRule}]
{
	\tuple{\rules,\net \parallel \net''}
		\arro{\commLabel{o^*}{\role R}{X}{\role S}{}}
	\tuple{\rules,\net' \parallel \net'''}
}
{
	\tuple{\rules,\net}
		\arro{\coutLabel{\idxSign{i}.o^*}{X}{\role S}:\role R}
	\tuple{\rules,\net'} 
	& 
	\tuple{\rules,\net''}
		\arro{\cinpLabel{\idxSign{i}.o^*}{}{X}{\role R}:\role S}
	\tuple{\rules,\net'''}
}
\\[10pt]

\infer[\did{\APOC}{Ext-Parallel}]
{
	\tuple{\rules,\net\parallel \net''}
		\arro{\eta} \tuple{\rules,\net' \parallel \net''}
}
{
	\tuple{\rules,\net} 
		\arro{\eta} 
	\tuple{\rules,\net'} 
	& 
	\eta \neq \tick
}

\\[10pt]

\infer[\did{\APOC}{Ext-Par-End}]
{
	\tuple{\rules,\net\parallel \net''} 
		\arro{\tick} 
	\tuple{\rules,\net' \parallel \net'''}
}
{
	\tuple{\rules,\net} \arro{\tick} \tuple{\rules,\net'} 
	& 
	\tuple{\rules,\net''} \arro{\tick} \tuple{\rules,\net'''}
}

\\[10pt]

\infer[\did{\APOC}{Change-Updates}]
{\tuple{\rules,\net} \arro{\rules'} \tuple{\rules',\net}}{}
\end{array}
$$
\caption{\APOC{} system semantics.}
\label{fig:apoc-sys}
\label{fig:apoc-roles}
\label{fig:apoc-net}
\end{figure}

We use $\eta$ to range over \APOC{} systems labels. The possible values of
$\eta$ are as follows:

$$\begin{array}{rl@{\qquad}l}
\eta \gram & 
	\commLabel{o^?}{\role R}{v}{\role S}{x} 		& \textit{(interaction)} 
\\[2pt]
\Div & \commLabel{o^*}{\role R}{X}{\role S}{} & \textit{(interaction-update)}
\\[2pt]
\Div & \delta 																& \textit{(role label)}
\end{array}$$

Rules $\did{\APOC}{Lift}$ and $\did{\APOC}{Lift-\AdaptRule}$ lift role
transitions to the system level. Rule $\did{\APOC}{Lift-\AdaptRule}$ also
checks that the update ${\mathcal I}$ belongs to the set of currently
available updates $\rules$.
Rule $\did{\APOC}{Synch}$ synchronises a send with the corresponding
receive, producing an interaction. Rule
$\did{\APOC}{Synch-\AdaptRule}$ is similar, but it deals with
higher-order interactions.  Note that Rules $\did{\APOC}{Synch}$ and
$\did{\APOC}{Synch-\AdaptRule}$ remove the prefixes from DPOC
operations in transition labels.  The labels of these transitions
store the information on the occurred communication: label
$\commLabel{o^?}{\role R_1}{v}{\role R_2}{x}$ denotes an interaction
on operation $o^?$ from role $\role R_1$ to role $\role R_2$ where the
value $v$ is sent by $\role R_1$ and then stored by $\role R_2$ in
variable $x$. Label $\commLabel{o^*}{\role R_1}{X}{\role R_2}{}$
denotes a similar interaction, but concerning a higher-order value
$X$, which can be either the code used in the update or a token
$\noC$ if no update is performed. No receiver variable is
specified, since the received value becomes part of the code of the
receiving process.
Rule $\did{\APOC}{Ext-Par}$ allows a network inside a parallel
composition to compute. Rule $\did{\APOC}{Ext-Par-End}$ synchronises the
termination of parallel networks. Finally, Rule $\did{\APOC}{Change-Updates}$
allows the set of updates to change arbitrarily.

We now define \emph{\APOC{} traces} and \emph{weak \APOC{} traces}, which we
later use, along with \AIOC{} traces and weak \AIOC{} traces, to define our
result of correctness.
\begin{definition}[\APOC{} traces]\emph{A \emph{(strong) trace} of a \APOC{} system $\tuple{\rules_1, \net_1}$ is a sequence
(finite or infinite) of labels $\eta_1, \eta_2, \dots$ with $$\eta_i \in
\{\tau, \commLabel{o^?}{\role R_1}{v}{\role R_2}{x}, \commLabel{o^*}{\role
R_1}{X}{\role R_2}{}, \tick, {\mathcal I},\texttt{\NoAdaptLabel},
\rules \}$$ such that there is a sequence of transitions
$\tuple{\rules_1,\net_1} \arro{\eta_1} \
\tuple{\rules_{2},\net_{2}} \arro{\eta_2} \dots$.\\
A \emph{weak trace} of a \APOC{} system $\tuple{\rules_1,\net_1}$ is a sequence of
labels $\eta_1,\eta_2, \ldots$ obtained by removing all the labels
corresponding to auxiliary communications, i.e., of the form
$\commLabel{o^*}{\role R_1}{v}{\role R_2}{x}$ or $\commLabel{o^*}{\role
R_1}{X}{\role R_2}{}$, and the silent labels $\tau$, from a trace of
$\tuple{\rules_1,\net_1}$. 
}\end{definition}

\APOC{} traces do not allow send and receive actions. Indeed these
actions represent incomplete interactions, thus they are needed for
compositionality reasons, but they do not represent relevant
behaviours of complete systems. Note also that these actions have no
correspondence at the \AIOC{} level, where only whole interactions are
allowed.

\begin{remark}\emph{
Contrarily to \AIOC{}s, \APOC{}s in general can deadlock. For instance,
$$\roleExec{\idx{i} \cinp{\idxSign{i}.o}{x}{\role R'},\Gamma}{R}$$ is a
deadlocked \APOC{} network: the process $\idx{i} \cinp{\idxSign{i}.o}{x}{\role
R'}$ is not terminated, and the only enabled actions are changes of the set of
updates (i.e., transitions with label $\rules$), which are not actual system
activities, but are taken by the environment. Notably, the \APOC{} above
cannot be obtained by projecting a
\AIOC{}.
In fact, \APOC{}s generated from \AIOC{}s are guaranteed to be deadlock free.
}\end{remark}

\section{Projection Function}\label{sec:endpoint_projection}

We now introduce the projection function $\proj$. Given a \AIOC{}
specification, $\proj$ returns a network of \APOC{} programs that
enact the behaviour defined by the originating \AIOC{}.

We write the projection of a \AIOC{} $\mathcal{I}$ as
$\proj(\mathcal{I},\Sigma)$, where $\Sigma$ is a global state. Informally, the
projection of a \AIOC{} is a parallel composition of terms, one for each role of
the \AIOC{}. The body of these roles is computed by the
\emph{process-projection} function $\pi$ (defined below). Given a \AIOC{} and a
role name $\role R$, the process-projection returns the process corresponding to
the local behaviour of role $\role R$. Since the roles executing the
process-projections are composed in parallel, the projection of a \AIOC{}
program results into the \APOC{} network of the projected roles.

To give the formal definition of projection, we first define $\parallel_{i \in
I} \net_i$ as the parallel composition of networks $\net_i$ for $i \in I$.
\begin{definition}[Projection]\emph{ The projection of a \AIOC{} process ${\mathcal
I}$ with global state $\Sigma$ is the \APOC{} network defined by:
$$\proj({\mathcal I},\Sigma)=\parallel_{\role S \in \rolesFunc({\mathcal I})}
\roleExec{\pi({\mathcal I},\role S), \Sigma_{\role S}}{\role S}$$
}\end{definition}

The process-projection function that derives \APOC{} processes from
\AIOC{} processes is defined as follows.

\begin{definition}[Process-projection]\emph{
Given an annotated \AIOC{} process $\mathcal{I}$ and a role $\role R$ the
projected \APOC{} process $\pi(\mathcal{I},\role R)$ is defined as in
Figure~\ref{fig:pi}.
}\end{definition}

\begin{figure}
{\footnotesize
$$
\hspace{-25pt}\begin{array}{c}
\rulebox{\pi(\one,\role S)} = \one
\qquad
\rulebox{\pi(\zero,\role S)} = \zero 
\\[5pt]
\rulebox{\pi({\mathcal I} \seqOp {\mathcal I}',\role S)} =
	\pi({\mathcal I},\role S) \seqOp \pi({\mathcal I}',\role S)
\qquad
\rulebox{\pi({\mathcal I} \parOpI {\mathcal I}',\role S)} =
	\pi({\mathcal I},\role S) \parOpI{\pi({\mathcal I}',\role S)}
\end{array}
$$
$$
\setlength{\arraycolsep}{2pt}\hspace{-20pt}\begin{array}{lcl}
\rulebox{\pi(\idx{i} \assign{x}{\role R}{e},\role R)} & = & \idx{i} x = e
\\[5pt]
\rulebox{\pi(\idx{i} \assign{x}{\role R}{e},\role S)
	\mbox{ and } \role S \neq \role R} & = & \one 
\\[5pt]
\rulebox{\pi(\idx{i} \comm{o}{\role R_1}{e}{\role R_2}{x} ,\role R_1)}
& = &
	\idx{i} \idxSign{i}.\cout{o}{e}{\role R_2}
\\[5pt]
\rulebox{\pi(\idx{i} \comm{o}{\role R_1}{e}{\role R_2}{x} ,\role R_2)} & = &
	\idx{i} \idxSign{i}.\cinp{o}{x}{\role R_1}
\\[5pt]
\rulebox{
	\begin{array}l
	\pi(\idx{i} \comm{o}{\role R_1}{e}{\role R_2}{x} ,\role S)
	\\ \quad \mbox{and } \role S \not \in \{\role R_1,\role R_2\}
\end{array}} & = & \one
\\
\rulebox{\pi(\ifthenKey{b \at \role R}{\mathcal{I}}{\mathcal{I}'}{i},\role R)} & = & \left\{
	\begin{array}l
		\idx{i} \code{if} \; b \; 
		\left\{ 
		\begin{array}l
		\left(\prod\limits_{\role R' \in \rolesFunc(\mathcal{I} ,
		\mathcal{I}')\setminus\{\role R\}} \; \idxTrue{i}
		\cout{\idxSign{i}.\AuxIf^*_\idxSign{i}}{\trueC}{\role R'}\right)
		\seqOp \; \\[15pt] \pi(\mathcal{I},\role R)
		\end{array}
		\right\}
		\\[10pt]
		\quad \code{else} \;  \left\{
		\begin{array}l
		\left(\prod\limits_{\role R' \in 
			\rolesFunc(\mathcal{I}, \mathcal{I}')\setminus \{\role R\}} \; 
		\idxFalse{i} \cout{\idxSign{i}.\AuxIf^*_\idxSign{i}}{\falseC}{\role R'}\right)
		\seqOp \; \\[15pt] \pi(\mathcal{I}',\role R) 	
		\end{array}
		\right\}
	\end{array}\right.
\\
\rulebox{
\begin{array}l
\pi(\ifthenKey{b \at \role R}{\mathcal{I}}{\mathcal{I}'}{i},\role S)
\\
\quad \mbox{and } \role S \in \rolesFunc(\mathcal{I} , \mathcal{I}')
\setminus\{\role R\}
\end{array}}
& = & 
	\idxIfRecv{i} \cinp{\idxSign{i}.\AuxIf^*_\idxSign{i}}{x_\idxSign{i}}{\role R} \seqOp \; 
	\idx{i} \ifthen{x_\idxSign{i}}{\pi(\mathcal{I},\role S)}{\pi(\mathcal{I}',\role S)}
\\[15pt]
\rulebox{
\begin{array}l
\pi(\ifthenKey{b \at \role R}{\mathcal{I}}{\mathcal{I}'}{i},\role S)\\
\quad \mbox{and } \role S \not \in \rolesFunc(\mathcal{I} , \mathcal{I}')
\cup \{\role R\}
\end{array}} & = &
\one
\\
\rulebox{\pi( \whileKey{b \at \role R}{\mathcal{I}}{i},\role R)} & = & 
	\left\{\begin{array}l
		\idx{i} \code{while} \; b \; \Big\{
				\\[10pt]
				\quad\left(\prod\limits_{\role R' \in \rolesFunc(\mathcal{I}) \setminus
				\{\role R\}} \idxTrue{i}
				\cout{\idxSign{i}.\AuxWb^*_\idxSign{i}}{\trueC}{\role R'}\right)
					\seqOp \quad \pi(\mathcal{I},\role R) \seqOp \;
					\\[15pt]
				\quad\prod\limits_{\role R' \in \rolesFunc(\mathcal{I}) \setminus
				\{\role R\}} \; \idxClose{i}
				\cinp{\idxSign{i}.\AuxWe^*_\idxSign{i}}{\_}{\role R'}
			\\[25pt]
			\Big\}  \seqOp
					\prod\limits_{\role R'\in \rolesFunc(\mathcal{I}) \setminus \{\role
					R\}} \; \idxFalse{i}
					\cout{\idxSign{i}.\AuxWb^*_\idxSign{i}}{\falseC}{\role R'}
	\end{array}\right.
\\
	\rulebox{
	\begin{array}l
	\pi(\whileKey{b \at \role R}{\mathcal{I}}{i},\role S)
	\\ \quad \mbox{and } \role S \in\rolesFunc(\mathcal{I}) \setminus \{\role R\}
\end{array}} & = &
	\left\{\begin{array}l
		\idxIfRecv{i} \cinp{\idxSign{i}.\AuxWb^*_\idxSign{i}}{x_\idxSign{i}}{\role
		R} \seqOp \\
		\quad \idx{i} \whileC 
		\ x_\idxSign{i}
		\left\{
		\begin{array}l
		\pi(\mathcal{I},\role S) \seqOp \;	\\
		\idxClose{i} \cout{\idxSign{i}.\AuxWe^*_\idxSign{i}}{\okC}{\role R} \seqOp \; \\
		\idxIfRecv{i} \cinp{\idxSign{i}.\AuxWb^*_\idxSign{i}}{x_\idxSign{i}}{\role R}\ 	
		\end{array}
		\right\}
	\end{array}\right.
\\[12pt]
	\rulebox{
	\begin{array}l
	\pi(\whileKey{b \at \role R}{\mathcal{I}}{i},\role S)
	\\ \quad \mbox{and } \role S \not\in\rolesFunc(\mathcal{I}) \cup \{\role R\}
\end{array}} & = & \one
\\[12pt]
\rulebox{\pi(\idx{i} \scope{l}{\role R}{\mathcal I}{\Delta},\role R)} & = &
	\pscope{i}{l}{\role R}{\pi({\mathcal I},\role R)}{\Delta}{\rolesFunc({\mathcal I})}
\\[12pt]
\rulebox{\begin{array}l
	\pi(\idx{i} \scope{l}{\role R}{\mathcal I}{\Delta},\role S)
	\\ \quad \mbox{and } \role S \in \rolesFunc(\mathcal{I}) \setminus \{\role
	R\}
\end{array}} & = &
	\psscope{i}{l}{\role R}{\pi({\mathcal I},\role S)}
\\[12pt]
\rulebox{\begin{array}l
\pi(\idx{i} \scope{l}{\role R}{\mathcal I}{\Delta},\role S)
	\\ \quad \mbox{and } \role S \not \in \rolesFunc(\mathcal{I}) \cup \{\role
	R\}
\end{array}}
 & = & \one
\end{array}
$$}
\vspace{-1em}\caption{process-projection function $\pi$.}\label{fig:pi}
\end{figure}

With a little abuse of notation, we write
$\rolesFunc(\mathcal{I},\mathcal{I}')$ for
$\rolesFunc(\mathcal{I})\cup\rolesFunc(\mathcal{I}')$. We assume that
variables $x_\idxSign{i}$ are never used in the \AIOC{} to be projected and we
use them for auxiliary synchronisations.

The projection is homomorphic for sequential and parallel composition, $\one$
and $\zero$. The projection of an assignment is the assignment on the role
performing it and $\one$ on other roles. The projection of an interaction is a
send on the sender role, a receive on the receiver, and $\one$ on any other
role.
The projection of a scope is a scope on all its participants. On its
coordinator it also features a clause that records the roles of the involved
participants. On the roles not involved in the scope the projection is $\one$.
Projections of conditional and while loop are a bit more complex, since they
need to coordinate a distributed computation. To this end they exploit
communications on auxiliary operations. In particular, $\AuxIf^*_\idxSign{i}$
coordinates the branching of conditionals, carrying information on whether the
``then'' or the ``else'' branch needs to be taken. Similarly,
$\AuxWb^*_\idxSign{i}$ coordinates the beginning of a while loop, carrying
information on whether to loop or to exit. Finally, $\AuxWe^*_\idxSign{i}$
coordinates the end of the body of the while loop. This closing operation
carries no relevant information and it is just used for synchronisation
purposes.  In order to execute a conditional $\ifthenKey{b \at \role
R}{\mathcal{I}}{\mathcal{I}'}{i}$, the coordinator $\role R$ of the
conditional locally evaluates the guard and tells the other roles which branch
to choose using auxiliary communications on $\AuxIf^*_\idxSign{i}$. Finally,
all the roles involved in the conditional execute their code corresponding to
the chosen branch.
Execution of a loop $\whileKey{b \at \role R}{\mathcal{I}}{i}$ is similar,
with two differences. First, end of loop synchronisation on operations
$\AuxWe^*_\idxSign{i}$ is used to notify the coordinator that an iteration is
terminated, and a new one can start. Second, communication of whether to loop
or to exit is more tricky than communication on the branch to choose in a
conditional. Indeed, there are two points in the projected code where the
coordinator $\role R$ sends the decision: the first is inside the body of the
loop and it is used if the decision is to loop; the second is after the loop
and it is used if the decision is to exit. Also, there are two points where
these communications are received by the other roles: before their loop at the
first iteration, at the end of the previous iteration of the body of the loop
in the others.

One has to keep attention since, by splitting an interaction into a send and a
receive primitive, primitives corresponding to different interactions, but on
the same operation, may interfere.

\begin{example}\label{example:indexes}\emph{
We illustrate the issue of interferences using the two  \APOC {} processes
below, identified by their respective roles, $\role R_1$ (right) and $\role
R_2$ (left), assuming that operations are not prefixed by indexes. We describe
only $\role R_1$ as $\role R_2$ is its dual. At Line 1, $\role R_1$ sends a
message to $\role R_2$ on operation $o$. In parallel with the send, $\role
R_1$ had a scope (Lines 3--5) that performed an update. The new code (Line 4)
contains a send on operation $o$ to role $\role R_2$. Since the two sends and
the two receives share the same operation $o$ and run in parallel, they can
interfere with each other.
$$
\begin{array}{c|c}
  \begin{array}l
    \mbox{\underline{process $\role R_1$}}										\\
    1. \quad \idx{1} o: e_1 \ \toC \ \role R_2 				\\
    2. \quad |																								\\
    3. \quad \text{\auxiliaryCode{// update auxiliary code}}\\
    4. \quad \idx{2} o: e_2 \ \toC \ \role R_2					\\
    5. \quad \text{\auxiliaryCode{// update auxiliary code}}
  \end{array} \quad
  &
  \quad \begin{array}l
    \mbox{\underline{process $\role R_2$}}\\
    1. \quad \idx{1} o: x_1 \ \fromC \ \role R_2         \\
    2. \quad |																									\\
   	3. \quad \text{\auxiliaryCode{// update auxiliary code}}  \\
    4. \quad \idx{2} o: x_2 \ \fromC \ \role R_2 				\\
    5. \quad \text{\auxiliaryCode{// update auxiliary code}}
  \end{array}
\end{array}
$$
\vspace{1em}
}\end{example}

Note that, since updates come from outside and one cannot know in advance
which operations they use, this interference cannot be statically avoided.

For this reason, in \S~\ref{sec:apoc} we introduced indexes to prefix \APOC{}
operations.

A similar problem may occur also for auxiliary communications. In particular,
imagine to have two parallel conditionals executed by the same role. We need
to avoid that, e.g., the decision to take the ``else'' branch on the first
conditional is wrongly taken by some role as a decision concerning the second
conditional. To avoid this problem, we prefix auxiliary operations using the
index $\idxSign{i}$ of the conditional. In this way, communications involving
distinct conditionals cannot interact. Note that communications concerning the
same conditional (or while loop) may share the same operation name and prefix.
However, since all auxiliary communications are from the coordinator of the
construct to the other roles involved in it, or vice versa, interferences are
avoided.

We now describe how to generate indexes for statements in the projection. As a
general rule, all the \APOC{} constructs obtained by projecting a \AIOC{}
construct with index $\idxSign{i}$ have index $\idxSign{i}$. The only
exceptions are the indexes of the auxiliary communications of the projection
of conditionals and while loops.

Provided $\idxSign{i}$ is the index of the conditional: \emph{i}) in the
projection of the coordinator we index the auxiliary communications for
selecting the ``then'' branch with index $\idxTrueSign i$, the ones for
selecting the ``else'' branch with index $\idxFalseSign i$; \emph{ii}) in the
projection of the other roles involved in the conditional we assign the index
$\idxIfRecvSign i$ to the auxiliary receive communications.
To communicate the evaluation of the guard of a while loop we use the same indexing
scheme ($\idxTrueSign i$, $\idxFalseSign i$, and $\idxIfRecvSign i$) used in the
projection of conditional. Moreover, all the auxiliary communications for end of
loop synchronisation are indexed with $\idxCloseSign i$.

\section{Running Example: Projection and Execution}\label{sec:a_running_example}

In this section we use our running example (see Figure \ref{fig:purchase_scenario}) to illustrate the projection and execution of \AIOC{} programs.

\subsection{Projection}\label{sub:demo_projection}
Given the code in Figure~\ref{fig:purchase_scenario}, we need to annotate it to
be able to project it (we remind that in \S~\ref{sec:endpoint_projection} we
defined our projection function on well-annotated \AIOC{}s). Since we wrote one
instruction per line in Figure~\ref{fig:purchase_scenario}, we annotate every
instruction using its line number as index. This results in a well-annotated \AIOC{}.

From the annotated \AIOC, the projection generates three \APOC{} processes for
the $\seller$, the $\buyer$, and the $\bank$, respectively reported in
Figures~\ref{fig:proj_seller},~\ref{fig:proj_buyer}, and~\ref{fig:proj_bank}.
To improve readability, we omit some $\one$ processes. In the projection of
the program, we also omit to write the index that prefixes the operations
since it is always equal to the numeric part of the index of their
correspondent construct. Finally, we write auxiliary communications in grey.
\begin{figure}
\begin{minipage}{.45\textwidth}
\begin{lstlisting}[mathescape=true,numbers=left,xleftmargin=-20pt]
$\auxiliaryCode{\idxIfRecv 3 \AuxWb^*_{3} : x_{3} \ \fromC \ \buyer;}$
$\idx 3 \whileC ( x_{3} ) \{$
 $\idx 5 priceReq : order \ \fromC \ \buyer;$
 $\idx 6 \scopeC \ @\seller \{$
  $\idx 7 order\_price = \extFunc{getPrice}( order );$
  $\idx 8 \mathit{offer} : order\_price \ \toC \ \buyer$
 $\} \ \rolesC \ \{ \ \seller,\ \buyer \ \};$
 $\auxiliaryCode{\idxClose{3} \AuxWe^*_{3} : \okC \ \toC \ \buyer;}$
 $\auxiliaryCode{\idxIfRecv 3 \AuxWb^*_{3} : x_{3} \ \fromC \ \buyer}$
$\};$
$\auxiliaryCode{\idxIfRecv {15} \AuxIf^*_{15} : x_{15} \ \fromC \ \buyer;}$
$\idx {15} \ifC ( x_{15} ) \{$
 $\idx {16} payReq : \extFunc{payDesc}( order\_price ) \ \toC \ \bank;$
 $\auxiliaryCode{\idxIfRecv {21} \AuxIf^*_{21} : x_{21} \ \fromC \ \bank;}$
 $\idx {21} \ifC ( x_{21} ) \{$
  $\idx {22} confirm : \_ \ \fromC \ \bank \ \}$ 
$\}$
\end{lstlisting}
\vspace{-15pt}
\caption{$\seller$ \APOC{} Process.\label{fig:proj_seller}}
\vspace{12pt}
\begin{lstlisting}[mathescape=true,numbers=left,xleftmargin=-20pt]
$\auxiliaryCode{\idxIfRecv{15} \AuxIf^*_{15} : x_{15} \ \fromC \ \buyer;}$
$\idx{15} \ifC ( x_{15} ) \{$
 $\idx{16} payReq : desc \ \fromC \ \seller;$
 $\idx{17} \scopeC \ @\bank \{$
  $\idx{18} pay : auth \ \fromC \ \buyer$
 $\} \ \rolesC \ \{ \buyer,\bank \};$
 $\idx{20} payment\_ok = \extFunc{makePayment}(desc, auth);$
 $\idx{21} \ifC ( payment\_ok ) \{$
  $\{$
   $\auxiliaryCode{\idxTrue{21} \AuxIf^*_{21} : \trueC \ \toC \ \seller}$
   $\auxiliaryCode{| \ \idxTrue{21} \AuxIf^*_{21} : \trueC \ \toC \ \buyer }$
  $\}\ ; \; \{$
   $\idx{22} confirm : null \ \toC \ \seller$
   $| \ \idx{24} confirm : null \ \toC \ \buyer$
  $\}$
 $\} \ \elseC \ \{$
  $\{$
   $\auxiliaryCode{\idxFalse{21} \AuxIf^*_{21} : \falseC \ \toC \ \seller}$
   $\auxiliaryCode{|\ \idxFalse{21} \AuxIf^*_{21} : \falseC \ \toC \ \buyer }$
  $\};$ 
  $\idx{26} abort : null \ \toC \ \buyer$
 $\}$ 
$\}$
\end{lstlisting}
\vspace{-15pt}
\caption{$\bank$ \APOC{} Process.\label{fig:proj_bank}}
\end{minipage}
\begin{minipage}{.45\textwidth}
\begin{lstlisting}[mathescape=true,numbers=left,xleftmargin=35pt]
$\idx{1} price\_ok = \falseC;$
$\idx{2} continue = \trueC;$
$\idx{3} \whileC ( !price\_ok \ \andC \ \ continue ) \{$
 $\auxiliaryCode{\idxTrue 3 \AuxWb^*_{3}: \trueC \ \toC \ \seller;}$
 $\idx{4} prod = \extFunc{getInput}();$
 $\idx{5} priceReq : prod \ \toC \ \seller;$
 $\idx{6} \scopeC \ @\seller \{$
  $\idx{7} \mathit{offer} : prod\_price \ \fromC \ \seller$
 $\};$
 $\idx{10} price\_ok = \extFunc{getInput}();$
 $\idx{11} \ifC ( !price\_ok ) \{ $
  $\idx{12} continue = \extFunc{getInput}() $
 $\};$
 $\auxiliaryCode{\idxClose{3} \AuxWe^*_{3} : \_ \ \fromC \ \seller}$
$\};$
$\auxiliaryCode{\idxFalse{3} \AuxWb^*_{3} : \falseC \ \toC \ \seller;}$
$\idx{15} \ifC ( price\_ok ) \{$
 $\{$
  $\auxiliaryCode{\idxTrue{15} \AuxIf^*_{15} : \trueC \ \toC \ \seller}$
  $\auxiliaryCode{| \idxTrue{15} \AuxIf^*_{15} : \trueC \ \toC \ \bank}$
 $\};$
 $\idx{17} \scopeC \ @\bank \{$
  $\idx{19} pay : \extFunc{payAuth}( prod\_price ) \ \toC \ \bank $
 $\};$
 $\auxiliaryCode{\idxIfRecv{21} \AuxIf^*_{21} : x_{21} \ \fromC \ \bank;}$
 $\idx{21} \ifC ( x_{21} ) \{$
  $\idx{24} confirm : \_ \ \fromC \ \bank$
 $\} \ \elseC \ \{$
  $\idx{26} abort : \_ \ \fromC \ \bank \}$
 $\}$
$\}$
\end{lstlisting}
\caption{$\buyer$ \APOC{} Process.\label{fig:proj_buyer}}
\end{minipage}
\end{figure}

\subsection{Runtime Execution}\label{sub:demo_execution}
\label{app:running_example}

We now focus on an excerpt of the code 
to exemplify how updates are performed at runtime. We consider the code
of the scope at Lines 6--9 of Figure~\ref{fig:purchase_scenario}.
In this execution scenario we assume to introduce in the set of available
updates the update presented in Figure~\ref{fig:rule_price_inquiry}, which
enables the use of a fidelity card to provide a price discount.
Below we consider both the \AIOC{} and the \APOC{} level, dropping some $\one$s
to improve readability.

Since we describe a runtime execution, we assume that the $\buyer$ has just
sent the name of the product (s)he is interested in to the $\seller$ (Line 5 of
Figure~\ref{fig:purchase_scenario}). The annotated \AIOC{}s we execute is the
following.
\begin{lstlisting}[mathescape=true,numbers=left,xleftmargin=120pt]
$\idx{6} \scopeC \ @\seller \{$
 $\idx{7} order\_price@\seller = \extFunc{getPrice}( order );$
 $\idx{8} \mathit{offer} : \seller( order\_price ) \rightarrow \buyer( prod\_price )$
$\}$
\end{lstlisting}
At runtime we apply Rule $\did{\AIOC}{\AdaptRule}$ that substitutes the scope
with the new code. The replacement is atomic. Below we assume that the instructions of the update are annotated with indexes corresponding to their line number plus 30.
\vspace{5pt}
\begin{lstlisting}[mathescape=true,numbers=left,xleftmargin=120pt]
$\idx{31} cardReq : \seller( null ) \rightarrow \buyer( \_ );$
$\idx{32} card\_id@\buyer = \extFunc{getInput}();$
$\idx{33} card : \buyer( card\_id ) \rightarrow \seller( buyer\_id );$
$\idx{34} \ifC \ \extFunc{isValid}( buyer\_id )@\seller \{$
  $\idx{35} order\_price@\seller = \extFunc{getPrice}( order ) * 0.9$
$\} \ \elseC \ \{$
  $\idx{37} order\_price@\seller = \extFunc{getPrice}( order )$
$\};$
$\idx{39} \mathit{offer} : \seller( order\_price ) \rightarrow \buyer( 
prod\_price )$
\end{lstlisting}
\vspace{5pt}

Let us now focus on the execution at \APOC{} level, where the application of 
updates is not atomic. The scope is distributed between two 
participants.
The first step of the update protocol is performed by the $\seller$, since
(s)he is the coordinator of the update. The \APOC{} description of the $\seller$ before the update is:

\vspace{5pt}
\begin{lstlisting}[mathescape=true,xleftmargin=140pt]
$\idx{6} \scopeC\ @\seller \{$
  $\idx 7 order\_price = \extFunc{getPrice}( order );$
  $\idx 8 \mathit{offer} : order\_price \ \toC \ \buyer$
$\} \ \rolesC \ \{ \seller, \buyer \}$
\end{lstlisting}
\vspace{5pt}

When the scope construct is enabled, the $\seller$ non-deterministically
selects whether to update or not and, in the first case, which update to
apply. Here, we assume that the update using the code in
Figure~\ref{fig:rule_price_inquiry} is selected. Below we report on the left
the reductum of the projected code of the $\seller$ after the application of
Rule $\did{\APOC}{Lead-\AdaptRule}$. The $\seller$ sends to the $\buyer$ the
code
--- denoted as $P_{\role B}$ and reported below on the right --- obtained
    projecting the update on role $\buyer$.

\noindent\;\begin{minipage}{.53\textwidth}
\begin{lstlisting}[mathescape=true,xleftmargin=1.5em,numbers=left]
$\auxiliaryCode{\idx{6} \AuxSb^*_{6} : P_{\role B} \ \toC \ \buyer;}$
$\idx{31} cardReq : null \ \toC \ \buyer;$
$\idx{33} card : buyer\_id \ \fromC \ \buyer;$
$\idx{34} \ifC \ \extFunc{isValid}( buyer\_id ) \{$
  $\idx{35} order\_price = \extFunc{getPrice}( order ) * 0.9$
$\} \ \elseC \ \{$
  $\idx{37} order\_price = \extFunc{getPrice}( order )$
$\};$
$\idx{39} \mathit{offer} : order\_price \ \toC \ \buyer;$
$\auxiliaryCode{\idx{6} \AuxSe^*_{6} : \_ \ \fromC \ \buyer;}$
\end{lstlisting}  
\end{minipage}
\begin{minipage}{.5\textwidth}
\begin{lstlisting}[mathescape=true]
$P_{\role B} :=\,$ $\idx{31} cardReq : null \ \fromC \ \seller;$
      $\!\idx{32} card\_id = \extFunc{getInput}();$
      $\!\idx{33} card : card\_id \ \toC \ \seller;$
      $\!\idx{39} \mathit{offer} : prod\_price \ \fromC \ \seller$
\end{lstlisting}  
\end{minipage}
\vspace{10pt}

Above, at Line 1 the $\seller$ requires the $\buyer$ to update, sending to him
the new \APOC{} fragment to execute. Then, the $\seller$ starts to execute its
own updated \APOC{}. At the end of the execution of the new \APOC{} code (Line
10) the $\seller$ waits for the notification of termination of the \APOC{}
fragment executed by the $\buyer$.

Let us now consider the process-projection of the $\buyer$, reported below.
\vspace{5pt}
\begin{lstlisting}[mathescape=true,xleftmargin=140pt]
$\idx{6} \scopeC \ @\seller \{$
  $\idx 8 \mathit{offer} : order\_price \ \fromC \ \seller$
$\}$
\end{lstlisting}
\vspace{5pt}
At runtime, the scope waits for the arrival of a message from the coordinator
of the update.
In our case, since we assumed that the update is applied, the $\buyer$
receives using Rule $\did{\APOC}{\AdaptRule}$ the \APOC{} fragment $P_{\role
B}$ sent by the coordinator.
In the reductum, $P_{\role B}$ replaces the scope, followed by the
notification of termination to the $\seller$.

\vspace{5pt}
\begin{lstlisting}[mathescape=true,xleftmargin=140pt]
$\idx{31} cardReq : null \ \fromC \ \seller;$
$\idx{32} card\_id = \extFunc{getInput}();$
$\idx{33} card : card\_id \ \toC \ \seller;$
$\idx{39} \mathit{offer} : prod\_price \ \fromC \ \seller$
$\auxiliaryCode{\idx 6 \AuxSe^*_{6} : \okC \ \toC \ \seller}$
\end{lstlisting}
\vspace{5pt}

Consider now what happens if no update is applied. At \AIOC{} level the
$\seller$ applies Rule $\did{\AIOC}{No\AdaptRule}$, which removes the scope
and runs its body.
At \APOC{} level, the update is not atomic. The code of the $\seller$ is the
following one.

\vspace{5pt}
\begin{lstlisting}[mathescape=true,xleftmargin=140pt,numbers=left]
$\auxiliaryCode{\idx 6 \AuxSb^*_{6} : \noC \ \toC \ \buyer;}$
$\idx 7 order\_price = \extFunc{getPrice}( order );$
$\idx 8 \mathit{offer} : order\_price \ \toC \ \buyer;$
$\auxiliaryCode{\idx 6 \AuxSe^*_{6} : \_ \ \fromC \ \buyer;}$
\end{lstlisting}
\vspace{5pt}
Before executing the code inside the scope, the $\seller$ notifies the
$\buyer$ that (s)he can proceed with her execution (Line 1). Like in the case
of update, the $\seller$ also waits for the notification of the end of
execution from the $\buyer$ (Line 4).

Finally, we report the \APOC{} code of the $\buyer$ after the reception of
the message that no update is needed. Rule $\did{\APOC}{No\AdaptRule}$ removes
the scope and adds the notification of termination (Line 2 below) to the
coordinator at the end.

\vspace{5pt}
\begin{lstlisting}[mathescape=true,xleftmargin=140pt,numbers=left]
$\idx 7 \mathit{offer} : prod\_price \ \fromC \ \seller;$
$\auxiliaryCode{\idx 6 \AuxSe^*_{6} : \okC \ \toC \ \seller;}$
\end{lstlisting}

\section{Connected \AIOC{}s}
\label{sub:process_level_coordination}
\label{sub:connectedness}

We now give a precise definition of the notion of connectedness that we
mentioned in \S~\ref{sub:annotated_aioc_semantics} and
\S~\ref{sec:apoc_semantics}. In both \AIOC{} and \APOC{} semantics we checked
such a property of updates with predicate $\func{connected}$, respectively in
Rule $\did{\AIOC}{\AdaptRule}$ (Figure~\ref{fig:ioclts}) and Rule
$\did{\APOC}{\AdaptRule-Lead}$ (Figure~\ref{fig:apoc-proc_2}).

To give the intuition of why we need to restrict to connected updates, consider
the scenario below of a \AIOC{} (left side) and its projection (right side).
$$
\begin{array}l
op1 : \role A( e_1 ) \rightarrow \role B( x );\\[3pt]
op2 : \role C( e_2 ) \rightarrow \role D( y )
\end{array}
\qquad
\makeatletter
\ext@arrow 0359\Rightarrowfill@{}{projection}
\makeatother
\qquad
\begin{array}{c|c}
  \begin{array}l
    \mbox{\underline{process $\role A$}}\\
    \quad op_1: e_1 \ \toC \ \role B
  \end{array}
  &
  \begin{array}l
    \mbox{\underline{process $\role B$}}\\
    \quad op_1: x \ \fromC \ \role A 
  \end{array}
  \\[10pt]
  \hline
  \\[-10pt]
  \begin{array}l
    \mbox{\underline{process $\role C$}}\\
    \quad op_2: e_2 \ \toC \ \role D 
  \end{array}
  &
  \begin{array}l
    \mbox{\underline{process $\role D$}}\\
    \quad op_2: y \ \fromC \ \role C
  \end{array}
\end{array}
$$
\AIOC{}s can express interactions that, if projected as described in
\S~\ref{sec:endpoint_projection}, can behave differently with respect to the
originating \AIOC{}. Indeed, in our example we have a
\AIOC{} that composes in sequence two interactions: an interaction between
$\role A$ and $\role B$ on operation $op_1$ followed by an interaction between
$\role C$ and $\role D$ on operation $op_2$. The projection of the \AIOC{}
produces four processes (identified by their role): $\role A$ and $\role C$
send a message to $\role B$ and $\role D$, respectively. Dually, $\role B$ and
$\role D$ receive a message form $\role A$ and $\role C$, respectively.
In the example, at the level of processes we lose the global order among the
interactions: each projected process runs its code locally and it is not aware
of the global sequence of interactions. Indeed, both sends and both receives
are enabled at the same time. Hence, the semantics of \APOC{} lets the two
interactions interleave in any order. It can happen that the interaction
between $\role C$ and $\role D$ occurs before the one between $\role A$ and
$\role B$, violating the order of interactions prescribed by the originating
\AIOC{}.

Restricting to connected \AIOC{}s avoids this kind of behaviours. We formalise
\emph{connectedness} as an efficient (see Theorem~\ref{teo:compl}) syntactic check.
We highlight that our definition of connectedness does not hamper
programmability and it naturally holds in most real-world scenarios (the
interested reader can find in the website of the AIOCJ project~\cite{AIOCJ}
several such scenarios).

\begin{remark}\emph{
There exists a trade-off between efficiency and ease of programming with
respect to the guarantee that all the roles are aware of the evolution of the
global computation.
This is a common element of choreographic approaches, which has been handled
in different ways, e.g., \emph{i}) by restricting the set of well-formed
choreographies to only those on which the projection preserves the order of
actions~\cite{hondaesop}; \emph{ii}) by mimicking the non-deterministic
behaviour of process-level networks at choreography level~\cite{poplmontesi};
or \emph{iii}) by enforcing the order of actions with additional auxiliary
messages between roles~\cite{wwvlanese}.}

\emph{Our choice of preserving the order of interactions defined at \AIOC{} level
follows the same philosophy of~\cite{hondaesop}, whilst for scopes,
conditionals, and while loops we enforce connectedness with auxiliary messages
as done in~\cite{wwvlanese}.
We remind that we introduced auxiliary messages for coordination both in the
semantics of scopes at \APOC{} level (\S~\ref{sec:apoc_semantics}) and in the
projection (\S~\ref{sec:endpoint_projection}). We choose to add such auxiliary
messages to avoid to impose strong constraints on the form of scopes,
conditionals, and while loops, which in the end would pose strong limitations
to the programmers of \AIOC{}s.
On the other hand, for sequential composition we choose to restrict the set of
allowed $\AIOC{}$s by requiring connectedness, which ensures that the order of
interactions defined at \AIOC{} level is preserved by projection.}

\emph{As discussed above, the execution of conditionals and while loops rely on
auxiliary communications to coordinate the different roles. Some of these
communications may be redundant. For instance, in
Figure~\ref{fig:proj_seller}, Line 8, the $\seller$ notifies to the $\buyer$
that (s)he has completed her part of the while loop. However, since her last
contribution to the while loop is the auxiliary communication for end of scope
synchronisation, the $\buyer$ already has this information. Hence, Line 8 in
Figure~\ref{fig:proj_seller}, where the notification is sent, and Line 14 in
Figure~\ref{fig:proj_buyer}, where the notification is received, can be safely
dropped. Removing redundant auxiliary communications can be automatised using
a suitable static analysis. We leave this topic for future work.
}\end{remark}

To formalise connectedness we introduce, in Figure~\ref{fig:tri}, the auxiliary
functions $\transI$ and $\transF$ that, given a \AIOC{} process, compute sets
of pairs representing senders and receivers of possible initial and final
interactions in its execution. We represent one such pair as
$\rolesFuncPair{\role R}{\role S}$. Actions located at $\role R$ are
represented as $\rolesFuncPair{\role R}{\role R}$. For instance, given an
interaction $\idx{i}\comm{o}{\role R}{e}{\role S}{x}$ both its $\transI$ and
$\transF$ are $\{\rolesFuncPair{\role R}{\role S}\}$. For conditional,
$\transI( \idx{i}\ifthen{b \at \role R}{\mathcal{I}}{\mathcal{I}'}) =
\{\rolesFuncPair{\role R}{\role R}\}$ since the first action executed is the
evaluation of the guard by role $\role R$. The set $\transF(\idx{i}\ifthen{b
\at \role R}{\mathcal{I}}{\mathcal{I}'})$ is normally $\transF({\mathcal I})  \
\cup \ \transF({\mathcal I}')$, since the execution terminates with an action
from one of the branches. If instead the branches are both empty then $\transF$
is $\{ \rolesFuncPair{\role R}{\role R} \}$, representing guard evaluation.
\begin{figure}[t]
$$
\begin{array}{lcl}	
\transI(\idx{i}\comm{o}{\role R}{e}{\role S}{x}) & = & 
\transF(\idx{i}\comm{o}{\role R}{e}{\role S}{x}) = \{\rolesFuncPair{\role R}{\role S}\}
\\[5pt]
\transI(\idx{i}\assign{x}{\role R}{e})  & = & 
	\transF(\idx{i}\assign{x}{\role R}{e}) = \{\rolesFuncPair{\role R}{\role R}\}
\\[5pt]
\transI(\one) & = & \transI(\zero) = \transF(\one) = \transF(\zero) = \emptyset
\\[5pt]
\transI({\mathcal I} \parOpI {\mathcal I}') & = & 
	\transI({\mathcal I}) \cup \transI({\mathcal I}')
\\[5pt]
\transF({\mathcal I} \parOpI {\mathcal I}') & = & 
	\transF({\mathcal I}) \cup \transF({\mathcal I}')
\\[5pt]
\transI({\mathcal I}\seqOp{\mathcal I}')  & = & 
\begin{cases}
	\transI({\mathcal I}') & \mbox{if } \transI({\mathcal I}) = \emptyset\\
	\transI({\mathcal I}) & \mbox{otherwise}\\
\end{cases}
\\[15pt]
\transF({\mathcal I}\seqOp{\mathcal I}')  & = & 
\begin{cases}
	\transF({\mathcal I}) & \mbox{if } \transF({\mathcal I}') = \emptyset\\
	\transF({\mathcal I}') & \mbox{otherwise}\\
\end{cases}
\\[15pt]
\transI( \idx{i}\ifthen{b \at \role R}{\mathcal{I}}{\mathcal{I}'}) & = & \transI(\idx{i}\while{b \at \role R}{\mathcal{I}} ) =
\{\rolesFuncPair{\role R}{\role R}\}
\\[5pt]
\transF( \idx{i}\ifthen{b \at \role R}{\mathcal{I}}{\mathcal{I}'}) & = &
\begin{cases}
	\{ \rolesFuncPair{\role R}{\role R} \} & \mbox{if } \transF({\mathcal I}) \cup \transF({\mathcal I}') = \emptyset\\
	\transF({\mathcal I}) \cup \transF({\mathcal I}') & \mbox{otherwise}\\
\end{cases}
\\[15pt]
\transF(\idx{i}\while{b \at \role R}{\mathcal{I}} ) & = & 
\begin{cases}	
	\{ \rolesFuncPair{\role R}{\role R} \} & 
		\mbox{if } \transF({\mathcal I}) = \emptyset\\
	\bigcup\limits_{\role R' \in \rolesFunc(\mathcal{I}) \smallsetminus \{\role R\}} \{ \rolesFuncPair{\role R}{\role R'} \} & \mbox{otherwise}\\
\end{cases}
\\[15pt]
\transI(\idx{i}\scope{l}{\role R}{\mathcal I}{\Delta}) & = & 
	\{ \rolesFuncPair{\role R}{\role R}\}
\\[5pt]
\transF(\idx{i}\scope{l}{\role R}{\mathcal I}{\Delta}) & = &
\begin{cases}	
	\{ \rolesFuncPair{\role R}{\role R} \} & 
		\mbox{if } \rolesFunc({\mathcal I}) \subseteq \{ \role R \}\\
	\bigcup\limits_{\role R' \in \rolesFunc(\mathcal{I}) \smallsetminus \{\role R\}} \{ \rolesFuncPair{\role R'}{\role R} \} & \mbox{otherwise}
\end{cases}
\end{array}
$$
\caption{Auxiliary functions $\transI$ and $\transF$.}\label{fig:tri} 
\end{figure}

Finally, we give the formal definition of connectedness.

\begin{definition}[Connectedness]\label{def:connectedness}\emph{
A \AIOC{} process ${\mathcal I}$ is connected 
if each
subterm ${\mathcal I}' \seqOp {\mathcal I}''$ of $\mathcal I$ satisfies
$$\forall\ \rolesFuncPair{\role R_1}{\role R_2} \in \transF({\mathcal I}'), 
\forall\ \rolesFuncPair{\role S_1}{\role S_2} \in \transI({\mathcal I}'') \; . \;
\{\role R_1,\role R_2\} \cap \{\role S_1,\role S_2\} \neq \emptyset$$
}\end{definition}

Connectedness can be checked efficiently.

\begin{restatable}[Connectedness-check complexity]{theorem}{compl}\label{teo:compl}
\mbox{}\\
The connectedness of a \AIOC{} process ${\mathcal I}$ can be checked in
time $O(n^2\log(n))$, where $n$ is the number of nodes in the abstract syntax
tree of ${\mathcal I}$.
\end{restatable}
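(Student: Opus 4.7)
The plan is to design a constructive algorithm that performs a single bottom-up traversal of the abstract syntax tree of ${\mathcal I}$. At each node $v$ it computes $\transI(v)$ and $\transF(v)$ directly from the structural equations of Figure~\ref{fig:tri}, and whenever $v$ is a sequential composition it verifies the connectedness condition of Definition~\ref{def:connectedness}. Correctness is then immediate by structural induction on ${\mathcal I}$, so the argument reduces to a running-time analysis.

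First I would fix a representation for the sets $\transI(v)$ and $\transF(v)$ as balanced ordered collections of role pairs, supporting $O(\log n)$ membership, insertion, and merge-based union. At each internal node the two new sets are built from those of the children by case analysis on the operator: parallel composition uses a union, sequential composition uses an emptiness test followed by an assignment, while-loops and scopes may add up to $O(n)$ pairs enumerating $\rolesFunc(\mathcal{I}) \setminus \{\role{R}\}$. Since each pair mentions only two roles and the number of distinct roles is at most $n$, we have $|\transI(v)|, |\transF(v)| = O(n)$, and each union or enumeration at a single node costs $O(n \log n)$. Summed over the $O(n)$ AST nodes, this phase runs in $O(n^2 \log n)$ time.

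Next, at every sequential node $v = {\mathcal I}' \seqOp {\mathcal I}''$ I would iterate over all pairs $(R_1,R_2) \in \transF({\mathcal I}')$ and $(S_1,S_2) \in \transI({\mathcal I}'')$ and check $\{R_1,R_2\} \cap \{S_1,S_2\} \neq \emptyset$ in constant time. The worst-case work at a single sequential node is $O(|\transF({\mathcal I}')| \cdot |\transI({\mathcal I}'')|) = O(n^2)$.

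The hard part will be controlling the aggregate cost of these checks without overshooting the target. Multiplying the per-node cost by the $O(n)$ sequential nodes would yield $O(n^3)$, one factor of $n$ too many. I plan to close the gap by a tree-sum amortization: the cardinalities $|\transF(v_L)|$ and $|\transI(v_R)|$ at a sequential node are each dominated by the number of atomic subterms in the respective subtrees, and the standard binary-tree identity $\sum_{v \text{ sequential}} \mathrm{size}(v_L)\cdot\mathrm{size}(v_R) = O(n^2)$ then bounds the cumulative cost of all checks by $O(n^2)$. Combined with the $O(n^2 \log n)$ cost of the set-construction phase, this produces the claimed $O(n^2 \log n)$ bound.
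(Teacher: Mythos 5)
Your proposal is correct, and its first phase (bottom-up computation of $\transI$ and $\transF$ with balanced ordered sets, $O(n\log(n))$ per node, $O(n^2\log(n))$ overall) coincides with the paper's. Where you genuinely diverge is in how the per-sequence checks are charged. The paper proves a local lemma (Lemma~\ref{lemma:limit_connectedness}): because the elements of $\transF$ and $\transI$ are multisets of exactly two roles, either one of the sets has at most $9$ elements, or a successful check forces all multisets on both sides to share a single common role; in either case one check costs only $O(\max(|S|,|S'|)) = O(n)$, so $n$ sequence nodes cost $O(n^2)$ in total. You instead accept the naive $O(|\transF(v_L)|\cdot|\transI(v_R)|)$ pairwise check at each node and amortise globally, using $|\transF(v)|,|\transI(v)| = O(\mathrm{size}(v))$ (which does hold — each construct contributes at most two roles, and every case of Figure~\ref{fig:tri} yields a set bounded by either the children's sets or $|\rolesFunc(\mathcal{I})|$) together with the LCA identity $\sum_{v}\mathrm{size}(v_L)\cdot\mathrm{size}(v_R)\le\binom{n}{2}$, since each pair of AST nodes is counted at exactly one ancestor. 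Both routes bound the checking phase by $O(n^2)$, dominated by set construction. The paper's lemma buys a stronger local guarantee — any single sequence operator can be rechecked in linear time, which matters for incremental, on-the-fly checking in the IDE — at the price of a somewhat delicate case analysis on two-element multisets; your amortisation is more elementary and avoids that combinatorial argument, but only bounds the aggregate cost. One small point worth making explicit in a final write-up: the while/scope cases need $\rolesFunc(\mathcal{I})$, which must itself be accumulated bottom-up within the same $O(n^2\log(n))$ budget (the paper is equally terse about this).
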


The proof of the theorem is reported in Appendix~\ref{app:complexity}.

We remind that we allow only connected updates. Indeed, replacing a
scope with a connected update always results in a deadlock- and
race-free \AIOC{}. Thus, one just needs to statically check
connectedness of the starting program and of the updates, and there is
no need to perform expensive runtime checks on the whole application
after updates have been performed.

\section{Correctness}\label{sec:corr}

In the previous sections we have presented \AIOC{}s, \APOC{}s, and described
how to derive a \APOC{} from a given \AIOC{}. This section presents the main
technical result of the paper, namely the correctness of the projection.
Moreover, as a consequence of the correctness, in Section
\ref{subsec:deadlock} we prove that properties like deadlock freedom, 
termination, and race freedom are preserved by the projection.

Correctness here means that a connected \AIOC{} and its projected \APOC{} are
weak system bisimilar. Weak system bisimilarity is formally defined as
follows.

\begin{definition}[Weak System Bisimilarity]\label{def:bisim}\emph{ A \emph{weak
system bisimulation} is a relation $\mathcal R$ between \AIOC{} systems and
\APOC{} systems such that if $(\tuple{\state,\rules,{\mathcal
I}},\tuple{\rules',\net})
\in \mathcal R$ then:
\begin{itemize}
	\item if $\tuple{\state,\rules,{\mathcal I}} \arro{\mu}
	\tuple{\state'',\rules'',{\mathcal I}''}$ then $\tuple{\rules',\net}
	\arro{\eta_1},\ldots,\arro{\eta_k}\arro{\mu} \tuple{\rules''',\net'''}$
	with \\$\forall \ i \in [1 \ldots k], \eta_{i} \in \{
	\commLabel{o^*}{\role	R_1}{v}{\role R_2}{x}, 
	\commLabel{o^*}{\role	R_1}{X}{\role R_2}{},
	\tau \}$
	and\\ $(\tuple{\state'',\rules'',{\mathcal I}''},\tuple{\rules''',\net'''})
	\in
	\mathcal R$;
	\item if $\tuple{\rules',\net} \arro{\eta}
	\tuple{\rules''',\net'''}$ with $\eta \in \{\commLabel{o^?}{\role
	R_1}{v}{\role R_2}{x}, \commLabel{o^*}{\role
	R_1}{X}{\role R_2}{}, \tick, {\mathcal I},
	\\ \NoAdaptLabel, \rules''', \tau \}$ then one of the following two conditions holds:
	\begin{itemize}
	\item $\tuple{\state,\rules,{\mathcal I}} \arro{\eta}
	\tuple{\state'',\rules',{\mathcal I}''}$ and 
	$(\tuple{\state'',\rules'',{\mathcal I}''},$ $\tuple{\rules''',\net'''})
	\in \mathcal R$ or
	\item $\eta \in \{\commLabel{o^*}{\role R_1}{v}{\role
	R_2}{x},\commLabel{o^*}{\role R_1}{X}{\role R_2}{},\tau\}$ and
	$(\tuple{\state,\rules,{\mathcal I}},$ $\tuple{\rules''',\net''')} \in
	\mathcal R$
	\end{itemize}
\end{itemize}
}

\emph{\noindent A \AIOC{} system $\tuple{\state,\rules,{\mathcal I}}$ and a 
\APOC{} system  $\tuple{\rules',\net}$ are \emph{weak system bisimilar} iff
there exists a weak system bisimulation $\mathcal R$ such that
$(\tuple{\state,\rules,{\mathcal I}},\tuple{\rules',\net}) \in \mathcal R$.
}\end{definition}

In the proof, we provide a relation $\mathcal R$ which relates each
well-annotated connected \AIOC{} system with its projection and show that it
is a weak system bisimulation. Such a relation is not trivial since events
that are atomic in the \AIOC{}, e.g., the evaluation of the guard of a
conditional, including the removal of the discarded branch, are not atomic at
\APOC{} level. In the case of conditional, the \AIOC{} transition is mimicked
by a conditional performed by the role evaluating the guard, a set of
auxiliary communications sending the value of the guard to the other roles,
and local conditionals based on the received value.  These mismatches are
taken care by function $\upd$ (Definition~\ref{def:upd}). This function needs
also to remove the auxiliary communications used to synchronise the
termination of scopes, which have no counterpart after the \AIOC{} scope has
been consumed. However, we have to record the impact of the mentioned
auxiliary communications on the possible executions. Thus we define an event
structure for \AIOC{} (Definition~\ref{def:ev}) and one for \APOC{}
(Definition~\ref{def:apocev}) and we show that the two are related
(Lemma~\ref{lemma:ev}).

Thanks to the existence of a bisimulation relating each well-annotated
connected \AIOC{} system with its projection we can prove that the
projection is correct.  Formally:

\begin{restatable}[Correctness]{theorem}{final}\label{teo:final}
For each initial, connected \AIOC{} process ${\mathcal I}$, each state $\state$,
each set of updates $\rules$, the \AIOC{} system $\tuple{\state,
\rules,{\mathcal I}}$ and the \APOC{} system $\tuple{\rules,\proj({\mathcal
I},\state)}$ are weak system bisimilar.
\end{restatable}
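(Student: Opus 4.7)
The plan is to exhibit an explicit weak system bisimulation $\mathcal{R}$ containing the pair $(\tuple{\state, \rules, \mathcal{I}}, \tuple{\rules, \proj(\mathcal{I}, \state)})$ for every initial, connected, well-annotated \AIOC{} system, and then verify the two clauses of Definition~\ref{def:bisim}. Since transitions that are atomic at the \AIOC{} level---evaluating an if-guard, unfolding a while, entering or leaving a scope---are broken into several \APOC{} steps (the auxiliary $\AuxIf^*_\idxSign{i}$, $\AuxWb^*_\idxSign{i}$, $\AuxWe^*_\idxSign{i}$, $\AuxSb^*_\idxSign{i}$, $\AuxSe^*_\idxSign{i}$ communications interleaved with local branches), the relation $\mathcal{R}$ cannot simply identify $\mathcal{I}$ with its projection. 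I would instead use the function $\upd$ of Definition~\ref{def:upd} to canonicalise a network in which some auxiliary traffic has already fired into the projection of an appropriately reduced \AIOC{}, and collect in $\mathcal{R}$ all pairs $(\tuple{\state', \rules', \mathcal{I}'}, \tuple{\rules', \net})$ for which $\upd(\net)$ matches $\proj(\mathcal{I}', \state')$ and the outstanding auxiliary events are consistent with the \AIOC{} event structure of Definition~\ref{def:ev} and its \APOC{} counterpart of Definition~\ref{def:apocev}, tied together by Lemma~\ref{lemma:ev}.

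The case analysis then proceeds on the topmost construct of $\mathcal{I}'$. The \AIOC{}-to-\APOC{} direction is routine for Interaction, If-then/If-else, While-unfold/While-exit, and Change-Updates: the projected network can fire the needed auxiliary broadcasts (labels of the form $\commLabel{o^*}{\role R_1}{X}{\role R_2}{}$ together with the induced $\tau$'s), then perform the matching visible action, and land in a network that $\upd$ collapses back onto the projection of the reduct, as required by the first clause of Definition~\ref{def:bisim}. In the opposite direction, any \APOC{} step is either a visible interaction on a programmer-specified operation, directly mirrored by the corresponding \AIOC{} rule, or an auxiliary communication / silent step that is absorbed by the second clause. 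Sequential composition is where connectedness (Definition~\ref{def:connectedness}) becomes essential: because every pair in $\transI(\mathcal{I}'')$ overlaps a pair of $\transF(\mathcal{I}')$ on at least one role, no action of $\pi(\mathcal{I}'', \role{S})$ can fire before $\pi(\mathcal{I}', \role{S})$ has truly terminated, so the global ordering imposed by $\did{\AIOC}{Sequence}$ and $\did{\AIOC}{Seq-end}$ is preserved pointwise. Freshness of indexes, on both the original program and on any admitted update, rules out the interference scenario of Example~\ref{example:indexes}, so communications prefixed by distinct $\idxSign{i}.o$ cannot spuriously synchronise.

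The main obstacle is the joint handling of $\did{\APOC}{Lead-\AdaptRule}$ and $\did{\AIOC}{\AdaptRule}$ together with the arbitrary occurrence of $\did{\APOC}{Change-Updates}$. At the \AIOC{} level the substitution of $\idx{i}\scope{l}{\role R}{\mathcal{I}}{\Delta}$ by an update $\mathcal{I}''$ is atomic, while at the \APOC{} level the coordinator $\role R$ non-atomically broadcasts $\pi(\mathcal{I}'', \role R_j)$ to every other participant via higher-order sends on $\AuxSb^*_\idxSign{i}$ and then barriers on $\AuxSe^*_\idxSign{i}$. I would show that the intermediate network obtained after all $\AuxSb^*_\idxSign{i}$ broadcasts, but before the closing barrier, coincides via $\upd$ with $\parallel_{\role S \in \rolesFunc(\mathcal{I}'')} \roleExec{\pi(\mathcal{I}'', \role S), \state_{\role S}}{\role S}$ placed in the appropriate continuation context. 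This step hinges on the fact that only the coordinator emits update payloads, so every participant receives projections of the same $\mathcal{I}''$, and on the connectedness and index-freshness premises of the rule, which let me invoke the inductive hypothesis on the substituted body while the environment is still free to change $\rules$ at any moment. Once this case is discharged, the deadlock-freedom, termination, and race-freedom corollaries of Section~\ref{subsec:deadlock} follow because weak system bisimilarity transfers these properties from the \AIOC{} side, where they hold by construction.
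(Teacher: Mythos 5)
Your proposal follows essentially the same route as the paper: an explicit relation built from the condition $\upd(\net)=\proj(\mathcal{I}',\state')$ together with the event-structure consistency conditions of Definitions~\ref{def:ev} and~\ref{def:apocev} tied by Lemma~\ref{lemma:ev}, discharged by structural induction with connectedness handling the sequence case and the coordinator-plus-fresh-indexes argument handling the non-atomic scope update. The paper additionally makes explicit the reduction (via Lemma~\ref{lemma:uptoupd}) from general pairs to the case $\net=\proj(\mathcal{I},\state)$, which your ``canonicalise with $\upd$'' step implicitly performs, so the two arguments coincide in substance.
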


As a corollary of the result above, a \AIOC{} system and its projection are
also trace equivalent. Trace equivalence is defined as follows.

\begin{definition}[Trace equivalence]\emph{
 A \AIOC{} system $\tuple{\state, \rules,{\mathcal I}}$ and a \APOC{}
 system $\tuple{ \rules,\net}$ are \emph{(weak) trace equivalent} iff
 their sets of (weak) traces coincide.
}\end{definition}

The following lemma shows that, indeed, weak system bisimilarity implies  weak
trace equivalence.

\begin{lemma}\label{lemma:bis2trsynch} Let
$\tuple{\state,\rules,{\mathcal I}}$ be a \AIOC{} system and
$\tuple{\rules',\net}$ a \APOC{} system. \\If $\tuple{\state,\rules,{\mathcal
I}} \bisim \tuple{\rules',\net}$ then the \AIOC{} system
$\tuple{\state,\rules,{\mathcal I}}$ and the \APOC{} system
$\tuple{\rules',\net}$ are weak trace equivalent.
\end{lemma}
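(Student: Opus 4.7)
The plan is to prove both inclusions $\mathrm{Traces}_w(\tuple{\state,\rules,{\mathcal I}}) \subseteq \mathrm{Traces}_w(\tuple{\rules',\net})$ and conversely by chasing the bisimulation $\mathcal{R}$ step by step along a given (strong) trace and then showing that, after removing the appropriate silent labels, the induced trace on the other side matches. Since weak bisimilarity is assumed, there is a bisimulation $\mathcal{R}$ containing the pair $(\tuple{\state,\rules,{\mathcal I}},\tuple{\rules',\net})$; I would keep a running pair of states in $\mathcal{R}$ and construct the matching trace incrementally.

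For the forward direction, let $w = \mu_1\mu_2\dots$ be a weak trace of the AIOC system, obtained from a strong trace $\mu_1'\mu_2'\dots$ by dropping the $\tau$'s. I would use the first clause of Definition~\ref{def:bisim} at each step: from the current related pair, the AIOC transition $\arro{\mu_i'}$ is matched by a finite APOC sequence $\arro{\eta_1^i}\cdots\arro{\eta_{k_i}^i}\arro{\mu_i'}$ where every $\eta_j^i$ is either $\tau$ or an auxiliary communication of the form $\commLabel{o^*}{\role R_1}{v}{\role R_2}{x}$ or $\commLabel{o^*}{\role R_1}{X}{\role R_2}{}$. Concatenating these APOC sequences yields a strong APOC trace; when I pass to the weak APOC trace, every $\eta_j^i$ is removed by definition, and on the transition producing label $\mu_i'$ the label is kept iff $\mu_i' \neq \tau$. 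The result is exactly $w$, so $w$ is a weak APOC trace.

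For the reverse direction, let $w'$ be a weak APOC trace, lifted to a strong trace $\eta_1\eta_2\dots$. Using the second clause of Definition~\ref{def:bisim}, at each step I would examine $\eta_i$: if the first sub-case applies, the AIOC performs $\arro{\eta_i}$ and I record $\eta_i$; if the second sub-case applies, then $\eta_i \in \{\commLabel{o^*}{\role R_1}{v}{\role R_2}{x},\commLabel{o^*}{\role R_1}{X}{\role R_2}{},\tau\}$ and the AIOC stays put, so nothing is recorded. Either way the pair of states remains in $\mathcal{R}$. The AIOC labels recorded this way form a strong AIOC trace; passing to the weak AIOC trace removes the $\tau$'s, while on the APOC side the weak trace already omits the labels handled by the second sub-case. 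A straightforward case analysis on the remaining labels (interactions on programmer-specified operations, $\mathcal{I}$, $\NoAdaptLabel$, $\rules$, and $\tick$) shows that the two weak traces coincide symbol by symbol.

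The main obstacle is not algebraic but rests on the careful handling of infinite traces and the asymmetric shape of $\mathcal{R}$: one must argue that the inductive construction can be carried out coinductively for infinite sequences without accumulating infinitely many silent/auxiliary moves between two consecutive observable labels in a way that fails to yield a genuine weak trace. This is harmless because at each step of the original trace the bisimulation supplies only \emph{finitely} many silent matching steps ($k_i < \infty$ in the forward direction; a single step in the reverse direction), so the constructed trace on the other side is well defined, and taking its weak projection gives back the original $w$ (resp.\ $w'$). Once this is set up, the remainder of the argument is bookkeeping over the label sets that the weak-trace definitions erase.
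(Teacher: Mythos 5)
Your proposal is correct and follows essentially the same route as the paper's own proof: a coinductive chase through the bisimulation, matching each transition, concatenating the finitely many auxiliary/silent steps supplied by the bisimulation clauses, and observing that the weak-trace projections erase exactly those. The paper states this more tersely (and leaves the reverse direction as "similar"), whereas you spell out the $\tau$-label bookkeeping and the finiteness of intermediate silent steps explicitly, but there is no substantive difference in method.
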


\proof The proof is by coinduction.  Take a \AIOC{} trace
$\mu_1,\mu_2,\dots$ of the \AIOC{} system. From bisimilarity, the \APOC{}
system has a sequence of transitions with labels $\eta_1,\ldots,\eta_k,\mu_1$
where $\eta_1,\ldots,\eta_k$ are weak transitions. Hence, the first label in
the weak trace is $\mu_1$.
After the transition with label $\mu_1$, the \AIOC{} system and the \APOC{}
system are again bisimilar. By coinductive hypothesis, the \APOC{} system has
a weak trace $\mu_2,\ldots$. By composition the \APOC{} system has a trace
$\mu_1,\mu_2,\dots$ as desired. The opposite direction is similar.\qed%end{proof}

Hence the following corollary holds.

\begin{corollary}[Trace Equivalence]
For each initial, connected \AIOC{} process ${\mathcal I}$, each state $\state$,
each set of updates $\rules$, the \AIOC{} system $\tuple{\state,
\rules,{\mathcal I}}$ and the \APOC{} system $\tuple{\rules,\proj({\mathcal
I},\state)}$ are weak trace equivalent.
\end{corollary}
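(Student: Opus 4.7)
The plan is to derive the corollary as an immediate consequence of the two results that have just been established, namely Theorem~\ref{teo:final} (weak system bisimilarity between a connected \AIOC{} system and its projection) and Lemma~\ref{lemma:bis2trsynch} (weak system bisimilarity implies weak trace equivalence). No further machinery is needed: the corollary is essentially the composition of these two facts, so the proof reduces to a short chaining argument.

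Concretely, I would proceed as follows. First, fix an arbitrary initial, connected \AIOC{} process ${\mathcal I}$, a state $\state$, and a set of updates $\rules$. By Theorem~\ref{teo:final}, the \AIOC{} system $\tuple{\state,\rules,{\mathcal I}}$ and the \APOC{} system $\tuple{\rules,\proj({\mathcal I},\state)}$ are weak system bisimilar, i.e., there is a weak system bisimulation ${\mathcal R}$ containing the pair $(\tuple{\state,\rules,{\mathcal I}},\tuple{\rules,\proj({\mathcal I},\state)})$. Second, apply Lemma~\ref{lemma:bis2trsynch} to this pair: since $\tuple{\state,\rules,{\mathcal I}} \bisim \tuple{\rules,\proj({\mathcal I},\state)}$, the lemma gives that the two systems have coinciding sets of weak traces, which by definition is precisely weak trace equivalence. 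Since $\mathcal I$, $\state$, and $\rules$ were arbitrary, the statement holds for every initial, connected \AIOC{} process, every state, and every set of updates, as required.

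There is effectively no obstacle in this proof: the hard work has already been done in establishing the bisimulation result (Theorem~\ref{teo:final}) and in the short coinductive argument of Lemma~\ref{lemma:bis2trsynch}. The only minor point worth mentioning in the write-up is that the hypothesis ``initial'' and ``connected'' required by Theorem~\ref{teo:final} is exactly the hypothesis assumed in the corollary, so no strengthening or weakening of assumptions is needed when chaining the two results. Thus the proof reduces to the single sentence: by Theorem~\ref{teo:final} the two systems are weak system bisimilar, and by Lemma~\ref{lemma:bis2trsynch} weak system bisimilarity implies weak trace equivalence, hence the conclusion follows.
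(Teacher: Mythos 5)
Your proof is correct and is exactly the paper's argument: the corollary is obtained by chaining Theorem~\ref{teo:final} (weak system bisimilarity of the \AIOC{} system and its projection) with Lemma~\ref{lemma:bis2trsynch} (bisimilarity implies weak trace equivalence). Your write-up merely spells out the instantiation in more detail, including the correct observation that the ``initial, connected'' hypotheses match those of the theorem, so no adjustment is needed.
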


\proof
  It follows from Theorem~\ref{teo:final} and Lemma~\ref{lemma:bis2trsynch}.
\qed

The following section presents the details of the proof of
Theorem~\ref{teo:final}. The reader not interested in such details can
safely skip \S~\ref{sec:detailed_proof} and go to \S~\ref{subsec:deadlock}.

\subsection{Detailed Proof of Correctness}\label{sec:detailed_proof}

In our proof strategy we rely on the distinctness of indexes of \AIOC{}
constructs that, unfortunately, is not preserved by transitions due to while
unfolding.

\begin{example}\label{example:while_indexes}\emph{
Consider the \AIOC $\whileKey{b \at \role{R}}{\idx{j}
\assign{x}{\role R}{e}}{i}$. If the condition $b$ evaluates to true, in one
step the application of Rule $\did{\APOC}{While-unfold}$ produces the \AIOC $\mathcal I$
below $$\idx{j}\assign{x}{\role R}{e} \seqOp \whileKey{b \at
\role{R}}{\idx{j}\assign{x}{\role R}{e}}{i}$$ where the index $\idxSign{j}$
occurs twice.	
}\end{example}

To solve this problem, instead of using indexes, we rely on \emph{global
indexes} built on top of indexes. Global indexes can be used both at the
\AIOC{} level and at the \APOC{} level and their distinctness is preserved by
transitions.

\begin{definition}[Global index]\emph{ Given an annotated \AIOC{} process
$\mathcal{I}$, or an annotated \APOC{} network $\net$, for each annotated
construct with index $\iota$ we define its global index $\xi$ as 
follows:
\begin{itemize}
\item if the construct is not in the body of a while loop then $\xi = \iota$;
\item if the innermost while construct that contains the considered construct
has global index $\xi'$ then the considered construct has global index $\xi =
\xi':\iota$.
\end{itemize}
}\end{definition}

\begin{example}\emph{
Consider the \AIOC $\mathcal I$ in Example~\ref{example:while_indexes}. The
first assignment with index $\idxSign j$ also has global index $\idxSign j$,
while the second assignment with index $\idxSign j$ has global index $\idx
i\idxSign j$, since this last assignment is inside a while loop with global
index $\idxSign i$.
}\end{example}

\begin{restatable}[Distinctness of Global Indexes]{lemma}{lemmaDistinct}\label{lemma:distinct}
Given a well-annotated \AIOC{} $\mathcal{I}$, a global state $\state$, and a set
of updates $\rules$, if $\tuple{\state, \rules,{\mathcal I}}
\arro{\eta_1}\ldots\arro{\eta_n} \tuple{\state', \rules',{\mathcal I}'}$ then
all global indexes in $\mathcal I'$ are distinct.
\end{restatable}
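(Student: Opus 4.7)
The plan is to proceed by induction on the number $n$ of transitions, with a case analysis on the rule used in the $n$-th transition. For the base case ($n = 0$), the AIOC $\mathcal{I}$ is well-annotated, so all its original indexes are distinct; since no while unfolding or update has taken place, each construct's global index is just its unique original index prefixed by the chain of its enclosing while indexes, which are themselves all distinct, so global indexes are pairwise distinct.

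For the inductive step, assume that after $n-1$ transitions the intermediate AIOC $\mathcal{I}''$ has distinct global indexes. Most rules --- Interaction, Assign, If-then, If-else, While-exit, NoUp, End, Seq-end, Par-end, Change-Updates, and contextually Sequence and Parallel --- either strictly remove constructs, rearrange structure without introducing any new indexed construct, or step inside a subterm, so the set of global indexes of $\mathcal{I}'$ is a subset of that of $\mathcal{I}''$ (with the assignment produced by Interaction inheriting the interaction's index, hence its global index), and distinctness is immediate from the inductive hypothesis. The interesting cases are Up and While-unfold. For Up, the scope $\idx{i}\scope{l}{\role R}{\mathcal{J}}{\Delta}$ is replaced by an update $\mathcal{I}_u$ with $\func{freshIndexes}(\mathcal{I}_u)$, which by definition guarantees that all indexes in $\mathcal{I}_u$ are fresh with respect to $\mathcal{I}''$; thus the global indexes arising from $\mathcal{I}_u$ (its indexes, prefixed by the same enclosing while chain as the former scope) are distinct from all other global indexes in $\mathcal{I}'$.

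The main obstacle is the While-unfold case. Here $\idx{i}\while{b \at \role R}{\mathcal{J}}$ with global index $\xi:\idxSign{i}$, where $\xi$ is the global-index prefix of the enclosing context, becomes $\mathcal{J} \seqOp \idx{i}\while{b \at \role R}{\mathcal{J}}$. The unfolded copy of $\mathcal{J}$ now sits outside the while $\idx{i}$, so each of its constructs with original index $j$ acquires global index $\xi:j$, whereas the matching construct in the preserved inner $\mathcal{J}$ retains global index $\xi:\idxSign{i}:j$; the two families of global indexes are therefore disjoint, and each is internally distinct by the inductive hypothesis applied to $\mathcal{I}''$. What remains is to rule out clashes between the new $\xi:j$ global indexes and those of constructs elsewhere in $\mathcal{I}'$. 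The key observation here is that the semantics of sequential composition --- with Sequence forbidding the second component to step and Seq-end atomically discarding the terminated first component --- implies that a given while construct has at most one previously unfolded body present in the AIOC at any time, and when While-unfold fires again that body must have already been consumed. Combined with well-annotation of the original AIOC and with $\func{freshIndexes}$ on past updates, this guarantees that no other construct in $\mathcal{I}''$ carries a global index of the form $\xi:j$ for any $j$ occurring in $\mathcal{J}$. Hence all global indexes in $\mathcal{I}'$ remain distinct, completing the induction.
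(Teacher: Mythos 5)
Your overall strategy -- induction on the number of transitions, with \ruleName{Up} discharged by $\func{freshIndexes}$ and \ruleName{While-unfold} identified as the one genuinely delicate case -- matches the paper's proof, and your analysis of why the unfolded copy cannot clash with the preserved body (prefix $\xi{:}j$ versus $\xi{:}\idxSign{i}{:}j$) is exactly right. However, there is a structural gap in how you close the \ruleName{While-unfold} case. Your stated inductive hypothesis is only that all \emph{global} indexes of $\mathcal{I}''$ are distinct, and that hypothesis does not rule out the scenario you yourself identify as the remaining danger: some construct elsewhere in $\mathcal{I}''$, at the same while-nesting prefix $\xi$, carrying a plain index $j$ that also occurs in the body $\mathcal{J}$. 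Such a construct would have global index $\xi{:}j$, perfectly distinct from the $\xi{:}\idxSign{i}{:}j$ inside the loop, so the IH is silent about it. You patch this with the observation that the Sequence/Seq-end semantics allows at most one unfolded copy of a loop body to be present, and that it must have been consumed before the loop can fire again. That observation is correct, but it is itself an invariant of \emph{reachable} terms -- a statement quantified over the whole computation history -- and it neither follows from your IH nor is proved anywhere in your argument. As written, the induction does not close.

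The paper repairs exactly this by strengthening the inductive hypothesis: it carries the invariant that \emph{plain indexes} are pairwise distinct, except possibly inside subterms of the shape $\mathcal{I};\whileKey{b \at \role R}{\mathcal{I}'}{i}$ where the duplication is confined to the unfolded body versus the loop body. From that invariant, distinctness of global indexes follows immediately (the duplicated occurrences differ in their while-prefix), and the \ruleName{While-unfold} case becomes easy: an \emph{enabled} while cannot sit to the right of a pending unfolded copy, so its body's indexes occur nowhere else, and the new term again has the permitted shape. Your proof needs the same move -- either adopt that strengthened invariant as the IH, or state and prove your ``at most one unfolded copy, consumed before re-unfolding'' claim as a separate lemma by its own induction on the transition sequence.
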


\begin{proof}
	The proof is by induction on the number $n$ of transitions. Details are in
	Appendix~\ref{appendix:proof}.
\end{proof}

Using global indexes we can now define event structures corresponding to the
execution of \AIOC{}s and \APOC{}s. We start by defining \AIOC{} events. Some
events correspond to transitions of the \AIOC{}, and we say that they are
enabled when the corresponding transition is enabled, executed when the
corresponding transition is executed.

\begin{definition}[\AIOC{} events]\label{def:ev}\emph{ We use $\ev$ to range over
events, and we write $[\ev]_{\role R}$ to highlight that event $\ev$ is
performed by role $\role R$.
An annotated \AIOC{} ${\mathcal I}$ contains the following events:}

\emph{{\bf Communication events:} a sending event $\xi : \co{o}{\role R_2}$ in role
$\role R_1$ and a receiving event $\xi: \ci{o}{\role R_1}$ in role $\role R_2$
for each interaction $\idx{i} \comm{o}{\role R_1}{e}{\role R_2}{x}$ with global
index $\xi$; we also denote the sending event as $f_\xi$ or $[f_\xi]_{\role
R_1}$ and the receiving event as $t_\xi$ or $[t_\xi]_{\role R_2}$. Sending and
receiving events correspond to the transition executing the interaction.}

\emph{{\bf Assignment events:} an assignment event $\ev_\xi$ in role $\role R$ for
each assignment $\idx{i} \assign{x}{\role R}{e}$ with global index $\xi$; the
event corresponds to the transition executing the assignment.}

\emph{{\bf Scope events:} a scope initialisation event $\uparrow_{\xi}$ and a scope
termination event $\downarrow_{\xi}$ for each scope $\idx{i} \scope{l}{\role
R}{\mathcal{I}}{\Delta}{A}$ with global index $\xi$. Both these events belong
to all the roles in $\rolesFunc(\mathcal{I})$. The scope initialisation event
corresponds to the transition performing or not performing an update on the
given scope. The scope termination event is just an auxiliary event (related to
the auxiliary interactions implementing the scope termination).}

\emph{{\bf If events:} a guard if-event $\ev_\xi$ in role $\role R$ for each
construct $\ifthenKey{b \at \role R}{\mathcal{I}}{\mathcal{I}'}{i}$ with global
index $\xi$; the guard-if event corresponds to the transition evaluating the
guard of the condition.}

\emph{{\bf While events:} a guard while-event $\ev_\xi$ in role $\role R$ for each
construct $\whileKey{b \at \role R}{\mathcal{I}}{i}$ with global index $\xi$;
the guard-while event corresponds to the transition evaluating the guard of the
while loop.}

\emph{Function $\event({\mathcal I})$ denotes the set of events of the annotated
\AIOC{} ${\mathcal I}$. A sending and a receiving event with the same global
index $\xi$ are called matching events. We denote with $\overline{\ev}$ an
event matching event $\ev$. A communication event is either a sending event or
a receiving event. A communication event is unmatched if there is no event
matching it.
}\end{definition}

As a corollary of Lemma~\ref{lemma:distinct} events have distinct names.
Note also that, for each while loop, there are events corresponding to the
execution of just one iteration of the loop. If unfolding is performed, new
events are created.

Similarly to what we have done for \AIOC, we can define events for \APOC as 
follows.

\begin{definition}[\APOC{} events]\label{def:apocev}\emph{
An annotated \APOC{} network $\net$ contains the following events:
\begin{description}
\item[Communication events]
a sending event $\xi: \co{o^?}{\role R_2}$ in role $\role R_1$ for each send
$\idx{\iota} \cout{\idxSign{i}.o^?}{e}{\role R_2}$  with global index $\xi$ in
role $\role R_1$; and a receiving event $\xi: \ci{o^?}{\role R_1}$ in role
$\role R_2$ for each receive $\idx{\iota} \cinp{\idxSign{i}.o^?}{x}{\role R_1}$
with global index $\xi$ in role $\role R_2$; we also denote the sending event
as $f_\xi$ or $[f_\xi]_{\role R_1}$; and the receiving event as $t_\xi$ or
$[t_\xi]_{\role R_2}$. Sending and receiving events correspond to the
transitions executing the corresponding communication.
\item[Assignment events] an assignment event $\ev_\xi$ in role $\role R$ for
each assignment $\idx{i} x = e$ with global index $\xi$; the event corresponds
to the transition executing the assignment.
\item[Scope events] a scope initialisation event $\uparrow_{\xi}$ and a scope
termination event $\downarrow_{\xi}$ for each $\pscope{i}{l}{\role
R}{P}{\Delta}{S}$ or $\psscope{i}{l}{\role R}{P}$ with global index $\xi$.
Scope events with the same global index coincide, and thus the same event may
belong to different roles; the scope initialisation event corresponds to the
transition performing or not performing an update on the given scope for the
role leading the update. The scope termination event is just an auxiliary event
(related to the auxiliary interactions implementing the scope termination).
\item[If events] a guard if-event  $\ev_\xi$ in role $\role R$ for each
construct $\ifthenKey{b}{P}{P'}{i}$ with global index $\xi$; the guard-if
event corresponds to the transition evaluating the guard of the condition.
\item[While events] a guard while-event  $\ev_\xi$ in role $\role R$ for
each construct $\whileKey{b}{P}{i}$ with global index $\xi$; the guard-while
event corresponds to the transition evaluating the guard of the while loop.
\end{description}
Let $\event(\net)$ denote the set of events of the network $\net$. A sending
and a receiving event with either the same global index $\xi$ or with global
indexes differing only for replacing index $\idxSign{i}_?$ with
$\idxTrueSign{i}$ or $\idxFalseSign i$ are called matching events. We denote
with $\overline{\ev}$ an event matching event $\ev$.
}\end{definition} 
With a slight abuse of notation, we write $\event(P)$ to
denote events originated by constructs in process $P$, assuming the network
$\net$ to be understood.
We use the same syntax for events of \AIOC{}s and of \APOC{}s. Indeed, the two
kinds of events are strongly related (cf.\ Lemma~\ref{lemma:ev}).

We define below a causality relation $\leqaioc$ among \AIOC{} events based on
the constraints given by the semantics on the execution of the corresponding
transitions.

\begin{definition}[\AIOC{} causality relation]\label{def:causalAIOC}\emph{ Let us
consider an annotated \AIOC{} ${\mathcal I}$.  A causality relation $\leqaioc
\, \subseteq \, \devent({\mathcal I}) \times \devent({\mathcal I})$ is the
minimum reflexive and transitive relation satisfying:}

\emph{{\bf Sequentiality:} let ${\mathcal I}' \seqOp {\mathcal I}''$ be a subterm
of \AIOC{} ${\mathcal I}$. If $\ev'$ is an event in ${\mathcal I}'$ and
$\ev''$ is an event in ${\mathcal I}''$, then $\ev'\leqaioc \ev''$.}

\emph{{\bf Scope:} let $\idx{i} \adapt{{\mathcal I}'}{l}\Delta{\role R}$ be a subterm
of \AIOC{} ${\mathcal I}$. If $\ev'$ is an event in ${\mathcal I}'$ then
$\uparrow_{\xi} \leqaioc \ev' \leqaioc \downarrow_{\xi}$.}

\emph{{\bf Synchronisation:} for each interaction the sending event precedes the
receiving event.}

\emph{{\bf If:} let $\ifthenKey{b \at \role{R}}{\mathcal{I}''}{\mathcal{I}''}{i}$ be a
subterm of \AIOC{} $\mathcal{I}$, let $\ev_\xi$ be the guard if-event in role
$\role R$, then for every event $\ev'$ in ${\mathcal I}'$ and for every event
$\ev''$ in ${\mathcal I}''$ we have $\ev_\xi \leqaioc \ev'$ and $\ev_\xi
\leqaioc \ev''$.}

\emph{{\bf While:} let $\whileKey{b \at \role R}{\mathcal{I}'}{i}$ be a subterm of
\AIOC{} $\mathcal{I}$, let $\ev_\xi$ be the guard while-event in role $\role
R$, then for every event $\ev'$ in ${\mathcal I}'$ we have $\ev_\xi
\leqaioc \ev'$.
}\end{definition}

As expected, the relation $\leqaioc$ is a partial order.

\begin{lemma}
	Let us consider an annotated $\AIOC$ $\mathcal{I}$. The relation $\leqaioc$
	among events of $\mathcal{I}$ is a partial order.
\end{lemma}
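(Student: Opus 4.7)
The plan is to verify the three characteristic properties of a partial order: reflexivity, transitivity, and antisymmetry. Reflexivity and transitivity come for free from the very definition of $\leqaioc$ as the smallest reflexive and transitive relation satisfying the listed constraints, so the only real content of the lemma is antisymmetry.

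For antisymmetry, the intuition is that every constraint generating $\leqaioc$ is strictly ``downward'' with respect to the syntactic structure of $\mathcal{I}$: Sequentiality moves from an event in $\mathcal{I}'$ to one in $\mathcal{I}''$ within $\mathcal{I}'\seqOp\mathcal{I}''$; Scope moves from $\uparrow_\xi$ through body events to $\downarrow_\xi$; Synchronisation moves from the sender of an interaction to the receiver of the same interaction; and If/While move from a guard event to events of the guarded subterm(s). Parallel composition adds no cross-constraint. Hence no cycle can form.

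I would make this precise by assigning to each event $\ev$ a \emph{syntactic location} $\ell(\ev)$ that records the position of the corresponding construct in the abstract syntax tree of $\mathcal{I}$, refined so that (i) the sending event of an interaction strictly precedes its receiving event, (ii) the initialisation event of a scope strictly precedes any event in its body, which in turn strictly precedes the termination event, and (iii) the guard event of a conditional or while construct strictly precedes events of the guarded subterm(s). By the Distinctness of Global Indexes lemma applied with zero transitions to the well-annotated $\mathcal{I}$, this location map is injective on $\event(\mathcal{I})$. I would then show, by structural induction on $\mathcal{I}$, that $\ev_1 \leqaioc \ev_2$ implies $\ell(\ev_1) \preceq \ell(\ev_2)$ for the natural partial order $\preceq$ on locations. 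For each generating constraint this is immediate from the construction of $\ell$, and the property is preserved under reflexive and transitive closure because $\preceq$ is itself a partial order. Thus $\ev_1 \leqaioc \ev_2$ and $\ev_2 \leqaioc \ev_1$ force $\ell(\ev_1)=\ell(\ev_2)$, and by injectivity $\ev_1=\ev_2$.

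The main obstacle I anticipate is the bookkeeping in the cases where the same global index $\xi$ underlies two distinct events, namely the sender and receiver of an interaction and the initialisation and termination of a scope. Discriminating these within $\ell$ is essential in order to preserve injectivity; once $\ell$ is defined so that interactions and scopes each contribute two distinct, strictly ordered locations, the remaining inductive arguments are routine case analyses on the generating constraints.
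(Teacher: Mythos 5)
Your proof is correct and rests on the same core observation as the paper's --- every clause generating $\leqaioc$ (sequentiality, scope, synchronisation, if, while) is oriented consistently with the syntactic structure of $\mathcal{I}$, and parallel composition contributes nothing --- but you formalise it differently. The paper proves antisymmetry by a direct structural induction on $\mathcal{I}$ showing that $\leqaioc$ contains no cycle: the inductive hypothesis gives acyclicity within each subterm, and the cross-subterm constraints all point one way, so no cycle can cross a constructor. You instead build an explicit order-embedding: a location map $\ell$ into an independently ordered set, injective on events and strictly monotone on every generating constraint, from which antisymmetry falls out of the antisymmetry of the target order. Your route is slightly heavier on bookkeeping (you must refine locations so that the send/receive pair of an interaction and the $\uparrow_\xi/\downarrow_\xi$ pair of a scope get distinct, ordered positions, and you must check injectivity, for which well-annotatedness is indeed needed --- without distinct indexes the lemma is actually false, e.g.\ two parallel sequences reusing the same pair of indexes in opposite orders yield a two-cycle), but it buys a reusable artefact: the map $\ell$ is essentially a topological numbering of $\devent(\mathcal{I})$, which could serve elsewhere, for instance when reasoning about which events are minimal. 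One caution: make sure $\preceq$ is defined \emph{independently} of $\leqaioc$ (e.g.\ as the order induced by a concrete traversal of the abstract syntax tree, or simply as $\leq$ on the naturals assigned by that traversal); if you were to define $\preceq$ as the closure of the images of the generating constraints, proving it is a partial order would just restate the original problem.
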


\begin{proof}
	A partial order is a relation which is reflexive, transitive, and
	antisymmetric. Reflexivity and transitivity follow by definition. We show
	antisymmetry by showing that $\leqaioc$ does not contain any cycle. The
	proof is by structural induction on the $\AIOC$ $\mathcal{I}$. 
	The base cases are interaction, assignment, $\one$, and $\zero$, which are
	all trivial.
	In the case of sequence, $\mathcal{I}';\mathcal{I}''$, by inductive
	hypothesis there are no cycles among the events of $\mathcal{I}'$ nor among
	events of $\mathcal I''$. Since all events of $\mathcal{I}'$ precede all
	events of $\mathcal{I}''$, there are no cycles among the events of
	$\mathcal{I}';\mathcal{I}''$.
	In the case of parallel composition, $\mathcal{I}' \parOpI \mathcal{I}''$,
	there is no relation between events in $\mathcal{I}'$ and in
	$\mathcal{I}''$, hence the thesis follows.
	In the case of conditional $\ifthenKey{b \at
	\role{R}}{\mathcal{I}'}{\mathcal{I}''}{i}$ there are no relations between
	events in $\mathcal{I}'$ and in $\mathcal{I}''$ and all the events follow
	the guard-if event. Hence the thesis follows.
	The cases of while and scope are similar.
\end{proof}

We can now define a causality relation $\leqapoc$ among $\APOC$ events.

\begin{definition}[\APOC{} causality relation]\label{def:causalapoc}\emph{ Let us
consider an annotated \APOC{} network $\net$.  A causality relation $\leqapoc
\, \subseteq \, \event(\net) \times \event(\net)$ is the minimum reflexive and
transitive relation satisfying:}

\emph{{\bf Sequentiality:} Let $P'\seqOp P''$ be a subterm of \APOC{} network $\net$.
If $\ev'$ is an event in $P'$ and $\ev''$ is an event in $P''$ then $\ev'
\leqapoc \ev''$.}

\emph{{\bf Scope:} Let $\pscope{i}{l}{\role R}{P}{\Delta}{S}$ or $\psscope{i}{l}{\role 
R}{P}$ be a
subterm of \APOC{} $\net$ with global index $\xi$. If $\ev'$
is an event in $P$ then $\uparrow_{\xi} \leqapoc \ev' \leqapoc
\downarrow_{\xi}$.}

\emph{{\bf Synchronisation:} For each pair of events $\ev$ and $\ev'$, $\ev \leq
\ev'$ implies $\overline{\ev} \leqapoc \ev'$.}

\emph{{\bf If:} Let $\ifthenKey{b}{P}{P'}{i}$ be a subterm of \APOC{} network $\net$
 with global index $\xi$, let $\ev_\xi$ be the guard if-event, then for every
 event $\ev$ in $P$ and for every event $\ev'$ in $P'$ we have $\ev_\xi \leqapoc
 \ev$ and $\ev_\xi \leqapoc \ev'$.}

\emph{{\bf While:} Let $\whileKey{b}{P}{i}$ be a subterm of \APOC{} network $\net$
with global index $\xi$, let $\ev_\xi$ be the guard while-event, then for every
event $\ev$ in $P$ we have $\ev_\xi \leqapoc \ev$.
}\end{definition}

On $\APOC$ networks obtained as projections of well-annotated $\AIOC$s the
relation $\leqapoc$ is a partial order, as expected. However, since this
result is not needed in the remainder of the paper, we do not present its proof.

There is a relation between $\AIOC$ events and causality relation and their
counterparts at the $\APOC$ level. Indeed, the events and causality relation
are preserved by projection.

\begin{restatable}{lemma}{lemmaEv}\label{lemma:ev}
Given a well-annotated connected \AIOC{} process
$\mathcal{I}$ and for each state $\Sigma$ the \APOC{} network
$\proj(\mathcal{I},\Sigma)$ is such that:
\begin{enumerate}
\item $\devent(\mathcal{I}) \subseteq \event(\proj(\mathcal{I},\Sigma))$;

\item $\forall\ \ev_1,\ev_2 \in \devent(\mathcal{I}). \ev_1 \leqaioc \ev_2
\Rightarrow  \ev_1 \leqapoc \ev_2 \vee \ev_1 \leqapoc \overline{\ev_2}$
\end{enumerate}
\end{restatable}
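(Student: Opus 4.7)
The plan is to prove both statements by structural induction on $\mathcal I$, using the syntax-directed shape of $\pi$ (Figure~\ref{fig:pi}) together with the minimality of the two causality relations.

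For part (1), I would proceed by cases on the outermost constructor of $\mathcal I$ and check that every kind of event listed in Definition~\ref{def:ev} is also generated by the corresponding \APOC{} construct per Definition~\ref{def:apocev}. An interaction projects to a send at the sender and a receive at the receiver, producing matching communication events with the same global index; an assignment projects to an assignment at its role; a scope projects to a scope on every involved role, all sharing the same global index, so the events $\uparrow_\xi$ and $\downarrow_\xi$ coincide at both levels; conditionals and while loops produce a guard event at the coordinator with the same global index as in the \AIOC{}. The auxiliary constructs introduced by $\pi$ on non-coordinator roles generate additional \APOC{} events, but these do not need to lie in $\devent(\mathcal I)$, so the inclusion is one-way as required.

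For part (2), I would show by induction on $\mathcal I$ that every pair generated by the clauses of Definition~\ref{def:causalAIOC} lies in $\leqapoc$ (possibly after replacing the right component by its matching event). The scope, if, and while clauses are almost immediate: $\pi$ encloses each role's contribution inside a scope, conditional, or while with the same index, so the \APOC{} rules reproduce the causality on the coordinator; on every other role the code starts with an auxiliary receive on $\AuxIf^*_{\idxSign i}$, $\AuxWb^*_{\idxSign i}$, or $\AuxSb^*_{\idxSign i}$ whose matching send sits under the corresponding \APOC{} construct at the coordinator, so the synchronisation rule of $\leqapoc$ propagates the coordinator's guard or scope event across roles. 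The synchronisation clause of $\leqaioc$ follows from reflexivity $t_\xi \leqapoc t_\xi$ and the \APOC{} synchronisation rule, yielding $f_\xi = \overline{t_\xi} \leqapoc t_\xi$.

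The delicate case is sequential composition $\mathcal I' \seqOp \mathcal I''$, which is the only clause that truly uses connectedness. When $\ev_1 \in \devent(\mathcal I')$ and $\ev_2 \in \devent(\mathcal I'')$ lie at a common role $\role R$, the \APOC{} sequentiality rule applied to $\pi(\mathcal I',\role R) \seqOp \pi(\mathcal I'',\role R)$ immediately yields $\ev_1 \leqapoc \ev_2$. In the cross-role case I plan to bridge through a role shared between a final pair of $\mathcal I'$ and an initial pair of $\mathcal I''$: one picks $\ev^\star$ in $\mathcal I'$ with $\ev_1 \leqaioc \ev^\star$ whose location appears in some pair of $\transF(\mathcal I')$, and $\ev^{\star\star}$ in $\mathcal I''$ with $\ev^{\star\star} \leqaioc \ev_2$ whose location appears in some pair of $\transI(\mathcal I'')$. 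Definition~\ref{def:connectedness} forces these two pairs to share a role $\role S$; the \APOC{} sequentiality rule applied to $\pi(\mathcal I',\role S) \seqOp \pi(\mathcal I'',\role S)$, together with the inductive hypothesis and the synchronisation rule of $\leqapoc$ used to absorb switches between an event and its match, then yields $\ev_1 \leqapoc \ev_2$ or $\ev_1 \leqapoc \overline{\ev_2}$. The main obstacle is the case analysis on the form of $\ev^\star$ and $\ev^{\star\star}$ when the shared pair is diagonal $\rolesFuncPair{\role R}{\role R}$ arising from a nested guard, scope, or assignment, since then the bridge must be chased through the induced \APOC{} control construct rather than via a direct matching communication.
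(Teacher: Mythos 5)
Your proposal follows essentially the same route as the paper's proof: part (1) is immediate from the definition of $\pi$, and part (2) proceeds by cases on the clause of Definition~\ref{def:causalAIOC} generating the dependency, with connectedness used only in the sequential case to obtain a shared role through which \APOC{} sequentiality and the synchronisation rule bridge the two events. The paper organises the cross-role bridge as two nested structural inductions (locating an initial event of $\mathcal I''$ below $\ev_2$, and a final event of $\mathcal I'$ above $\ev_1$ where the latter relation is established directly at the $\leqapoc$ level, since $\transF$ may point at auxiliary communications with no \AIOC{} counterpart), which is the same idea you sketch via $\transI$/$\transF$ membership and your closing caveat about chasing the bridge through the induced \APOC{} control constructs.
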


\begin{proof}
The events of a \APOC obtained by projecting a \AIOC $\mathcal I$ are included
in the events of the \AIOC $\mathcal I$ by definition of projection. The
preservation of the causality relation can be proven by a case analysis on the
condition used to derive the dependency (i.e., sequentiality, scope,
synchronisation, if and while). Details are in Appendix~\ref{appendix:proof}.
\end{proof}
\noindent
To complete the definition of our event structure we now define a notion of
conflict between (\AIOC{} and \APOC{}) events, relating events which are in
different branches of the same conditional.

\begin{definition}[Conflicting events]\label{def:conflict}\emph{ Given a \AIOC{}
process $\mathcal{I}$, two events $\ev, \ev' \in
\devent(\mathcal{I})$ are conflicting if they belong to different branches of
the same conditional, i.e., there exists a subprocess $\ifthenKey{b@\role
R}{\mathcal{I}'}{\mathcal{I}''}{i}$ of $\mathcal{I}$ such that $\ev
\in \devent(\mathcal{I}') \wedge \ev'
\in \devent(\mathcal{I}'')$ or $\ev'
\in \devent(\mathcal{I}') \wedge \ev
\in \devent(\mathcal{I}'')$.}

\emph{Similarly, given a \APOC{} network $\net$, we say that two events $\ev, \ev'
\in \event(\net)$ are conflicting if they belong to different branches of the
same conditional, i.e., there exists a subprocess $\ifthenKey{b}{P}{P'}{i}$ of
$\net$ such that $\ev
\in \event(P) \wedge \ev'
\in \event(P')$ or $\ev'
\in \event(P) \wedge \ev
\in \event(P')$.
}\end{definition}

Similarly to what we did for \AIOC{}s, we define below well-annotated
\APOC{}s. Well-annotated \APOC{}s include all \APOC{}s obtained by projecting
well-annotated \AIOC{}s. As stated in the definition below, and proved in
Lemma~\ref{lemma:IOCwell}, well-annotated \APOC{}s enjoy various properties
useful for our proofs.

\begin{definition}[Well-annotated \APOC{}]\emph{
\label{defin:synchwa} An annotated \APOC{} network $\net$ is well annotated
for its causality relation $\leqapoc$ if the following conditions hold:
\begin{description}
\item[{\sc C1}] for each global index $\xi$ there are at most two
communication events on programmer-specified operations with global index $\xi$ and, in this
case, they are matching events;
\item[{\sc C2}] only events which are minimal according to $\leqapoc$ may
correspond to enabled transitions;
\item[{\sc C3}] for each pair of non-conflicting sending events
$[f_\xi]_{\role R_1}$ and $[f_{\xi'}]_{\role R_1}$ on the same operation
$\idxSign{i}.o^?$ with the same target $\role R_2$ such that $\xi \neq \xi'$ we
have $[f_\xi]_{\role R_1} \leqapoc [f_{\xi'}]_{\role R_1}$ or $[f_{\xi'}]_{\role
R_1} \leqapoc [f_\xi]_{\role R_1}$;
\item[{\sc C4}] for each pair of non-conflicting receiving events
$[t_\xi]_{\role R_2}$ and $[t_{\xi'}]_{\role R_2}$ on the same operation
$\idxSign{i}.o^?$ with the same sender $\role R_1$ such that $\xi \neq \xi'$ we
have $[t_\xi]_{\role R_2} \leq [t_{\xi'}]_{\role R_2}$ or $[t_{\xi'}]_{\role
R_2} \leq [t_\xi]_{\role R_2}$;
\item[{\sc C5}] if $\ev$ is an event inside a scope with global index $\xi$
then its matching events $\overline{\ev}$ (if they exist) are inside a scope
with the same global index.
\item[{\sc C6}] if two events have the same index but different global
indexes then one of them, let us call it $\ev_1$, is inside the body of a while
loop with global index $\xi_1$ and the other, $\ev_2$, is not. Furthermore,
$\ev_2 \leqapoc \ev_{\xi_1}$ where $\ev_{\xi_1}$ is the guarding while-event of
the while loop with global index $\xi_1$.
\end{description}
}\end{definition}

\noindent Since scope update, conditional, and iteration at the \AIOC{} level happen in
one step, while they correspond to many steps of the projected \APOC, we
introduce a function, denoted $\upd$, that bridges this gap. More precisely,
function $\upd$ is obtained as the composition of two functions, a function
$\Prop$ that completes the execution of \AIOC{} actions which have already
started, and a function $\ssim$ that eliminates all the auxiliary closing
communications of scopes (scope execution introduces in the \APOC auxiliary
communications which have no correspondence in the \AIOC).

\begin{definition}[$\upd$ function]\label{def:upd}\emph{ Let $\net$ be an annotated
\APOC{} (we drop indexes if not relevant). The $\upd$ function is defined
as the composition of a function $\Prop$ and a function $\ssim$. Thus,
$\upd(\net) = \ssim(\Prop(\net))$. Network $\Prop(\net)$ is obtained from
$\net$ by repeating the following operations while possible.
\begin{enumerate}[widest=10]
\item Performing the reception of the positive evaluation of the guard of a 
while loop, by replacing for every
$\idxSign{i}.\cout{\AuxWb^*_\idxSign{i}}{\trueC}{\role R'}$ enabled, all
the terms $$\cinp{\idxSign{i}.\AuxWb^*_\idxSign{i}}{x_\idxSign{i}}{\role
R}\seqOp  \while{x_\idxSign{i}}{P\seqOp
\cout{\idxSign{i}.\AuxWe^*_\idxSign{i}}{\okC}{\role R}\seqOp
\cinp{\idxSign{i}.\AuxWb^*_\idxSign{i}}{x_\idxSign{i}}{\role R}}$$
not inside another while construct, with 
$$\qquad P\seqOp
\cout{\idxSign{i}.\AuxWe^*_\idxSign{i}}{\okC}{\role R}\seqOp
\cinp{\idxSign{i}.\AuxWb^*_\idxSign{i}}{x_\idxSign{i}}{\role
R}\seqOp\while{x_\idxSign{i}}{P\seqOp
\cout{\idxSign{i}.\AuxWe^*_\idxSign{i}}{\okC}{\role R}\seqOp
\cinp{\idxSign{i}.\AuxWb^*_\idxSign{i}}{x_\idxSign{i}}{\role R}}$$
and replace $\cout{\idxSign{i}.\AuxWb^*_\idxSign{i}}{\trueC}{\role R'}$
with $\one$.
\item Performing the reception of the negative evaluation of the guard of a
while loop by replacing, for every
$\cout{\idxSign{i}.\AuxWb^*_\idxSign{i}}{\falseC}{\role R'}$ enabled, all
the terms $$\cinp{\idxSign{i}.\AuxWb^*_\idxSign{i}}{x_\idxSign{i}}{\role
R}\seqOp  \while{x_\idxSign{i}}{P\seqOp
\cout{\idxSign{i}.\AuxWe^*_\idxSign{i}}{\okC}{\role R}\seqOp
\cinp{\idxSign{i}.\AuxWb^*_\idxSign{i}}{x_\idxSign{i}}{\role R}}$$ not inside
another while construct, with $\one$, and replace
$\cout{\idxSign{i}.\AuxWb^*_\idxSign{i}}{\falseC}{\role R'}$ with $\one$.\item  Performing the unfolding of a while loop by replacing every
$$\while{x_\idxSign{i}}{P\seqOp
\cout{\idxSign{i}.\AuxWe^*_\idxSign{i}}{\okC}{\role R}\seqOp
\cinp{\idxSign{i}.\AuxWb^*_\idxSign{i}}{x_\idxSign{i}}{\role R}}$$ enabled not
inside another while construct, such that $x_\idxSign{i}$ evaluates to
\texttt{true} in the local state, with 
$$\qquad P\seqOp\cout{\idxSign{i}.\AuxWe^*_\idxSign{i}}{\okC}{\role
R}\seqOp
\cinp{\idxSign{i}.\AuxWb^*_\idxSign{i}}{x_\idxSign{i}}{\role
R}\seqOp\while{x_\idxSign{i}}{P\seqOp
\cout{\idxSign{i}.\AuxWe^*_\idxSign{i}}{\okC}{\role R}\seqOp
\cinp{\idxSign{i}.\AuxWe^*_\idxSign{i}}{x_\idxSign{i}}{\role R}}$$\item Performing the termination of while loop by replacing every
$$\while{x_\idxSign {i}}{P\seqOp
\cout{\idxSign{i}.\AuxWe^*_\idxSign{i}}{\okC}{\role R}\seqOp
\cinp{\idxSign{i}.\AuxWb^*_\idxSign{i}}{x_\idxSign{i}}{\role R}}$$ enabled not
inside another while construct, such that $x_\idxSign{i}$ evaluates to
$\falseC$ in the local state, with $\one$.
\item Performing the reception of the positive evaluation of the guard of 
a conditional by replacing, for every
$\cout{\idxSign{i}.\AuxIf^*_\idxSign{i}}{\trueC}{\role R'}$ enabled, all
the terms $$\cinp{\idxSign{i}.\AuxIf^*_\idxSign{i}}{x_\idxSign{i}}{\role
R}\seqOp
\ifthen{x_\idxSign{i}}{P'}{P''}$$ not inside a while construct with $P'$, and
replace $\cout{\idxSign{i}.\AuxIf^*_\idxSign{i}}{\trueC}{\role R'}$ with
$\one$.
\item Performing the reception of the negative evaluation of the guard of a
conditional by replacing, for every
$\cout{\idxSign{i}.\AuxIf^*_\idxSign{i}}{\falseC}{\role R'}$ enabled, all
the terms $$\cinp{\idxSign{i}.\AuxIf^*_\idxSign{i}}{x_\idxSign{i}}{\role
R}\seqOp \ifthen{x_\idxSign{i}}{P'}{P''}$$ not inside a while construct, with
$P''$, and replace
$\cout{\idxSign{i}.\AuxIf^*_\idxSign{i}}{\falseC}{\role R'}$ with $\one$.
\item Performing the selection of the ``then'' branch by replacing every
$$\ifthen{x_\idxSign{i}}{P'}{P''}$$ enabled, such that $x_\idxSign{i}$
evaluates to $\trueC$ in the local state, with $P'$.
\item Performing the selection of the ``else'' branch by replacing every
$$\ifthen{x_\idxSign{i}}{P'}{P''}$$ enabled, such that $x_\idxSign{i}$
evaluates to $\falseC$ in the local state, with $P''$.
\item Performing the communication of the updated code by replacing, for every \\
$\cout{\idxSign{i}.\AuxSb^*_\idxSign{i}}{P}{\role S}$ enabled, all the terms
$$\psscope{i}{l}{\role R}{P'}$$ in role $\role S$ not inside a while construct
with $P$, and replace $\cout{\idxSign{i}.\AuxSb^*_\idxSign{i}}{P}{\role S}$ with
$\one$.
\item Performing the communication that no update is needed by replacing, for 
each \\$\cout{\idxSign{i}.\AuxSb^*_{\idxSign i}}{\noC}{\role S}$
enabled, all the terms $$\psscope{i}{l}{\role R}{P'}$$ in role $\role S$
not inside a while construct with $P'$, and replace
$\cout{\idxSign{i}.\AuxSb^*_{\idxSign{i}}}{P}{\role S}$ with $\one$.
\end{enumerate}\mbox{}\\
\noindent
Network $\ssim(\net)$ is obtained from $\net$ by repeating the following
operations while possible:
\begin{itemize}\item Removing the auxiliary communications for end of scope and end of
while loop synchronisation by replacing each
$$\begin{array}{ll}
\cout{\idxSign{i}.\AuxSe^*_\idxSign{i}}{\okC}{\role R} 
\qquad & \qquad
\cout{\idxClose{i} \idxSign{i}.\AuxWe^*_\idxSign{i}}{\okC}{\role R}
\\
\cinp{\idxSign {i}.\AuxSe^*_\idxSign{i}}{\_}{\role R}
\qquad & \qquad
\cinp{\idxClose{i} \idxSign{i}.\AuxWe^*_\idxSign{i}}{\_}{\role R}	
\end{array}$$
not inside a while
construct with $\one$.
\end{itemize} Furthermore $\ssim$ may apply 0 or more times the following
operation:
\begin{itemize}
\item \label{pruning_def:one}replace a subterm $\one \seqOp P$ by $P$ or a
subterm $\one \mid P$ by $P$.
\end{itemize}
}\end{definition}

Note that function $\Prop$ does not reduce terms inside a while construct.
Assume, for instance, to have an auxiliary send targeting a receive inside the
body of a while loop. These two communications should not interact since they
have different global indexes. This explains why we exclude terms inside the
body of while loops.

We proceed now to prove some of the proprieties of \AIOC and \APOC. The first
result states that in a well-annotated \APOC{} only transitions corresponding
to events minimal with respect to the causality relation $\leqapoc$ may be
enabled.

\begin{lemma}\label{lemma:minimalevent} If $\net$ is a \APOC{}, $\leqapoc$
its causality relation and $\ev$ is an event corresponding to a transition
enabled in $\net$ then $\ev$ is minimal with respect to  $\leqapoc$.
\end{lemma}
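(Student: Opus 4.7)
My approach is to argue by contradiction: suppose $\ev$ corresponds to a transition enabled in $\net$ but is not minimal, so there is some $\ev' \leqapoc \ev$ with $\ev'$ a strict predecessor of $\ev$ (i.e., $\ev' \neq \ev$ and not merely a matching event of $\ev$). I would proceed by induction on the derivation of $\ev' \leqapoc \ev$ from the defining clauses of Definition~\ref{def:causalapoc}, showing that in each case the existence of such a predecessor forces $\ev$ to be structurally blocked in $\net$, contradicting the assumption that a transition on $\ev$ is enabled by the rules of Figures~\ref{fig:apoc-proc_1}, \ref{fig:apoc-proc_2} and~\ref{fig:apoc-sys}.

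The structural clauses are handled uniformly by appealing to the shape of the operational semantics. For \textbf{sequentiality} ($\ev' \in P'$ and $\ev \in P''$ for some subterm $P' \seqOp P''$ of $\net$), reaching $\ev$ requires the semantics to expose $P''$; inspecting Figure~\ref{fig:apoc-proc_1} this is only possible via Seq-end, whose premise $\roleExec{P'}{R} \arro{\tick} \ldots$ forces $P'$ to be a composition of $\one$'s (since $\tick$ is introduced only by Rule One and propagated by Seq-end and Par-end), and $\one$ carries no events per Definition~\ref{def:apocev}, contradicting $\ev' \in P'$. For the \textbf{scope} clause ($\ev' = \uparrow_{\xi}$ with $\ev$ inside the scope body), the body of a scope is only exposed once the scope-initialisation event has fired via one of Lead-Up, Lead-NoUp, Up, NoUp in Figure~\ref{fig:apoc-proc_2}, so $\ev$ cannot be enabled while the scope subterm is still present. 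For the \textbf{if} and \textbf{while} clauses ($\ev'$ is the guarding event, $\ev$ is inside a branch or the body), the argument is analogous: If-Then/If-Else and While-unfold/While-exit consume the guard event before any branch or body event becomes enabled.

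The delicate case, and the one I expect to require the most care, is the \textbf{synchronisation} clause, where $\ev' = \overline{\alpha}$ for some $\alpha$ with $\alpha \leqapoc \ev$. The subtlety is that this clause does not introduce a predecessor from scratch but propagates an existing causal pair across matching events, so the right way to exploit it is to invoke the inductive hypothesis on the strictly shorter derivation of $\alpha \leqapoc \ev$: if $\alpha$ is a strict predecessor of $\ev$ the IH directly yields the contradiction, while if $\alpha$ coincides with $\ev$ then $\ev' = \overline{\ev}$, which was excluded from the outset because matching events play the role of a single atomic action at the \APOC{} level (this is the reason why minimality is understood modulo matching, consistent with the intended semantics of $\leqapoc$). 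The \textbf{transitivity} and \textbf{reflexivity} cases reduce to the above by decomposing $\ev' \leqapoc \gamma \leqapoc \ev$ and applying the IH to whichever step is non-trivial. Assembling the cases, no strict predecessor of an enabled event can exist, establishing minimality.
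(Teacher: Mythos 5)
Your proof is correct and its core is the same as the paper's: argue by contradiction and perform a case analysis on the clause of Definition~\ref{def:causalapoc} that generated the offending predecessor, discharging each structural clause (sequentiality, scope, if, while) by inspecting the operational semantics. The genuine difference lies in how the derived clauses are treated. The paper picks a predecessor whose derivation has minimal length, asserts that this derivation must have length one, and then examines only the four structural clauses, never mentioning synchronisation, reflexivity, or transitivity. This leaves a gap: the synchronisation clause applied to the reflexive instance $\ev \leqapoc \ev$ yields $\overline{\ev} \leqapoc \ev$ for every matched communication event, so an enabled send or receive always has a non-trivial predecessor and the lemma, read literally, fails. You confront this by inducting on the full derivation and by reading minimality modulo matching events --- which is exactly the reading the paper implicitly relies on later, e.g.\ in the sequential-composition case of the bisimulation proof, where what is used is that \emph{at least one} of a matching send/receive pair is non-minimal. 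What the paper's version buys is brevity; what yours buys is an actual justification of the step the paper elides. The one soft spot in your argument is the transitivity case when the intermediate event $\gamma$ equals $\overline{\ev}$: there you obtain a shorter derivation of $\ev' \leqapoc \overline{\ev}$, but your inductive hypothesis speaks about predecessors of the enabled event $\ev$, not of $\overline{\ev}$, so ``apply the IH to whichever step is non-trivial'' does not quite go through as stated. Strengthening the induction hypothesis to exclude strict non-matching predecessors of both $\ev$ and $\overline{\ev}$ simultaneously repairs this; it is a bookkeeping issue rather than a flaw in the approach.
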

\begin{proof} The proof is by contradiction. Suppose $\ev$ is enabled but not
minimal, i.e., there exists $\ev'$ such that $\ev' \leqapoc \ev$. If there is
more than one such $\ev'$ consider the one such that the length of the
derivation of $\ev' \leqapoc \ev$ is minimal. This derivation should have
length one, and following  Definition~\ref{def:causalapoc} it may result from
one of the following cases:
\begin{itemize}
\item Sequentiality: $\ev' \leqapoc \ev$ means that $\ev' \in \event(P')$,
$\ev \in \event(P'')$, and $P' \seqOp P''$ is a subterm of $\net$. Because of
the semantics of sequential composition $\ev$ cannot be enabled.
\item Scope: let $\pscope{i}{l}{\role R}{P}{\Delta}{S}$ or
$\psscope{i}{l}{\role R}{P}$ be a subprocess of $\net$ with global index
$\xi$. We have the following cases:
\begin{itemize}
\item $\ev' = \uparrow_{\xi}$ and $\ev \in \event(P)$, and this implies that
$\ev$ cannot be enabled since if $\ev'$ is enabled then the Rule
$\did{\APOC}{\AdaptRule}$ or Rule $\did{\APOC}{\NoAdaptRule}$ for starting the
execution of the scope have not been applied yet;
\item $\ev' = \uparrow_{\xi}$ and $\ev =
\downarrow_{\xi}$, or $\ev' \in \event(P)$ and $\ev =
\downarrow_{\xi}$: this is trivial, since  $\downarrow_{\xi}$ is an auxiliary
event and no transition corresponds to it;
\end{itemize}
\item If: $\ev' \leqapoc \ev$ means that $\ev'$ is the evaluation of the guard
of a subterm \\$\ifthenKey{x_\idxSign{i}}{P'}{P''}{i}$ and $\ev \in \event(P')
\cup \event(P'')$. Event $\ev$ cannot be enabled because of the semantics of
conditionals.
\item While: $\ev' \leqapoc \ev$ means that $\ev'$ is the evaluation of the
guard of a subterm \\$\whileKey{x_\idxSign{i}}{P}{i}$ and $\ev \in \event(P)$.
Event $\ev$ cannot be enabled because of the semantics of the while loop.\qedhere
\end{itemize}
\end{proof}

\noindent We now prove that all the \APOC{}s obtained as projection of well-annotated
connected \AIOC{}s are well-annotated.

\begin{restatable}{lemma}{lemmaIOCwell}\label{lemma:IOCwell}
Let ${\mathcal I}$ be a well-annotated connected \AIOC{} process and $\Sigma$ 
a state. Then the projection $\net = \proj({\mathcal I},\Sigma)$ is a 
well-annotated \APOC{} network with respect to $\leqapoc$.
\end{restatable}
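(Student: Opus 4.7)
The proof proceeds by structural induction on the \AIOC{} process $\mathcal{I}$, verifying each of the six conditions C1--C6 of Definition~\ref{defin:synchwa} for the projected network $\net = \proj(\mathcal{I},\Sigma)$. Three conditions require very little work. For C1, the projection produces exactly one send and one receive per \AIOC{} interaction, both carrying the same global index inherited from the source construct; well-annotatedness of $\mathcal{I}$ together with Lemma~\ref{lemma:distinct} forces global indexes to be distinct, so C1 holds. Condition C2 is exactly the content of Lemma~\ref{lemma:minimalevent} and carries over verbatim. For C5, inspection of the projection clauses shows that every event coming from the body of a scope is placed inside the projected scope construct in each participating role, so matching communication events always share the global index of their enclosing scope.

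For C6 the key observation is that, since $\mathcal{I}$ is well annotated, no two source constructs share an index, so two \APOC{} events with the same index but distinct global indexes must be two copies of a single construct cloned by an enclosing while loop of global index $\xi_1$. By the very definition of global index, exactly one of the copies lies strictly inside the body of that loop, and by the \textbf{While} clause of Definition~\ref{def:causalapoc} every event in that body follows the guard while-event $\ev_{\xi_1}$, which gives precisely the causal chain demanded by C6.

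The heart of the proof is C3 and C4, which require a causal order between two non-conflicting communication events on the same prefixed operation with the same peer. Here the prefixing with the source index $\idxSign{i}$ is decisive: it forces such events to originate either from a single interaction of $\mathcal{I}$ (for programmer-specified operations) or from a single scope/conditional/while construct (for the auxiliary operations $\AuxSb^*_{\idxSign{i}}$, $\AuxIf^*_{\idxSign{i}}$, $\AuxWb^*_{\idxSign{i}}$, $\AuxWe^*_{\idxSign{i}}$). Within one copy of such a construct the projection either generates a parallel block over distinct targets, or places the candidate sends in the two branches of a conditional (hence conflicting), or sequences a $\trueC$ send inside the body of a while loop with a $\falseC$ send after it (ordered by the \textbf{Sequentiality} clause of Definition~\ref{def:causalapoc}); any residual multiplicity comes from the cloning of constructs inside an enclosing while loop, where the two copies are again sequentially ordered by the unfolding. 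The main obstacle, which I expect to require the most careful bookkeeping, is the inductive step for sequential composition $\mathcal{I}'\seqOp\mathcal{I}''$: here connectedness of $\mathcal{I}$ must be used explicitly so that the \textbf{Synchronisation} clause, applied to roles shared between the final events of $\mathcal{I}'$ and the initial events of $\mathcal{I}''$, propagates the locally established orderings across the composition and rules out the pathological cross-block unordered pairs that would otherwise break C3/C4.
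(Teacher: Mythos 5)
Your overall strategy matches the paper's: a direct verification of conditions C1--C6 of Definition~\ref{defin:synchwa}, with C2 delegated to Lemma~\ref{lemma:minimalevent}, C1 and C5 read off the projection, C6 traced to the duplicated auxiliary $\AuxWb^*_{\idxSign{i}}$ receives in the projection of a while loop, and C3/C4 handled by a case analysis on whether the two same-role events sit in a sequence, a parallel, or the two branches of a conditional. (The paper presents this as a case analysis on the conditions rather than as a structural induction, but that is cosmetic.)

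The one genuine misstep is your treatment of the sequential-composition case of C3/C4. The two events in question are by hypothesis in the \emph{same} role $\role R$, and the projection is homomorphic on sequencing, so they land in a term $P \seqOp P'$ of $\role R$'s local process; the \textbf{Sequentiality} clause of Definition~\ref{def:causalapoc} then orders them directly, with no appeal to connectedness or to the \textbf{Synchronisation} clause. Connectedness plays no role in the paper's proof of this lemma at all --- it is needed for Lemma~\ref{lemma:ev}, where one must relate events residing in \emph{different} roles across a sequential composition. Worse, the detour you propose would only deliver an ordering of the form $\ev_1 \leqapoc \ev_2$ or $\ev_1 \leqapoc \overline{\ev_2}$, i.e.\ an ordering possibly involving the matching event, which is weaker than the ordering between the two sending (resp.\ receiving) events themselves that C3 and C4 demand. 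So the ``main obstacle'' you anticipate is not an obstacle, and the machinery you reach for would not close the case cleanly; the role-local sequentiality argument does.
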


\begin{proof}
We have to prove that $\proj(\mathcal{I},\Sigma)$ satisfies the conditions of
Definition~\ref{defin:synchwa}. We have a case for each condition. Details in
Appendix~\ref{appendix:proof}.
\end{proof}

The next lemma shows that for every starting set of updates $\rules$ the
\APOC{} $\net$ and $\upd(\net)$ have the same set of weak traces.

\begin{lemma}\label{lemma:uptoupd} Let $\net$ be a \APOC{}. The following
properties hold:
\begin{enumerate}
\item\label{cond:upd} if $\tuple{\ambientN,\upd(\net)} \arro{\eta}
\tuple{\ambientN',\net'}$ with $\eta \in \{\commLabel{o^?}{\role R_1}{v}{\role
R_2}{x},
\commLabel{o^*}{\role R_1}{X}{\role R_2}{}, \rules', \tick, {\mathcal
I},\\\NoAdaptLabel, \tau \}$ then there exists $\net''$
such that $\tuple{\ambientN,\net}\arro{\eta_1}
\dots \arro{\eta_k} \arro{\eta}
\tuple{\ambientN',\net''}$ where $\eta_{i} \in \{\commLabel{o^*}{\role
R_1}{v}{\role R_2}{x},\commLabel{o^*}{\role R_1}{X}{\role R_2}{},\tau\}$ for
each $i \in \{1,\ldots,k\}$ and $\upd(\net'') = \upd(\net')$.
\item\label{cond:no_upd} if $\tuple{\ambientN,\net} \arro{\eta}
\tuple{\ambientN',\net'}$ for $\eta \in \{\commLabel{o^?}{\role R_1}{v}{\role
R_2}{x},\commLabel{o^*}{\role R_1}{X}{\role R_2}{}, \rules', \tick, {\mathcal
I},\NoAdaptLabel,
\tau\}$, then one of the following holds:
\begin{enumerate}
	\item\label{cond:case_private} $\upd(\net) = \upd(\net')$ and $\eta \in
	\{\commLabel{o^*}{\role R_1}{v}{\role R_2}{x},\commLabel{o^*}{\role
	R_1}{X}{\role R_2}{},\tau\}$, or
	\item\label{cond:case_one_step} $\tuple{\ambientN,\upd(\net)} \arro{\eta}
	\tuple{\ambientN',\net''}$ such that $\upd(\net') = \upd(\net'')$.
\end{enumerate}
\end{enumerate}
\end{lemma}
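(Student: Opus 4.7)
The plan is to prove conditions \ref{cond:upd} and \ref{cond:no_upd} separately, by induction on the length of the rewriting derivation that defines $\upd$ and by case analysis on the label $\eta$. The key observation is that every rewriting step of the procedure $\upd = \ssim \circ \Prop$ corresponds, at the level of the $\APOC$ semantics of Figures~\ref{fig:apoc-proc_1} and \ref{fig:apoc-proc_2}, either to a silent transition ($\tau$) or to a higher-order/auxiliary communication on one of the operations $\AuxSb^*, \AuxSe^*, \AuxWb^*, \AuxWe^*, \AuxIf^*$. Hence $\upd$ can be viewed as eagerly firing a specific subset of transitions that are invisible in any weak trace, and the lemma is essentially a semantic counterpart of this fact.

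For condition \ref{cond:upd}, I would first exhibit, by induction on the number of rewriting steps performed when reducing $\net$ to $\upd(\net)$, a sequence of $\APOC$ transitions $\tuple{\ambientN,\net} \arro{\eta_1} \cdots \arro{\eta_k} \tuple{\ambientN,\net^\star}$ in which every $\eta_i$ has one of the auxiliary/silent shapes required by the statement and with $\upd(\net^\star) = \upd(\net)$. Each clause of $\Prop$ can be matched against concrete semantics rules: the while-guard clauses combine a $\did{\APOC}{Synch}$ on $\AuxWb^*$ with $\did{\APOC}{If-Then}/\did{\APOC}{If-Else}$ of the auxiliary local conditional produced by projection; the conditional-guard clauses do the same with $\AuxIf^*$; the scope-update clauses mirror $\did{\APOC}{\AdaptRule}$ and $\did{\APOC}{\NoAdaptRule}$; and the clauses of $\ssim$ correspond to synchronisations on $\AuxSe^*$ and $\AuxWe^*$ followed by removal of trailing $\one$'s. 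Once at $\net^\star$, which coincides with $\upd(\net)$ up to terms that $\ssim$ prunes without producing any transition, the hypothesis that $\upd(\net)$ fires $\eta$ is transported to $\net^\star \arro{\eta} \net''$ with $\upd(\net'') = \upd(\net')$.

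For condition \ref{cond:no_upd}, I would case split on $\eta$. If $\eta$ is either $\tau$ or an auxiliary communication on an $o^*$ whose redex in $\net$ is exactly the one consumed by some rewriting clause of $\upd$ (and is not shielded inside a while body), then the step is already absorbed by $\upd$, so $\upd(\net) = \upd(\net')$ and case \ref{cond:case_private} holds. Otherwise $\eta$ is either a visible action (a programmer-specified interaction, an update, $\NoAdaptLabel$, a change of updates $\rules'$, or $\tick$) or an auxiliary action appearing inside a while body: in either situation the event corresponding to $\eta$ is minimal in $\leqapoc$ by Lemma~\ref{lemma:minimalevent}, and the rewriting steps performed by $\upd$ neither add causal predecessors nor consume the redex of $\eta$. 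Hence the same transition is available from $\upd(\net)$ and case \ref{cond:case_one_step} applies.

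The main obstacle is the confluence argument justifying $\upd(\net^\star) = \upd(\net)$ in the first direction and $\upd(\net'') = \upd(\net')$ in both directions. This is delicate because $\upd$ refuses to descend into the bodies of while loops, and because each application of $\did{\APOC}{Lead-\AdaptRule}$ injects a fresh projected sub-process whose auxiliary scaffolding becomes a new $\upd$-redex only after the leader step has fired. These subtleties are controlled by the well-annotation invariants {\sc C1}--{\sc C6} of Lemma~\ref{lemma:IOCwell} together with the freshness clause in the $\did{\APOC}{Lead-\AdaptRule}$ premise, which jointly ensure that the rewriting defining $\upd$ is locally confluent and terminating on every reachable projected network, reducing the analysis to the finite case split above.
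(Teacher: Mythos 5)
Your proposal is correct and follows essentially the same strategy as the paper's proof: both rest on the observation that the rewriting steps of $\upd$ correspond to weak (silent or auxiliary) transitions, so for part~(1) one fires the enabled subset of them before $\eta$, and for part~(2) one splits on whether $\eta$ is one of the transitions absorbed by $\upd$ or not. The paper's own argument is considerably terser and simply asserts the enabledness and commutation facts that you spell out (including the confluence concern, which the paper leaves implicit), so your version is a faithful, more detailed rendering of the same proof.
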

\newpage

\begin{proof}\hfill
\begin{enumerate}

\item Applying the $\upd$ function corresponds to perform weak transitions,
namely transitions with labels in $\{\commLabel{o^*}{\role R_1}{v}{\role
R_2}{x},\commLabel{o^*}{\role R_1}{X}{\role R_2}{},\tau\}$. Some of such
transitions may not be enabled yet.
Hence, $\net$ may perform the subset of the weak transitions above which are
or become enabled, reducing to some $\net'''$. Then, $\eta$ is enabled also in
$\net'''$ and we have $\tuple{\ambientN,\net'''} \arro{\eta}
\tuple{\ambientN',\net''}$. At this point we have that $\net''$ and $\net'$
may differ only for the weak transitions that were never enabled, which can be
executed by $\upd$.

\item There are two cases. In the first case the transition with label $\eta$ is
one of the transitions executed by function $\upd$. In this case the
condition~\ref{cond:case_private} holds. In the second case, the transition with
label $\eta$ is not one of the transitions executed by function $\upd$. In this
case the transition with label $\eta$ is still enabled in $\upd(\net)$ and can
be executed. This leads to a network that differs from $\net'$ only because of
transitions executed by the $\upd$ function and case~\ref{cond:case_one_step}
holds.\qedhere
\end{enumerate}
\end{proof}

\noindent We now prove a property of transitions with label $\tick$.

\begin{lemma}\label{lemma:tick} For each \AIOC{} system
$\tuple{\state,\rules,{\mathcal I}}$ that reduces with a transition labelled
$\tick$ then, for each role $\role R \in
\rolesFunc({\mathcal I})$, the \APOC{} role $\roleExec{\pi({\mathcal I},\role
R),\state_{\role R}}{\role R}$ can reduce with a transition labelled $\tick$ and
vice versa.
\end{lemma}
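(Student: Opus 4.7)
The plan is to proceed by structural induction on the \AIOC{} process $\mathcal{I}$. The key observation underlying the whole argument is that transitions labelled $\tick$ arise at \AIOC{} level only from the $\did{\AIOC}{End}$ rule applied to $\one$, and propagate upwards only through Rules $\did{\AIOC}{Seq-end}$ and $\did{\AIOC}{Par-end}$ (Rule $\did{\AIOC}{Sequence}$ and $\did{\AIOC}{Parallel}$ have the side condition $\mu \neq \tick$). Exactly the same is true at \APOC{} level with Rules $\did{\APOC}{One}$, $\did{\APOC}{Seq-end}$, and $\did{\APOC}{Par-end}$. Hence $\tick$ is enabled precisely when the process is built from occurrences of $\one$ composed sequentially and in parallel. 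The projection function $\pi$ defined in Figure~\ref{fig:pi} is homomorphic for $\one$, sequential, and parallel composition, so this structure is preserved on both sides. This will make both directions of the biimplication structural.

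Forward direction ($\Rightarrow$): assume $\tuple{\state,\rules,\mathcal I} \arro{\tick}$. I would first argue by induction that every sub-$\AIOC$ contributing to the derivation must itself be $\tick$-able, and then show that for any role $\role R$ the projected role has the same top-level tree shape (sequential/parallel) with $\pi(\one, \role R) = \one$ at the leaves. In the base case $\mathcal I = \one$, Rule $\did{\APOC}{One}$ gives the $\tick$; in the inductive cases $\mathcal I_1 \seqOp \mathcal I_2$ and $\mathcal I_1 \parOpI \mathcal I_2$, the induction hypothesis yields $\tick$-transitions on the projected components, and Rules $\did{\APOC}{Seq-end}$ and $\did{\APOC}{Par-end}$ combine them. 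The remaining constructs (interactions, assignments, conditionals, loops, scopes, and $\zero$) cannot originate a $\tick$, so the implication holds vacuously for them.

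Backward direction ($\Leftarrow$): assume that for every $\role R \in \rolesFunc(\mathcal I)$ the projection $\roleExec{\pi(\mathcal I, \role R), \state_{\role R}}{\role R}$ can do $\tick$. Again by structural induction. For each atomic construct with non-empty role set, one can pick a participating role $\role R$ and inspect Figure~\ref{fig:pi} to see that $\pi(\mathcal I, \role R)$ starts with a send, a receive, an assignment, a conditional, a while, or a scope, none of which enables $\tick$ at \APOC{} level; hence the premise fails and the case is vacuous. For $\mathcal I = \one$ the conclusion is immediate. For $\mathcal I = \mathcal I_1 \seqOp \mathcal I_2$ (resp.\ $\parOpI$) I would use the homomorphic shape of the projection: from Rule $\did{\APOC}{Seq-end}$ (resp.\ $\did{\APOC}{Par-end}$) being the only way to obtain a $\tick$ from a sequential (resp.\ parallel) \APOC{} term, both $\pi(\mathcal I_1, \role R)$ and $\pi(\mathcal I_2, \role R)$ must do $\tick$ for every role involved. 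Applying the induction hypothesis to $\mathcal I_1$ and $\mathcal I_2$ separately then gives $\tick$ transitions at \AIOC{} level, which are recombined by $\did{\AIOC}{Seq-end}$ or $\did{\AIOC}{Par-end}$.

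The main obstacle is the mismatch between the role sets of the two sub-processes in a composition: it may happen that $\role R \in \rolesFunc(\mathcal I_1) \setminus \rolesFunc(\mathcal I_2)$ (or vice versa), in which case the induction hypothesis on $\mathcal I_2$ does not directly cover $\role R$. To bridge this gap I would prove a small auxiliary fact by structural induction on $\mathcal I'$: if $\role R \notin \rolesFunc(\mathcal I')$ then $\pi(\mathcal I', \role R)$ is a tree built from $\one$'s (plus possibly $\zero$'s if $\mathcal I'$ contains $\zero$), and therefore its $\tick$-capability matches that of $\mathcal I'$. Applied to our setting, where we only consider \AIOC{} processes occurring during the execution of an initial \AIOC{} that currently admits a $\tick$ (so no $\zero$ is in the way), this yields $\pi(\mathcal I_2, \role R) \arro{\tick}$ for the uncovered role, and the inductive step goes through symmetrically.
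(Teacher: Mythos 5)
Your proof is correct and takes essentially the same approach as the paper's (much terser) argument: a $\tick$ can only originate from $\one$'s combined by sequential and parallel composition, and the projection is homomorphic on exactly these constructs, so the shape is preserved in both directions. The extra care you devote to roles not occurring in one of the composed sub-processes is a detail the paper glosses over, but it does not change the substance of the argument.
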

\begin{proof} Note that a \AIOC{} can perform a transition with label $\tick$
only if it is a term obtained using sequential and/or parallel composition
starting from $\one$ constructs. The projection has the same shape, hence it can
perform the desired transition. The other direction is similar.
\end{proof}

We can now prove our main theorem (Theorem~\ref{teo:final}, restated below)
for which, given a connected well-annotated \AIOC{} process ${\mathcal I}$ and
a state $\Sigma$, the \APOC{} network obtained as its projection has the same
behaviours of ${\mathcal I}$.

\final*

\begin{proof} We prove that the relation $\mathcal R$ below is a weak system
bisimulation. 

$$
\mathcal R=\sset{
	(\tuple{\state,\rules,{\mathcal I}},\tuple{\rules,\net})
}{
	\upd(\net) = \proj({\mathcal I},\Sigma),\\
	\devent(\mathcal{I}) \subseteq \event(\Prop(\net)),\\
	\forall \ \ev_1, \ev_2 \in \devent(\mathcal{I}) \ . \\
	\quad \ev_1 \leqaioc \ev_2 \Rightarrow \ev_1 \leqapoc \ev_2 \vee \ev_1
\leqapoc \overline{\ev_2} 
}
$$
\vspace{1em}

where ${\mathcal I}$ is obtained from a well-annotated connected \AIOC{} via
$0$ or more transitions and $\upd(\net)$ is a well-annotated \APOC{}.

To ensure that proving that the relation above is a weak system bisimulation
implies our thesis, let us show that the pair $(\tuple{\state,\rules,{\mathcal
I}},\tuple{\rules,\proj(\mathcal{I},\Sigma)})$ from the theorem statement
belongs to $\mathcal R$. Note that here $\mathcal{I}$ is well-annotated and
connected, and for each such $\mathcal{I}$ we have
$\upd(\proj(\mathcal{I},\Sigma)) = \proj(\mathcal{I},\Sigma)$. From
Lemma~\ref{lemma:IOCwell} $\proj({\mathcal I},\Sigma)$ is well-annotated, thus
$\upd(\proj(\mathcal{I},\Sigma))$ is well-annotated.
Observe that $\Prop$ is the identity on $\proj({\mathcal I},\Sigma)$, thus from
Lemma~\ref{lemma:ev} we have that the conditions $\devent(\mathcal{I})
\subseteq \event(\Prop(\net))$ and $\forall \ev_1, \ev_2 \in
\devent(\mathcal{I}) \ . \ \ev_1 \leqaioc \ev_2 \Rightarrow \ev_1 \leqapoc
\ev_2 \vee \ev_1 \leqapoc \overline{\ev_2}$ are satisfied. 

We now prove that $\mathcal R$ is a weak system bisimulation. To prove it, we
show below that it is enough to consider only the case in which $\net$ (and not
$\upd(\net)$) is equal to $\proj({\mathcal I},\Sigma)$. Furthermore, in this
case the transition of $\tuple{\state,\rules,{\mathcal I}}$ is matched by the
first transition of $\tuple{\rules,\proj(\mathcal{I},\Sigma)}$.

Formally, for each $(\tuple{\state,\rules,{\mathcal I}},\tuple{\rules,\net})$
where $\net=\proj(\mathcal I,\Sigma)$ we have to prove the following simplified
bisimulation clauses.
\begin{itemize}
\item if $\tuple{\state,\rules,{\mathcal I}} \arro{\mu}
\tuple{\state'',\rules'',{\mathcal I}''}$ then $\tuple{\rules,\net} \arro{\mu}
\tuple{\rules'',\net'''}$ with $(\tuple{\state'',\rules'',{\mathcal
I}''},\tuple{\rules'',\net'''}) \in \mathcal R$;
\item if $\tuple{\rules,\net} \arro{\eta} \tuple{\rules'',\net'''}$ with $\eta
\in \{ \commLabel{o}{\role R_1}{v}{\role R_2}{x}\seqOp
\tick\seqOp {\mathcal I} \seqOp \NoAdaptLabel \seqOp \rules''\seqOp
\tau \}$ then\\ $\tuple{\state,\rules,{\mathcal I}}
\arro{\eta} \tuple{\state'',\rules'',{\mathcal I}''}$ and
$(\tuple{\state'',\rules'',{\mathcal I}''},\tuple{\rules'',\net'''}) \in
\mathcal R$.
\end{itemize}

\noindent In fact, consider a general network $\net_g$ with $\upd(\net_g)=\proj({\mathcal
I},\Sigma)$. If $\tuple{\state,\rules,{\mathcal I}} \arro{\mu}
\tuple{\state'',\rules'',{\mathcal I}''}$, then by hypothesis
$\tuple{\rules,\upd(\net_g)} \arro{\mu} \tuple{\rules'',\net'''}$. From
Lemma~\ref{lemma:uptoupd} case~\ref{cond:upd} there exists $\net''$
such that $\tuple{\rules,\net_g}\arro{\eta_1}
\dots \arro{\eta_k} \arro{\mu}
\tuple{\rules'',\net''}$ where $\eta_{i} \in \{\commLabel{o^*}{\role
R_1}{v}{\role R_2}{x},\commLabel{o^*}{\role R_1}{X}{\role R_2}{},\tau\}$ for
each $i \in \{1,\ldots,k\}$ and $\upd(\net'') = \upd(\net''')$. By hypothesis
$(\tuple{\state'',\rules'',{\mathcal I}''},\tuple{\rules'',\net'''}) \in
\mathcal R$, hence, by definition of $\mathcal R$, $\upd(\net''') =
\proj({\mathcal I''},\Sigma'')$, and therefore also $\upd(\net'') =
\proj({\mathcal I''},\Sigma'')$.
The conditions on events hold by hypothesis since function $\upd$ has no effect
on \APOC{} events corresponding to \AIOC{} events. Furthermore, only enabled
interactions have been executed, hence dependencies between \APOC{} events
corresponding to \AIOC{} events are untouched.

If instead $\tuple{\rules,\net_g} \arro{\eta} \tuple{\rules'',\net'''}$ with
$\eta \in \{ \commLabel{o^?}{\role R_1}{v}{\role R_2}{x},\commLabel{o^*}{\role
R_1}{X}{\role R_2}{},\tick, {\mathcal I},\\\NoAdaptLabel, \rules'',
\tau \}$ then thanks to Lemma~\ref{lemma:uptoupd} we have one of the following:
(\ref{cond:case_private}) $\upd(\net_g) = \upd(\net''')$ and $\eta
\in \{\commLabel{o^*}{\role R_1}{v}{\role R_2}{x},\commLabel{o^*}{\role
R_1}{X}{\role R_2}{},\tau\}$, or
(\ref{cond:case_one_step})
$\tuple{\rules,\upd(\net_g)} \arro{\eta} \tuple{\rules'',\net''}$
such that $\upd(\net''') = \upd(\net'')$.
In case (\ref{cond:case_one_step}) we have $\tuple{\rules,\upd(\net_g)}
\arro{\eta} \tuple{\rules'',\net''}$. Then, by hypothesis, we have
$\tuple{\state,\rules,{\mathcal I}} \arro{\eta}
\tuple{\state'',\rules'',{\mathcal I}''}$ and
$(\tuple{\state'',\rules'',{\mathcal I}''}, $ $\tuple{\rules'',\net''}) \in
\mathcal R$. To deduce that $(\tuple{\state'',\rules'',{\mathcal I}''}, $
$\tuple{\rules'',\net'''}) \in \mathcal R$, one can proceed using the same
strategy as the case of the challenge from the \AIOC{} above. In case
(\ref{cond:case_private}) the step is matched by the \AIOC{} by staying idle,
following the second option in the definition of weak system bisimulation. The
proof is similar to the one above.\\

Thus, we have to prove the two simplified bisimulation clauses above. The proof
is by structural induction on the \AIOC{} ${\mathcal I}$. All the subterms of a
well-annotated connected \AIOC{} are well-annotated and connected, thus the
induction can be performed. We consider both challenges from the \AIOC{}
($\rightarrow$) and from the \APOC{} ($\leftarrow$). The case for label $\tick$
follows from Lemma~\ref{lemma:tick}. The case for labels $\rules$ is
trivial. Let us consider the other labels, namely $\commLabel{o}{\role
R_1}{v}{\role R_2}{x}, {\mathcal I},\NoAdaptLabel$, and $\tau$.

Note that no transition (at the \AIOC{} or at the \APOC{} level) with one of
these labels can change the set of updates $\rules$. Thus, in the following,
we will not write it. Essentially, we will use \AIOC{} processes and \APOC{}
networks instead of \AIOC{} systems and \APOC{} systems, respectively. Note
that \APOC{} networks also include the state, while this is not the case for
\AIOC{} processes. For \AIOC{} processes, we assume to associate to them the
state $\state$, and comment on its changes whenever needed.

\begin{description}
\item[Case $\one$, $\zero$] trivial.
\item[Case $\idx{i} \assign{x}{\role R}{e}$] the assignment changes the global
  state in the \AIOC{}, and its projection on the role $\role R$ changes the
  local state of the role in the \APOC{} in a corresponding way.
\item[Case $\idx{i} \comm{o}{\role R_1}{e}{\role R_2}{x}$] trivial. Just note
that at the \APOC{} level the interaction gives rise to one send and one
receive with the same operation and prefixed by the same index.
Synchronisation between send and receive is performed by Rule
$\did{\APOC}{Synch}$ that also removes the index from the label.
\item[Case ${\mathcal I}\seqOp{\mathcal I}'$] from the definition of the
projection function we have that $\net = \parallel_{\role R \in
\rolesFunc(\mathcal{I}\seqOp\mathcal{I}')} (\pi({\mathcal I},\role R)\seqOp
\pi({\mathcal I}',\role R),\Sigma_{\role R})_{\role R}$.

\begin{description}
\item[$\rightarrow$] Assume that ${\mathcal I}\seqOp{\mathcal I}' \arro{\mu}
{\mathcal I}''$ with $\mu \in \{\commLabel{o}{\role R_1}{v}{\role R_2}{x}\seqOp
\mathcal{I}\seqOp \NoAdaptLabel,\tau \}$.  There are two
possibilities: either (\emph{i}) ${\mathcal I} \arro{\mu} {\mathcal I}'''$ and
${\mathcal I}'' = {\mathcal I}'''\seqOp{\mathcal I}'$ or (\emph{ii}) ${\mathcal
I}$ has a transition with label $\tick$ and ${\mathcal I}' \arro{\mu} {\mathcal
I}''$. 

In case (\emph{i}) by inductive hypothesis $$\qquad\qquad\quad \parallel_{\role
R \in \rolesFunc(\mathcal{I})} \roleExec{\pi({\mathcal I},\role R),\Sigma_{\role
R}}{\role R} \arro{\mu} \net''' \mbox{ and } \upd(\net''') =
\parallel_{\role R \in \rolesFunc(\mathcal{I})} \roleExec{\pi({\mathcal
I}''',\role R),\Sigma_{\role R}'}{\role R}$$

Thus $$\parallel_{\role R \in \rolesFunc(\mathcal{I})}
\roleExec{\pi({\mathcal I},\role R)\seqOp\pi({\mathcal I}',\role
R),\Sigma_{\role R}}{\role R} \arro{\mu} \net \quad \mbox{and}$$
$$\upd(\net)=\parallel_{\role R \in \rolesFunc(\mathcal{I})}
\roleExec{\pi({\mathcal I}''',\role R)\seqOp\pi({\mathcal I}',\role
R),\Sigma_{\role R}'}{\role R}$$
If $\rolesFunc(\mathcal{I}') \subseteq \rolesFunc(\mathcal{I})$ then the thesis
follows. 
If $\rolesFunc(\mathcal{I}') \not \subseteq \rolesFunc(\mathcal{I})$ then at
the \APOC{} level the processes in the roles in $\rolesFunc(\mathcal{I}')
\setminus \rolesFunc(\mathcal{I})$ are not affected by the transition. Note
however that the projection of $\mathcal{I}$ on these roles is a term composed
only by $\one$s, and the ones corresponding to parts of $\mathcal I$ that have
been consumed can be removed by the $\ssim$ part of function $\upd$.

In case (\emph{ii}), ${\mathcal I}$ has a transition with label $\tick$ and
${\mathcal I}' \arro{\mu} {\mathcal I}''$. By inductive hypothesis
$\proj({\mathcal I}',\Sigma) \arro{\mu} \net''$ and
$\upd(\net'')=\proj({\mathcal I}'',\Sigma')$. The thesis follows since, thanks
to Lemma~\ref{lemma:tick}, $\proj({\mathcal I}\seqOp{\mathcal I}',\Sigma)
\arro{\mu} \net$ and $\upd(\net) = \proj({\mathcal I}'',\Sigma')$, possibly
using the $\ssim$ part of function $\upd$ to remove the $\one$s which are no
more needed.
 
Note that, in both the cases, conditions on events follow by inductive
hypothesis.

\item[$\leftarrow$] Assume that
$$
\net = \parallel_{\role R \in \rolesFunc(\mathcal{I}\seqOp\mathcal{I}')}
\roleExec{\pi({\mathcal I},\role R)\seqOp\pi({\mathcal I}',\role
R),\Sigma_{\role R}}{\role R} \arro{\eta} \parallel_{\role R \in
\rolesFunc(\mathcal{I}\seqOp\mathcal{I}')} \roleExec{P_{\role R},\Sigma_{\role
R}'}{\role R}
$$
with $\eta \in  \{\commLabel{o}{\role R_1}{v}{\role R_2}{x}, {\mathcal
I},\NoAdaptLabel, \tau\}$. We have a case analysis on $\eta$.

If $\eta=\commLabel{o}{\role R_1}{v}{\role R_2}{x}$ then
$$\roleExec{\pi({\mathcal I}\seqOp{\mathcal I'},\role R_1),
\Sigma_{\role R_1}}{\role R_1} \arro{\coutLabel{\idxSign{i}.o}{v}{\role
R_2}:\role R_1} \roleExec{P_{\role R_1}, \Sigma_{\role R_1}}{\role R_1} \mbox{
and }$$
$$ \roleExec{\pi({\mathcal I}\seqOp{\mathcal I'},\role R_2),
\Sigma_{\role R_2}}{\role R_2} \arro{\cinpLabel{\idxSign{i}.o}{x}{v}{\role
R_1}:\role R_2} \roleExec{P_{\role R_2},\Sigma_{\role R_2}}{\role R_2}$$
The two events have the same global index since they have the same index
$\idxSign{i}$ (otherwise they could not synchronise) and they are both outside
of any while loop (since they are enabled), hence the global index coincides
with the index. Thus, they are either both from ${\mathcal I}$ or both from
${\mathcal I}'$.

In the first case we have also
$$
\parallel_{\role R \in
\rolesFunc(\mathcal{I}\seqOp\mathcal{I}')}\roleExec{\pi({\mathcal
I},\role R),\Sigma_{\role R}}{\role R}\arro{\commLabel{o}{\role
R_1}{v}{\role R_2}{x}} \parallel_{\role R \in
\rolesFunc(\mathcal{I}\seqOp\mathcal{I}')}\roleExec{P''_{\role R},\Sigma_{\role
R}}{\role R} $$ with $P_{\role R}=P''_{\role R}\seqOp\pi({\mathcal I}',\role
R)$.
Thus, by inductive hypothesis, ${\mathcal I}
\arro{\commLabel{o}{\role R_1}{v}{\role R_2}{x}} {\mathcal I}''$ and
$\upd(\parallel_{\role R \in \rolesFunc{\mathcal{I}\seqOp\mathcal{I}'}}
\roleExec{P''_{\role R},\Sigma_{\role R}}{\role R}) = \proj(\mathcal
I'',\Sigma)$. Hence, we have that $${\mathcal I}\seqOp{\mathcal I'}
\arro{\commLabel{o}{\role R_1}{v}{\role R_2}{x}} {\mathcal I}''\seqOp{\mathcal
I}'$$ and $$\upd(\parallel_{\role R \in \rolesFunc{\mathcal{I}\seqOp\mathcal{I}'}}
\roleExec{P''_{\role R};\pi({\mathcal I}',\role
R),\Sigma_{\role R}}{\role R}) = \proj(\mathcal
I'';\mathcal I',\Sigma)$$ The thesis follows.

In the second case, we need to show that the interaction $\commLabel{o}{\role
R_1}{v}{\role R_2}{x}$ is enabled. Assume that this is not the case. This means
that there is a \AIOC{} event $\ev$ corresponding to some construct in $\mathcal
I$. Because of the definition of $\mathcal R$, $\ev$ is also a \APOC{} event and
$\ev \leqapoc \xi : \co{o}{\role R_2} \lor \ev \leqapoc \xi: \ci{o}{\role R_1}$.
Hence, at least one of the two events is not minimal and the corresponding
transition cannot be enabled, against our hypothesis. Therefore the interaction
$\commLabel{o}{\role R_1}{v}{\role R_2}{x}$ is enabled. Thus, ${\mathcal I}$ has
a transition with label $\tick$ and ${\mathcal I'} \arro{ \commLabel{o}{\role
R_1}{v}{\role R_2}{x}} {\mathcal I}''$.
Thanks to Lemma~\ref{lemma:tick} then both $\roleExec{\pi({\mathcal I},\role
R_1),\Sigma_{\role R_1}}{\role R_1} $ and $ \roleExec{\pi({\mathcal I},\role
R_2), \Sigma_{\role R_2}}{\role R_2}$ have a transition with label ${\tick}$.
Thus, we have 
$$\roleExec{\pi({\mathcal I}',\role R_1),\Sigma_{\role R_1}}{\role
R_1} \arro{\coutLabel{\idxSign{i}.o}{v}{\role R_2}:\role R_1}
\roleExec{P_{\role R_1},\Sigma_{\role R_1}}{\role R_1}$$
$$\roleExec{\pi({\mathcal I}',\role R_2),\Sigma_{\role R_2}}{\role R_2}
\arro{\cinpLabel{\idxSign{i}.o}{x}{v}{\role R_1}:\role R_2} \roleExec{P_{\role
R_2},\Sigma_{\role R_2}}{\role R_2} \mbox{ and thus}$$
$$\proj({\mathcal I}',\Sigma) \arro{ \commLabel{o}{\role R_1}{v}{\role
R_2}{x}} \parallel_{\role R \in \rolesFunc(\mathcal{I}')} \roleExec{P_{\role
R},\Sigma_{\role R}}{\role R}$$ 
The thesis follows by inductive hypothesis.

For the other cases of $\eta$, all the roles but one are unchanged. The proof of
these cases is similar to the one for interaction, but simpler.

Note that in all the above cases, conditions on events follow by inductive
hypothesis.
\end{description}

\item[Case ${\mathcal I} \parOpI {\mathcal I}'$] from the definition of the
 projection function we have $$\net = \parallel_{\role R \in
 \rolesFunc(\mathcal{I}\seqOp\mathcal{I}')} (\pi({\mathcal I},\role R) \mid
 \pi({\mathcal I}',\role R),\Sigma_{\role R})_{\role R}$$
\begin{description}
\item[$\rightarrow$] 
We have a case analysis on the rule used to derive the transition. If the
transition is derived using Rule $\did{\AIOC}{Parallel}$ and ${\mathcal I}
\parOpI {\mathcal I}'$ can perform a transition with label $\mu$ then one of its
two components can perform a transition with the same label $\mu$ and the thesis
follows by inductive hypothesis. Additional roles not occurring in the term
performing the transition are dealt with by the $\ssim$ part of function
$\upd$. If instead the transition is derived using Rule $\did{\AIOC}{Par-end}$
then the thesis follows from Lemma~\ref{lemma:tick}.

\item[$\leftarrow$] We have a case analysis on the label $\eta$ of the
transition.  If $\eta=\commLabel{o^?}{\role R_1}{v}{\role R_2}{x}$ then a send
and a receive on the same operation are enabled. 
The two events have the same global index since they have the same index
$\idxSign{i}$ (otherwise they could not synchronise) and they are both outside
of any while loop (since they are enabled), hence the global index coincides
with the index. Thus, they are either both from ${\mathcal I}$ or both from
${\mathcal I}'$.
The thesis follows by inductive hypothesis. For the other cases of $\eta$, only
the process of one role changes. The thesis follows by inductive hypothesis. In
all the cases, roles not occurring in the term performing the transition are
dealt with by function $\upd$.
\end{description}

\item[Case $\ifthenKey{b \at \role R}{\mathcal{I}}{\mathcal{I}'}{i}$] 
from the definition of projection
\begin{multline*}
\hspace{10pt} \net = \left(
\parallel_{\role S \in 
\rolesFunc(\mathcal{I},\mathcal{I}') \smallsetminus \{\role R\}
} \Big(\cinp{\idxIfRecv{i}
\idxSign{i}.\AuxIf^*_\idxSign{i}}{x_\idxSign{i}}{\role R}\seqOp
\idx{i}
\ifthen{x_\idxSign{i}}{\pi(\mathcal{I},\role S)}{\pi(\mathcal{I}',\role
S)},\Sigma_{\role S}\Big)_{\role S} \right)
\parallel \\ \Bigg(\idx{i} \mathtt{if} \; b \; \left\{
\left(\prod_{\role R' \in \rolesFunc(\mathcal{I},\mathcal{I}')
\smallsetminus \{\role R\}} \idxTrue{i}
\cout{\idxSign{i}.\AuxIf^*_\idxSign{i}}{\trueC}{\role R'}\right)
 \seqOp \pi(\mathcal{I},\role R)\right\}\\ \mathtt{else} \;
 \left\{\left(\prod_{\role R' \in \rolesFunc(\mathcal{I},\mathcal{I}')
 \smallsetminus \{\role R\}}
 \idxFalse{i} \cout{\idxSign{i}.\AuxIf^*_\idxSign{i}}{\falseC}{\role
 R'}\right)\seqOp
\pi(\mathcal{I}',\role R)\right\}, \Sigma_{\role R}\Bigg)_{\role R}
\end{multline*} Let us consider the case when the guard is true (the other
one is analogous).
\begin{description}
\item[$\rightarrow$] The only possible transition from the \AIOC{} is
$\ifthenKey{b \at \role R}{\mathcal{I}}{\mathcal{I}'}{i} \arro{\tau}
\mathcal{I}$. The \APOC{} can match this transition by reducing to
\begin{multline*}
\hspace{50pt}\net' = \left(
\parallel_{\role S \in \rolesFunc(\mathcal{I},\mathcal{I}') \smallsetminus \{\role
R\}} \left(
\begin{array}{l}
\idxIfRecv{i}\cinp{\idxSign{i}.\AuxIf^*_\idxSign{i}}{x_\idxSign{i}}{\role
R}\seqOp\\
\idx{i} \ifthen{x_\idxSign{i}}{\pi(\mathcal{I},\role S)}{\pi(\mathcal{I}',\role
S)}
\end{array}
,\Sigma_{\role S}\right)_{\role S}\right) \parallel\\
\left(\left(\prod_{\role R' \in \rolesFunc(\mathcal{I},\mathcal{I}')
\smallsetminus \{\role R\}} \idxTrue{i}
\cout{\idxSign{i}.\AuxIf^*_\idxSign{i}}{\trueC}{\role R'} \right) \seqOp
\pi(\mathcal{I},\role R),\Sigma_{\role R}\right)_{\role R}
\end{multline*} By applying function $\upd$ we get
$$
\upd(\net') = \left(\parallel_{\role S \in
\rolesFunc(\mathcal{I},\mathcal{I}')\smallsetminus \{\role R\}}
\roleExec{\pi(\mathcal{I},\role S),\Sigma_{\role S}}{\role S} \right)\parallel
\roleExec{\pi(\mathcal{I},\role R),\Sigma_{\role R}}{\role R}
$$ Concerning events, at the \AIOC{} level events corresponding
to the guard and to the discarded branch are removed. The same holds at the
\APOC{} level, thus conditions on the remaining events are inherited. This
concludes the proof.
\item[$\leftarrow$] The only possible transition from the \APOC{} is the
evaluation of the guard from the coordinator. This reduces $\net$ to $\net'$
above and the thesis follows from the same reasoning.
\end{description}

\item[Case $\whileKey{b \at \role R}{\mathcal{I}}{i}$] 
from the definition of projection 
\begin{multline*}
\hspace{40pt}\net = 
\left(\parallel_{\role S \in
\rolesFunc(\mathcal{I})\smallsetminus \{\role R\}} \left(
\begin{array}l
\cinp{\idxIfRecv{i}
\idxSign{i}.\AuxWb^*_\idxSign{i}}{x_\idxSign{i}}{\role R}\seqOp
\idx{i} \while{x_\idxSign{i}}{\pi(\mathcal{I},\role S)\seqOp\\
\cout{\idxClose{i} \idxSign{i}.\AuxWe^*_\idxSign{i}}{\okC}{\role
R}\seqOp\cinp{\idxIfRecv{i}
\idxSign{i}.\AuxWb^*_\idxSign{i}}{x_\idxSign{i}}{\role R}}
\end{array}
,\Sigma_{\role
S}\right)_{\role S} 
\right)
\parallel\\ 
\left(
\begin{array}{l}
\idx{i} \mathtt{while} \; b \;
\left\{
\begin{array}l
\left(\prod\limits_{\role R' \in \rolesFunc(\mathcal{I}) \smallsetminus
\{\role R\}}
\cout{\idxTrue{i} \idxSign{i}.\AuxWb^*_\idxSign{i}}{\trueC}{\role
R'}\right)\seqOp\pi(\mathcal{I},\role R)\seqOp\\[1.5em]
\prod\limits_{\role R' \in \rolesFunc(\mathcal{I}) \smallsetminus \{\role R\}}
\cinp{\idxClose{i} \idxSign{i}.\AuxWe^*_\idxSign{i}}{\_}{\role R'}
\end{array}
\right\} \seqOp
\\[3em]\prod\limits_{\role R'\in \rolesFunc(\mathcal{I}) \smallsetminus
\{\role R\}}
\cout{\idxFalse{i}
\idxSign{i}.\AuxWb^*_\idxSign{i}}{\falseC}{\role R'}, \Sigma_{\role R}
\end{array}
\right)_{\role R}
\end{multline*}
\begin{description}
\item[$\rightarrow$] Let us consider the case when the guard is true. The
only possible transition from the \AIOC{} is $\whileKey{b \at
\role R}{\mathcal{I}}{i} \arro{\tau} \mathcal{I}\seqOp\whileKey{b \at \role
R}{\mathcal{I}}{i}$. The \APOC{} can match this transition by reducing to
\begin{multline*}
\hspace{50pt}\net ' = \parallel_{\role S \in
\rolesFunc(\mathcal{I})\smallsetminus \{\role R\}} \left(
\begin{array}{l}
\cinp{ \idxIfRecv{i}
\idxSign{i}.\AuxWb^*_\idxSign{i}}{x_\idxSign{i}}{\role R}\seqOp
\idx{i} \while{x_\idxSign{i}}{\pi(\mathcal{I},\role S)\seqOp\\
\cout{\idxClose{i} \idxSign{i}.\AuxWe^*_\idxSign{i}}{\okC}{\role R}\seqOp
\cinp{\idxIfRecv{i} \idxSign{i}.\AuxWb^*_\idxSign{i}}{x_\idxSign{i}}{\role
R}}
\end{array}
,\Sigma_{\role S}\right)_{\role S}
\parallel\\
\left(
\begin{array}{l}	
\left(\prod\limits_{\role R' \in \rolesFunc(\mathcal{I}) \smallsetminus \{\role R\}}
\cout{\idxTrue{i} \idxSign{i}.\AuxWb^*_\idxSign{i}}{\trueC}{\role
R'}\right)\seqOp\pi(\mathcal{I},\role R)\seqOp\\
 \left(\prod\limits_{\role R' \in \rolesFunc(\mathcal{I}) \smallsetminus \{\role R\}}
 \cinp{\idxClose{i} \idxSign{i}.\AuxWe^*_\idxSign{i}}{\_}{\role R'}\right)\seqOp\\
 \idx{i} \mathtt{while} \; b \; \left\{
\begin{array}{l}	
 \left(\prod\limits_{\role R' \in
 \rolesFunc(\mathcal{I})
 \smallsetminus \{\role R\}} \cout{\idxTrue{i}
 \idxSign{i}.\AuxWb^*_\idxSign{i}}{\trueC}{\role
 R'}\right)\seqOp\pi(\mathcal {I},\role R)\seqOp\\
 \prod\limits_{\role R' \in \rolesFunc(\mathcal{I}) \smallsetminus
 \{\role R\}}
 \cinp{\idxClose{i} \idxSign{i}.\AuxWe^*_\idxSign{i}}{\_}{\role R'}
\end{array}
 \right\}\seqOp\\[3em]
 \prod\limits_{\role R'\in \rolesFunc(\mathcal{I}) \smallsetminus \{\role R\}}
 \cout{\idxFalse{i} \idxSign{i}.\AuxWb^*_\idxSign{i}}{\falseC}{\role R'},
 \Sigma_{\role R}
\end{array}
\right)_{\role R}
 \end{multline*}
 By applying function $\upd$ we get
\begin{multline*}
\hspace{50pt}\upd(\net') = \left(
\begin{array}{l}
\parallel_{\role S \in
\rolesFunc(\mathcal{I})\smallsetminus \{\role R\}} 
\left(
\begin{array}l	
\pi(\mathcal{I},\role S) \seqOp
\cinp{\idxIfRecv{i}
\idxSign{i}.\AuxWb^*_\idxSign {i}}{x_\idxSign{i}}{\role R}\seqOp\\
\idx{i} \while{x_\idxSign{i}}{\pi(\mathcal{I},\role S)\seqOp\\
\quad\cout{\idxClose{i} \idxSign{i}.\AuxWe^*_\idxSign{i}}{\okC}{\role R}
\seqOp\\
\quad\cinp{\idxIfRecv{i} \idxSign{i}.\AuxWb^*_\idxSign{i}}{x_\idxSign{i}}{\role
R}}
\end{array}
,\Sigma_{\role S}
\right)_{\role S}	
\end{array}
\right)
\parallel\\
\left(
\begin{array}l	
\pi(\mathcal{I},\role R)\seqOp\\
 \idx{i} \mathtt{while} \; b \; \left\{
\begin{array}{l}	
 \left(\prod\limits_{\role R' \in
 \rolesFunc(\mathcal{I})
 \smallsetminus \{\role R\}} \cout{\idxTrue{i}
 \idxSign{i}.\AuxWb^*_\idxSign{i}}{\trueC}{\role
 R'}\right)\seqOp\pi(\mathcal {I},\role R)\seqOp\\
 \prod\limits_{\role R' \in \rolesFunc(\mathcal{I}) \smallsetminus \{\role R\}}
 \cinp{\idxClose{i} \idxSign{i}.\AuxWe^*_\idxSign{i}}{\_}{\role R'}
\end{array}
\right\}\seqOp\\[3em]
 \prod\limits_{\role R'\in \rolesFunc(\mathcal{I}) \smallsetminus \{\role R\}}
 \cout{\idxFalse{i} \idxSign{i}.\AuxWb^*_\idxSign{i}}{\falseC}{\role R'}
\end{array}
 ,\Sigma_{\role R}
\right)_{\role R}
 \end{multline*}
 exactly the projection  of $ \mathcal{I}\seqOp\whileKey{b \at
 \role R}{\mathcal{I}}{i}$.
 
As far as events are concerned, in $\Prop(\net')$ we have all the needed
events since, in particular, we have already done the unfolding of the while
in all the roles. Concerning the ordering, at the \AIOC{} level, we have two
kinds of causal dependencies: (1) events in the unfolded process precede the
guard event; (2) the guard event precedes the events in the body.  The first
kind of causal dependency is matched at the \APOC{} level thanks to the
auxiliary synchronisations that close the unfolded body (which are not
removed by $\Prop$) using synchronisation and sequentiality. The second kind
of causal dependency is matched thanks to the auxiliary synchronisations that
start the following iteration using  synchronisation, sequentiality and
while.\\ The case when the guard evaluates to $\falseC$ is simpler.
\item[$\leftarrow$] The only possible transition from the \APOC{} is the
evaluation of the guard from the coordinator. This reduces $\net$ to $\net'$
above and the thesis follows from the same reasoning.
\end{description}

\item[Case $\idx{i} \scope{l}{\role R}{\mathcal I}{\Delta}$] from the definition
of the projection
\begin{multline*}
\hspace{50pt}\net = \left(
\parallel_{\role R' \in \rolesFunc(\mathcal{I})
\smallsetminus \{\role R\}}
 \roleExec{\psscope{i}{l}{\role R}{\pi({\mathcal I},\role R')},\Sigma_{\role
R'}}{\role R'}\right) \parallel \\
\roleExec{\pscope{i}{l}{\role R}{\pi({\mathcal I},\role R)}{\Delta}{\rolesFunc({\mathcal
I})},\Sigma_{\role R}}{\role R}
\end{multline*}

\begin{description}
\item[$\rightarrow$] 
Let us consider the case when the scope is updated. At the \AIOC{} level all the
possible transitions have label of the form $\mathcal{I}'$ and are obtained by
applying Rule $\did {\AIOC} {\AdaptRule}$. Correspondingly, at the \APOC{} level
one applies Rule $\did{\APOC}{Lead-\AdaptRule}$ to the coordinator of the
update, obtaining
\begin{multline*}
\hspace{50pt}\net' = \left(\parallel_{\role R' \in \rolesFunc(\mathcal{I})
\smallsetminus \{\role R\}}
 \roleExec{\psscope{i}{l}{\role R}{\pi({\mathcal I},\role R')},\Sigma_{\role
R'}}{\role R'}\right) \parallel \\
\roleExec{
\begin{array}{l}
\left(\prod\limits_{\role R' \in \rolesFunc({\mathcal{I}}) 
\smallsetminus
\{\role R\}}
\cout{\idxSign{i}.\AuxSb^*_\idxSign{i}}{\pi(\mathcal{I}',\role R')}{\role
 R'}\right)\seqOp\\[1.5em]
\pi(\mathcal{I}',\role R)\seqOp\\[.5em]
 \prod\limits_{\role R' \in \rolesFunc({\mathcal{I}}) \smallsetminus \{\role R\}}
 \cinp{\idxSign{i}.\AuxSe^*_\idxSign{i}}{\_}{\role R'}
\end{array},
 \Sigma_{\role R}
 }{\role R}
\end{multline*}
By applying the $\upd$ function we get:
$$
\upd(\net') = 
\left(\parallel_{\role R' \in
\rolesFunc(\mathcal{I}) \smallsetminus \{\role R\}} \roleExec{
\pi(\mathcal{I}',\role R'),\Sigma_{\role R'}}{\role R'} \right)
\parallel
\roleExec{\pi(\mathcal{I}',\role R),\Sigma_{\role R}}{\role R}
$$
This is exactly the projection of the \AIOC{} obtained after applying the
rule $\did{\AIOC}{\AdaptRule}$.
The conditions on events are inherited. Observe that the closing event of the
scope is replaced by events corresponding to the auxiliary interactions
closing the scope. This allows us to preserve the causality dependencies also
when the scope is inserted in a context.
  
The case of Rule $\did{\AIOC}{\NoAdaptRule}$ is simpler.

\item[$\leftarrow$] The only possible transitions from the \APOC{} are the ones
of the coordinator of the update checking whether to apply an update or not.
This reduces $\net$ to $\net'$ above and the thesis follows from the same
reasoning.\qedhere
\end{description}
\end{description}
\end{proof}

\subsection{Deadlock freedom, termination, and race freedom}
\label{subsec:deadlock}

Due to the fact that the projection preserves weak traces, we have that 
trace-based properties of the \AIOC{} are inherited by the \APOC{}.
A first example of such properties is \emph{deadlock freedom}.
\begin{definition}[Deadlock freedom]\emph{
An \emph{internal} \AIOC{} (resp.\ \APOC{}) \emph{trace} is obtained by removing
transitions labelled $\rules$ from a \AIOC{}  (resp.\ \APOC{}) trace. A \AIOC{}
(resp.\ \APOC{}) system is \emph{deadlock free} if all its maximal finite internal
traces have $\tick$ as last label.
}\end{definition}
Intuitively, internal traces are needed since labels $\rules$ do not
correspond to activities of the application and may be executed also after
application termination. The fact that after a $\tick$ only changes in the set
of available updates are possible is captured by the following lemma.
\begin{lemma}\label{lemma:tickEnds}
For each initial, connected \AIOC{} ${\mathcal I}$, state $\Sigma$, and
set of updates $\rules$, if $\tuple{\Sigma,\rules,{\mathcal I}}
\arro{\tick} \tuple{\Sigma',\rules',{\mathcal I}'}$ then each
transition of $\tuple{\Sigma',\rules',{\mathcal I}'}$ has label $\rules''$
for some $\rules''$.
\end{lemma}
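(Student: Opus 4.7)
The plan is to isolate the shape that an $\AIOC$ process must have immediately after a $\tick$ transition, and then to show that such a process admits only $\rules''$-labelled transitions.

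First, I would introduce an auxiliary notion of \emph{terminated} $\AIOC$ process, generated by the grammar $\mathcal{T} \gram \zero \mid \mathcal{T} \parallel \mathcal{T}'$. I would then prove, by induction on the derivation of $\tuple{\Sigma,\rules,\mathcal{I}} \arro{\tick} \tuple{\Sigma',\rules',\mathcal{I}'}$, that $\mathcal{I}'$ is terminated and that $\Sigma = \Sigma'$ and $\rules = \rules'$. The only rules in Figure~\ref{fig:ioclts} that can produce a $\tick$-labelled transition are $\did{\AIOC}{End}$, whose reduct is $\zero$; $\did{\AIOC}{Par-end}$, whose reduct is the parallel composition of two $\tick$-reducts that are terminated by induction hypothesis; and $\did{\AIOC}{Seq-end}$ specialised to $\mu = \tick$, whose reduct is the $\tick$-reduct of the second component, again terminated by induction hypothesis. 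None of these rules modifies the ambient $\Sigma$ or $\rules$.

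Second, I would show by structural induction on terminated processes that every transition of $\tuple{\Sigma, \rules, \mathcal{T}}$ carries a label of the form $\rules''$. For $\mathcal{T} = \zero$, inspection of Figure~\ref{fig:ioclts} shows that no rule whose subject is $\zero$ applies except $\did{\AIOC}{Change-Updates}$, which produces exactly a $\rules''$-labelled transition. For $\mathcal{T} = \mathcal{T}_1 \parallel \mathcal{T}_2$, the induction hypothesis implies in particular that neither $\mathcal{T}_1$ nor $\mathcal{T}_2$ admits a $\tick$ transition, so $\did{\AIOC}{Par-end}$ cannot fire. The only other process-level rule matching this shape is $\did{\AIOC}{Parallel}$, whose premise demands a non-$\tick$ transition of one component; by induction hypothesis such a transition must be labelled $\rules''$, which is propagated to the conclusion. $\did{\AIOC}{Change-Updates}$ applied at the top also yields a $\rules''$-labelled transition. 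Combining the two steps yields the lemma.

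The argument is essentially a disciplined case analysis; the only subtlety is to notice that $\did{\AIOC}{Seq-end}$ discards the first component when it propagates $\tick$, so the induction in the first step needs to look only at the $\tick$-reduct of the second component, and that $\did{\AIOC}{Change-Updates}$ lifted through $\did{\AIOC}{Parallel}$ still yields a $\rules''$-label, which is consistent with the statement. The hypotheses that $\mathcal{I}$ is initial and connected are not needed for the argument but are inherited from the context in which the lemma is used.
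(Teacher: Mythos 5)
Your proof is correct and follows the same route as the paper, which simply states ``case analysis on the rules which can derive a transition with label $\tick$; all the cases are easy'' --- your two-stage argument (characterising the $\tick$-reducts as parallel compositions of $\zero$s, then checking that such terms only admit $\did{\AIOC}{Change-Updates}$ transitions) is exactly the elaboration of that case analysis. Your observation that initiality and connectedness are not actually used is also accurate.
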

\begin{proof}
The proof is by case analysis on the rules which can derive a transition with
label $\tick$. All the cases are easy.
\end{proof}

Since by construction initial \AIOC{}s are deadlock free we have that also the 
\APOC obtained by projection is deadlock free.
\begin{corollary}[Deadlock freedom]\label{cor1}
For each initial, connected \AIOC{} ${\mathcal I}$, state $\Sigma$, and set of
updates $\rules$ the \APOC{} system $\tuple{\rules,\proj({\mathcal I},\Sigma)}$
is deadlock free.
\end{corollary}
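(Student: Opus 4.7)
The plan is to first establish that the \AIOC{} system $\tuple{\state,\rules,{\mathcal I}}$ is itself deadlock free, and then transfer this property to $\tuple{\rules,\proj({\mathcal I},\state)}$ using the weak system bisimilarity of Theorem~\ref{teo:final}. The key observation is that the ``weak'' moves absorbed by bisimilarity ($\tau$ and auxiliary communications) are disjoint from the Change-Updates moves quotiented out by internal traces, so the two notions of stuckness line up on the two sides.

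For the \AIOC{} side, I would run a structural induction on the initial connected process $\mathcal{I}$ showing that, until $\mathcal{I}$ is fully consumed by Rule $\did{\AIOC}{End}$, some non-Change-Updates transition is always enabled: interactions and assignments fire by $\did{\AIOC}{Interaction}$ and $\did{\AIOC}{Assign}$ (using the totality of $\grass{e}_{\Sigma_{\role R}}$ from Remark~\ref{remark:assumption_theory}); conditionals and while loops fire by their guard rules since guards always evaluate to $\trueC$ or $\falseC$; every scope admits Rule $\did{\AIOC}{\NoAdaptRule}$ independently of $\rules$; the empty process $\one$ fires $\did{\AIOC}{End}$ producing $\tick$; and sequential and parallel compositions lift progress from their components via $\did{\AIOC}{Seq-end}$ and $\did{\AIOC}{Par-end}$. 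Combined with Lemma~\ref{lemma:tickEnds}, this shows that every maximal finite internal \AIOC{} trace ends in $\tick$.

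For the transfer, I argue by contradiction: suppose some maximal finite internal trace $T$ of $\tuple{\rules,\proj({\mathcal I},\state)}$ reaches a state $\tuple{\rules',\net'}$ not ending in $\tick$. Maximality forces every transition enabled at $\tuple{\rules',\net'}$ to carry a label of the form $\rules''$. By Theorem~\ref{teo:final}, the \APOC{} computation driving $T$ is matched by an \AIOC{} computation reaching a state $\tuple{\state',\rules',{\mathcal I}'}$ weakly bisimilar to $\tuple{\rules',\net'}$. If $\tuple{\state',\rules',{\mathcal I}'}$ could perform any non-Change-Updates transition with label $\mu$, the bisimulation clause would oblige $\tuple{\rules',\net'}$ to respond with a finite sequence of weak labels---each $\tau$ or an auxiliary communication, so never of the form $\rules''$---followed by a matching transition, contradicting the maximality of $T$. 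Hence $\tuple{\state',\rules',{\mathcal I}'}$ is stuck as well, so by the \AIOC{} deadlock freedom just proved the matching computation has already fired $\tick$. A direct inspection of the \APOC{} rules that can produce $\tick$ (namely $\did{\APOC}{One}$, $\did{\APOC}{Seq-end}$, $\did{\APOC}{Par-end}$, $\did{\APOC}{Ext-Par-End}$) shows that after $\tick$ every local process has become $\zero$, so $\tuple{\rules',\net'}$ admits only Change-Updates transitions; thus $\tick$ is the last label of $T$, contradicting the assumption.

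The main obstacle is handling the mismatch between the two notions of ``hidden'' activity: weak bisimilarity hides $\tau$ and auxiliary communications, whereas internal traces hide only Change-Updates labels. The disjointness of these two label classes is what allows the clean transfer above, and making this disjointness explicit---together with the observation that the \APOC{} is truly inert after $\tick$---is the one non-routine ingredient; everything else is a straightforward invocation of Theorem~\ref{teo:final} and Lemma~\ref{lemma:tickEnds}.
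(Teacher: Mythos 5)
Your proposal is correct and follows essentially the same route as the paper: first establish deadlock freedom of the \AIOC{} system by structural induction (the paper additionally phrases this as an induction on the length of the internal trace, with the reinforced invariant that $\tick$ occurs only last and that all intermediate systems remain initial), then transfer the property to the projected \APOC{} via Theorem~\ref{teo:final}. The only cosmetic difference is that the paper performs the transfer through weak trace equivalence (Lemma~\ref{lemma:bis2trsynch}) together with Lemma~\ref{lemma:tickEnds}, whereas you argue directly on the bisimulation by contradiction; both versions hinge on the same two observations you make explicit, namely that the labels hidden by weak bisimilarity are disjoint from the Change-Updates labels hidden by internal traces, and that the system is inert after $\tick$.
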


\begin{proof}
Let us first prove that for each initial, connected \AIOC{} ${\mathcal I}$,
state $\Sigma$, and set of updates $\rules$, the \AIOC{} system $\tuple {\state,
\rules, \mathcal{I}}$ is deadlock free.
This amounts to prove that all its maximal finite internal traces
have $\tick$ as last label. 
For each trace, the proof is by induction on its length, and for each length by
structural induction on ${\mathcal I}$.
The proof is based on the fact that ${\mathcal I}$ is initial.
The induction considers a reinforced hypothesis, saying also that:
\begin{itemize}
	\item $\tick$ may occur \emph{only} as the last label of the internal trace;
	\item all the \AIOC{} systems in the sequence of transitions generating the trace,
	but the last one, are initial.
\end{itemize}
We have a case analysis on the top-level operator in ${\mathcal I}$. Note that
in all the cases, but $\zero$, at least a transition is derivable.
\begin{description}
\item[Case $\zero$] not allowed since we assumed an initial \AIOC{}.
\item[Case $\one$] trivial because by Rule $\did{\AIOC}{End}$ and Lemma
\ref{lemma:tickEnds} its only internal trace is $\tick$.
\item[Case $\assign{x}{\role R}{e}$] the only applicable rule
  is $\did{\AIOC}{Assign}$ that in one step leads to a $\one$ process. The
  thesis follows by inductive hypothesis on the length of the trace.
\item[Case $\comm{o^?}{\role R_1}{e}{\role R_2}{x}$] the only applicable
  rule is $\did{\AIOC}{Interaction}$, which leads to an assignment. Then the
  thesis follows by inductive hypothesis on the length of the trace.
\item[Case ${\mathcal I}\seqOp{\mathcal I}'$] the first transition
  can be derived either by Rule $\did{\AIOC}{Sequence}$ or Rule
  $\did{\AIOC}{Seq-end}$. In the first case the thesis follows by induction
  on the length of the trace. In the second case the trace coincides with a
  trace of ${\mathcal I}'$, and the thesis follows by structural induction.
\item[Case ${\mathcal I} \parOpI {\mathcal I}'$] the first transition
  can be derived either by Rule $\did{\AIOC}{Parallel}$ or by Rule
  \ruleName{Par-End}. In the first case the thesis follows by
  induction on the length of the trace. In the second case the thesis
  follows by Lemma~\ref{lemma:tickEnds}, since the label is
  $\tick$.
\item[Case $\ifthen{b \at \role R}{\mathcal{I}}{\mathcal{I}'}$] the first
  transition can be derived using either Rule\\ $\did{\AIOC}{If-then}$ or Rule
  $\did{\AIOC}{If-else}$. In both the cases the thesis follows by induction
  on the length of the trace.
\item[Case $\while{b \at \role R}{\mathcal{I}}$] the first transition can
  be derived using either Rule\\ $\did{\AIOC}{While-unfold}$ or Rule
  $\did{\AIOC}{While-exit}$. In both the cases the thesis follows by
  induction on the length of the trace.
\item[Case $\scope{l}{\role R}{\mathcal I}{\Delta}$] the first transition can be
	derived using either Rule $\did{\AIOC}{Up}$ or Rule $\did{\AIOC}{NoUp}$. In
	both the cases the thesis follows by induction on the length of the trace.
\end{description}
The weak internal traces of the \AIOC{} coincide with the weak internal
traces of the \APOC{} by Theorem~\ref{teo:final}, thus also the finite weak
internal traces of the \APOC end with $\tick$. The same holds for the finite strong
internal traces, since label $\tick$ is preserved when moving between strong
and weak traces, and no transition can be added after the $\tick$ thanks to
Lemma~\ref{lemma:tickEnds}.
\end{proof}

\APOC{}s also inherit termination from terminating \AIOC{}s. 

\begin{definition}[Termination]\emph{
A \AIOC{} (resp.\ \APOC{}) system terminates if all its internal traces are 
finite. 
}\end{definition}

Note that if arbitrary sets of updates are allowed, then termination of
\AIOC{}s that contain at least a scope is never granted.
Indeed, the scope can always be replaced by a non-terminating update or it
can trigger an infinite chain of updates. Thus, to exploit this result, one
should add constraints on the set of updates ensuring \AIOC{} termination.

\begin{corollary}[Termination]\label{cor2}
If the \AIOC{} system $\tuple{\Sigma,\rules,\mathcal{I}}$ terminates and
$\mathcal{I}$ is connected then the \APOC{} system
$\tuple{\rules,\proj({\mathcal I},\Sigma)}$ terminates.
\end{corollary}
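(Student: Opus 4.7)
My plan is to argue by contradiction using the weak system bisimilarity $\tuple{\Sigma, \rules, \mathcal{I}} \bisim \tuple{\rules, \proj(\mathcal{I}, \Sigma)}$ established in Theorem~\ref{teo:final}. Suppose the \APOC{} system admits an infinite internal trace $\tuple{\rules_0, \net_0} \arro{\eta_1} \tuple{\rules_1, \net_1} \arro{\eta_2} \dots$ with $\tuple{\rules_0, \net_0} = \tuple{\rules, \proj(\mathcal{I}, \Sigma)}$ and no $\eta_i$ corresponding to a change of the set of available updates. Using the bisimulation clauses, I would build inductively a corresponding sequence of bisimilar \AIOC{} systems $\tuple{\Sigma_i, \rules_i, \mathcal{I}_i}$, where each \APOC{} step is either \emph{matched} by an \AIOC{} transition with the same label $\eta_i$ or \emph{unmatched} (with $\eta_i \in \{\commLabel{o^*}{\role R_1}{v}{\role R_2}{x}, \commLabel{o^*}{\role R_1}{X}{\role R_2}{}, \tau\}$ and the \AIOC{} staying idle).

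If infinitely many steps are matched, then the \AIOC{} exhibits an infinite internal trace, contradicting its termination. Hence, from some index $N$ on every step in the \APOC{} trace is unmatched. By the invariant on $\mathcal{R}$ from the proof of Theorem~\ref{teo:final}, we have $\upd(\net_i) = \proj(\mathcal{I}_i, \Sigma_i)$, and since the \AIOC{} component is fixed from index $N$ on, every subsequent \APOC{} reduction is a $\upd$-internal operation among those listed in Definition~\ref{def:upd}.

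The core of the remaining argument is to show that no infinite chain of consecutive $\upd$-internal reductions can occur. I would exhibit a well-founded measure on \APOC{} networks, for example a multiset counting pending auxiliary prefixes (outstanding $\AuxWb$, $\AuxIf$, $\AuxSb$ messages awaiting reception, conditionals whose guards have just been received, and $\AuxWe$, $\AuxSe$ signals awaiting disposal) ordered by the standard multiset extension. Each $\upd$-internal reduction consumes at least one such prefix without generating new ones in the unmatched tail: a non-coordinator while-unfold introduces a fresh inner iteration, but its auxiliary traffic cannot start until the coordinator of the loop sends a new $\AuxWb$, which is a matched step and therefore absent from the tail. Hence the measure strictly decreases on unmatched steps, and after finitely many of them the \APOC{} must stabilise at $\proj(\mathcal{I}_N, \Sigma_N)$. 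Any subsequent \APOC{} transition from this stabilised state would be matched by a transition of $\mathcal{I}_N$, contradicting our assumption that no further matched steps occur.

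The main obstacle is choosing and validating this well-founded measure, in particular handling while-unfolds at non-coordinator roles without letting them artificially inflate the pending coordination. The key observation that resolves this is that new coordination prefixes are produced only by coordinator actions, and coordinator actions (guard evaluations, scope update selections) are systematically matched by \AIOC{} transitions, so they cannot appear in a purely unmatched tail.
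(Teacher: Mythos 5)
Your strategy is, at its core, the same as the paper's, but the paper proves this corollary in a single sentence: ``It follows from the fact that only a finite number of auxiliary actions are added when moving from \AIOC{}s to \APOC{}s.'' That assertion is exactly what your well-founded measure is meant to establish, so what you have written is essentially the missing elaboration of the published argument: the bisimulation bookkeeping separating matched from unmatched steps, the observation that infinitely many matched steps would yield an infinite internal \AIOC{} trace, and the reduction of the residual problem to termination of the $\upd$-internal reductions. That decomposition is correct and considerably more rigorous than what the paper offers.

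One step of your measure argument is imprecise, however. You claim that each unmatched reduction ``consumes at least one such prefix without generating new ones,'' and you justify the while-unfold case by saying the fresh iteration's auxiliary traffic cannot start until the coordinator sends a new $\AuxWb^*$. That is not quite right: when a non-coordinator receives $\AuxWb^*(\trueC)$ and unfolds, the newly exposed copy of the body $P$ --- which may itself contain auxiliary receives for nested conditionals, scopes, and inner loops --- becomes active immediately; it is only the \emph{next} iteration, guarded by the trailing $\AuxWb^*$ receive, that is blocked on a further coordinator action. So a naive count of pending auxiliary prefixes can strictly increase on an unmatched step. The argument survives, but the measure must be chosen more carefully: for instance, count only the auxiliary prefixes reachable without crossing a matched (coordinator) event, or argue that each auxiliary send left pending by the finitely many matched steps before index $N$ can trigger only a statically bounded amount of auxiliary activity at each receiver. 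With that repair your proof is complete and strictly more informative than the one in the paper.
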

\begin{proof} 
It follows from the fact that only a finite number of auxiliary actions are
added when moving from \AIOC{}s to \APOC{}s.
\end{proof}

Other interesting properties derived from weak trace equivalence are freedom
from races and orphan messages. A race occurs when the same receive (resp.\
send) may interact with different sends (resp.\ receives). In our setting, an
orphan message is an enabled send that is never consumed by a receive. Orphan
messages are more relevant in asynchronous systems, where a message may be
sent, and stay forever in the network, since the corresponding receive
operation may never become enabled. However, even in synchronous systems
orphan messages should be avoided: the message is not communicated since the
receive is not available, hence a desired behaviour of the application never
takes place due to synchronisation problems.
Trivially, \AIOC{}s avoid races and orphan
messages since send and receive are bound together in the same construct.
Differently, at the \APOC{} level, since all receive of the form
$\idx{\iota}\cinp{\idxSign{i}.o^?}{x}{\role R_1}$ in role $\role R_2$ may
interact with the sends of the form
$\idx{\iota}\cout{\idxSign{i}.o^?}{e}{\role R_2}$ in role $\role R_1$,
races may happen. However, thanks to the correctness of the projection, race
freedom holds also for the projected \APOC{}s.

\begin{corollary}[Race freedom]\label{cor:race}
For each initial, connected \AIOC{} ${\mathcal I}$, state $\Sigma$, and
set of updates $\rules$, if $\tuple{\rules,\proj({\mathcal I},\Sigma)}
\arro{\eta_1}\cdots\arro{\eta_n} \tuple{\rules',\net}$, where $\eta_i \in
\{\tau, \commLabel{o^?}{\role R_1}{v}{\role R_2}{x}, \commLabel{o^*}{\role
R_1}{X}{\role R_2}{}, \tick, {\mathcal I},\NoAdaptLabel,
\rules \}$ for each $i \in \{1,\ldots,n\}$, then in $\net$ there are no two sends
(resp.\ receives) which can interact with the same receive (resp.\ send).
\end{corollary}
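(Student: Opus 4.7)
The plan is to argue by contradiction using the well-annotation conditions of Definition~\ref{defin:synchwa}, together with Lemma~\ref{lemma:minimalevent}. The first step is to establish that well-annotation is preserved along the reduction sequence. By Lemma~\ref{lemma:IOCwell}, $\proj(\mathcal{I},\Sigma)$ is well-annotated; one then observes that each kind of transition in Figures~\ref{fig:apoc-proc_1},~\ref{fig:apoc-proc_2}, and~\ref{fig:apoc-sys} either consumes a minimal event (preserving C2--C4 for the remaining ones by inheritance), or, in the case of an update via Rule $\did{\APOC}{Lead-\AdaptRule}$, splices in code obtained by projecting a connected \AIOC{} with fresh indexes, which by Lemma~\ref{lemma:IOCwell} applied to the update guarantees C1--C6 for the new events and, thanks to the freshness condition, cannot collide with the old ones. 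Hence $\net$ is well annotated.

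Suppose now for contradiction that $\net$ contains two distinct enabled sends in some role $\role R_1$ targeting $\role R_2$ on the same operation $\idxSign{i}.o^?$; call the corresponding events $f_\xi$ and $f_{\xi'}$. By Lemma~\ref{lemma:distinct} their global indexes satisfy $\xi\neq\xi'$, since distinct constructs have distinct global indexes. By condition C3 (or, for auxiliary operations, by the same structural argument on projection) either they are conflicting, or they are related by $\leqapoc$. In the second case Lemma~\ref{lemma:minimalevent} forces the successor among the two to be non-minimal, contradicting the assumption that both are enabled.

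The remaining case is that $f_\xi$ and $f_{\xi'}$ are conflicting, i.e., they lie in different branches of some subterm $\ifthenKey{b}{P}{P'}{j}$ of $\net$. By the \textbf{If} clause of Definition~\ref{def:causalapoc}, the guard if-event $\ev_j$ precedes both $f_\xi$ and $f_{\xi'}$ in $\leqapoc$. Since both are enabled, again by Lemma~\ref{lemma:minimalevent} $\ev_j$ must already have been executed; but the semantic rules $\did{\APOC}{If-Then}$ and $\did{\APOC}{If-Else}$ reduce the conditional to a single branch, erasing the other one from the syntax, so at most one of $f_\xi,f_{\xi'}$ can still occur in $\net$, a contradiction. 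The case of two receives competing for the same send is symmetric, using C4 in place of C3.

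The main obstacle I expect is the preservation of well-annotation through the $\did{\APOC}{Lead-\AdaptRule}$ transition: one needs to verify carefully that the projection of the freshly-indexed update, when inserted in place of the scope body and composed with the ambient network, still satisfies C1--C6 globally rather than merely locally. The freshness predicate $\func{freshIndexes}$ takes care of C1, C3, C4, and C6 (since new indexes cannot collide with existing ones, and the inserted events live strictly between $\uparrow_\xi$ and $\downarrow_\xi$ of the replaced scope so C5 is preserved), while connectedness of the update, together with Lemma~\ref{lemma:IOCwell} applied to the update itself, handles C2. The rest of the proof is then a direct application of Lemma~\ref{lemma:minimalevent} and a case analysis as above.
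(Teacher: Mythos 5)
Your proof is correct in substance but takes a genuinely different route through the well-annotation machinery than the paper does. The paper's proof is much shorter: it observes that two sends able to interact with the same receive must carry the same operation prefix $\idxSign{i}$, that enabled communications necessarily sit outside while-loop bodies so their index coincides with their global index, and hence that such a pair would yield two sending events with the same global index --- directly contradicting condition C1 of Definition~\ref{defin:synchwa}, which allows at most two communication events per global index and forces them to be matching (one send and one receive); auxiliary operations are then dispatched by inspecting the projection. You instead reach the contradiction through C3/C4 plus Lemma~\ref{lemma:minimalevent}: non-conflicting same-operation sends with distinct global indexes are $\leqapoc$-ordered, so at most one is minimal and hence enabled, and the conflicting case collapses because an unreduced conditional blocks both branches while a reduced one erases one of them. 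This works, with two caveats. First, your appeal to Lemma~\ref{lemma:distinct} to obtain $\xi\neq\xi'$ is technically an appeal to a lemma about \AIOC{} processes, not \APOC{} reducts; but if $\xi=\xi'$ you can fall back on C1, so the argument is robust either way. Second, both you and the paper need well-annotation to hold in the reduct $\net$, not just in the initial projection; the paper leaves this implicit (it is maintained as part of the bisimulation invariant in the proof of Theorem~\ref{teo:final}), whereas you sketch the preservation argument explicitly, and your identification of Rule $\did{\APOC}{Lead-\AdaptRule}$ as the delicate case is exactly right. The paper's route buys brevity; yours makes the dependence on enabledness, minimality, and preservation under reduction explicit.
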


\begin{proof}
We have two cases, corresponding respectively to programmer-specified and
auxiliary operations.

For programmer-specified operations, thanks to Lemma~\ref{lemma:IOCwell}, case
C1, for each global index $\xi$ there are at most two communication events
with global index $\xi$. The corresponding \APOC{} terms can be enabled only
if they are outside of the body of a while loop. Hence, their index coincides
with their global index. Since the index prefixes the operation, then no
interferences with other sends or receives are possible.

For auxiliary operations, the reasoning is similar. Note, in fact, that sends
or receives with the same global index can be created only by a unique \AIOC{}
construct, but communications between the same pair of roles are never enabled
together. This can be seen by looking at the definition of the projection.
\end{proof}

As far as orphan messages are concerned, they may appear in infinite
\APOC{} computations since a receive may not become enabled due to an
infinite loop. However, as a corollary of trace equivalence, we have
that terminating \APOC{}s are orphan-message free.

\begin{corollary}[Orphan-message freedom]\label{cor:orphan}
 For each initial, connected \AIOC{} ${\mathcal I}$, state $\Sigma$, and set of
updates $\rules$, if 
$\tuple{\rules,\proj({\mathcal I},\Sigma)}
\arro{\eta_1}\cdots\arro{\eta_n} \arro{\tick} \tuple{\rules',\net}$, where
$\eta_i \in \{\tau, \commLabel{o^?}{\role R_1}{v}{\role R_2}{x}, \commLabel{o^*}{\role
R_1}{X}{\role R_2}{}, \tick, {\mathcal I},\NoAdaptLabel,
\rules \}$, then $\net$ contains no sends.
\end{corollary}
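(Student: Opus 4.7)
The plan is to give a direct structural argument based on the form of the APOC rules for $\tick$, largely independent of the correctness theorem. First, I would observe that a system-level transition $\tuple{\rules',\net^\star} \arro{\tick} \tuple{\rules',\net}$ can only be derived by rules $\did{\APOC}{Ext\text{-}Par\text{-}End}$ and $\did{\APOC}{Lift}$, since these are the only system-level rules whose conclusion carries label $\tick$. Hence every role of the network must independently fire a role-level $\tick$ transition, and the claim reduces to showing that whenever $\roleExec{P,\Gamma}{R} \arro{\tick} \roleExec{P',\Gamma'}{R}$, the process $P'$ contains no send construct.

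Next, I would do a case analysis on the role-level rules of Figures~\ref{fig:apoc-proc_1} and~\ref{fig:apoc-proc_2} that may produce label $\tick$. Only $\did{\APOC}{One}$, $\did{\APOC}{Par\text{-}end}$, and $\did{\APOC}{Seq\text{-}end}$ qualify, since all remaining rules (send, receive, assignment, conditional, while-unfold/exit, scope-lead and scope rules) emit labels other than $\tick$. With this restriction in hand, I would prove by structural induction on the derivation of $\roleExec{P,\Gamma}{R} \arro{\tick} \roleExec{P',\Gamma'}{R}$ the invariant that $P$ is built exclusively from $\one$ by sequential and parallel composition, and that $P'$ is obtained from $P$ by replacing each $\one$ with $\zero$. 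The base case is $\did{\APOC}{One}$ with $P = \one$ and $P' = \zero$. For $\did{\APOC}{Par\text{-}end}$, both operands fire $\tick$ and become $\zero$-only by induction. For $\did{\APOC}{Seq\text{-}end}$, the first operand fires $\tick$ by the rule's premise, and imposing $\delta = \tick$ on the conclusion forces the second operand to fire $\tick$ as well, so both are $\zero$-only by induction.

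Lifting the invariant back through $\did{\APOC}{Ext\text{-}Par\text{-}End}$ and $\did{\APOC}{Lift}$, each role's local process in $\net$ is a tree of $\zero$'s composed sequentially and in parallel; in particular, no send construct occurs in $\net$, which is the desired conclusion. The main point of attention is the asymmetry of rule $\did{\APOC}{Seq\text{-}end}$, which allows an arbitrary $\delta$ for the sequential composition, but fixing $\delta = \tick$ in the induction forces the continuation to terminate too and the argument closes. Notably, the hypotheses that $\mathcal{I}$ is initial and connected and that the prefix $\eta_1,\ldots,\eta_n$ has labels of the specified kinds are not really needed for this statement: orphan-message freedom upon $\tick$ is in fact a purely syntactic property of how $\tick$ is derived in the APOC semantics, and the appeal to Theorem~\ref{teo:final} via the preceding trace-based discussion is only there to guarantee that a $\tick$ transition can indeed be reached from $\proj(\mathcal{I},\Sigma)$ in the first place.
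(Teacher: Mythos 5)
Your proof is correct and follows essentially the same route as the paper, whose own argument is exactly a case analysis on the rules that can derive a $\tick$ transition (concluding that all cases are easy); you have simply spelled out the induction showing that a $\tick$-able process must be built from $\one$ by sequential and parallel composition and hence contains no sends. Your side remark that the initiality, connectedness, and label hypotheses are not needed for this particular implication is also accurate.
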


\begin{proof}
The proof is by case analysis on the rules which can derive a transition with
label $\tick$. All the cases are easy.
\end{proof}

\section{Adaptable Interaction-Oriented Choreographies in Jolie}\label{sec:aiocj}
\setMyChor{}

In this section we present \AIOCJ (Adaptable Interaction-Oriented
Choreographies in Jolie), a development framework for adaptable distributed
applications~\cite{autonomicComputing}. \AIOCJ is one of the possible
instantiations of the theoretical framework presented in the previous sections
and it gives a tangible proof of the expressiveness and feasibility of our
approach.
We say that \AIOCJ is an instance of our theory because it follows the theory,
but it provides mechanisms to resolve the non-determinism related to the
choice of whether to update or not and on which update to select. Indeed, in
\AIOCJ updates are chosen and applied according to the state of the
application and of its running environment. To this end, updates are embodied
into adaptation rules, which specify when and whether a given update can be
applied, and to which scopes.
\AIOCJ also inherits all the correctness guarantees provided by our theory, in
particular: \emph{i}) applications are free from deadlocks and races by
construction, \emph{ii}) applications remain correct after any step of
adaptation. As in the theory, adaptation rules can be added and removed while
applications are running.
\begin{remark}\label{remark:assumption_practice}\emph{
In order for the correctness guarantees to be provided, one needs to satisfy
the required assumptions, in particular the fact that functions never block
and always return a value (possibly an error notification). If this condition
is not satisfied, at the theoretical level, the behaviour of the source \AIOC
and the behaviour of its projection may differ, as described in
Remark~\ref{remark:assumption_theory}. The behaviour of the corresponding
AIOCJ application can be different from both of them. Usually, the violation
of the assumption above leads the application to crash. This allows the
programmer to realise that the assumption has been violated.
}\end{remark}

Below we give a brief overview of the \AIOCJ framework by introducing its
components: the Integrated Development Environment (IDE), the AIOCJ compiler,
and the Runtime Environment.

\paragraph{\emph{Integrated Development Environment}}

\AIOCJ supports the writing of programs and adaptation rules in the Adaptable
Interaction-Oriented Choreography (AIOC) language, an extension of the \AIOC{}
language. We discuss the main novelties of the AIOC language in
\S~\ref{sec:aiocj_aioc}.
\AIOCJ offers an integrated environment for developing programs and adaptation
rules that supports syntax highlighting and on-the-fly syntax checking. Since
checking for connectedness (see \S~\ref{sub:connectedness}) of programs and
adaptation rules is polynomial (as proven by Theorem~\ref{teo:compl}), the IDE
also performs on-the-fly checks on connectedness of programs and rules.

\paragraph{\emph{Compiler}}

The \AIOCJ IDE also embeds the \AIOCJ compiler, which implements the procedure
for projecting AIOCs and adaptation rules into distributed executable code.
The implementation of the compiler is based on the rules for projecting
\AIOC{}s, described in \S~\ref{sec:endpoint_projection}.
The target language of the \AIOCJ compiler is Jolie~\cite{jolie,joliepaper}, a
Service-Oriented language with primitives similar to those of our \APOC{}
language. Jolie programs are also called \emph{services}.
Given an AIOC program, the \AIOCJ compiler produces one Jolie service for each role in the source AIOC.
The compilation of an AIOC rule produces one Jolie service for each role and
an additional service that describes the applicability condition of the rule.
All these services are enclosed into an \textsf{Adaptation Server}, described
below.

\paragraph{\emph{Runtime Environment}}

The \AIOCJ runtime environment comprises a few Jolie services that support the
execution and adaptation of compiled programs. The main services of the \AIOCJ
runtime environment are the \textsf{Adaptation Manager}, \textsf{Adaptation
Server}s, and the \textsf{Environment}.
The compiled services interact both among themselves and with an
\textsf{Adaptation Manager}, which is in charge of managing the adaptation
protocol. \textsf{Adaptation Server}s contain adaptation rules, and they can
be added or removed dynamically, thus enabling dynamic changes in the set of
rules, as specified by Rule $\did{\AIOC}{Change-Updates}$. When started, an
\textsf{Adaptation Server} registers itself at the \textsf{Adaptation Manager}.
The \textsf{Adaptation Manager} invokes the registered \textsf{Adaptation
Server}s to check whether their adaptation rules are applicable. In order to
check whether an adaptation rule is applicable, the corresponding
\textsf{Adaptation Server} evaluates its applicability condition.
Applicability conditions may refer to the state of the role which coordinates
the update, to properties of the scope, and to properties of the environment
(e.g., time, temperature, etc.), stored in the \textsf{Environment} service.

In the remainder of this Section we detail the grammar of the AIOC
language used by \AIOCJ in \S~\ref{sec:aiocj_aioc}, we illustrate the use of
\AIOCJ{} on a simple example in \S~\ref{sec:aiocj_preview}, we discuss some
relevant implementation aspects of \AIOCJ in \S~\ref{sec:implementation}, and
we present some guidelines on how to use scopes in \AIOCJ in
\S~\ref{sub:guidelines_for_programming_in_aiocj}.

\subsection{DIOC Language Extensions in AIOCJ}\label{sec:aiocj_aioc}
AIOCs extend \AIOC{}s with:

\begin{itemize}
  \item the definition of adaptation rules, instead of updates, that include the information needed to evaluate their applicability condition;
  \item the definition of constructs to express the deployment information
  needed to implement real-world distributed applications.
\end{itemize}
Below we describe in detail, using the Extended Backus-Naur
Form~\cite{Backus59}, the new or refined constructs introduced by the AIOC
language.

\noindent
\textbf{Function inclusions.}
The AIOC language can exploit functionalities provided by external
services via the \lstinline{include} construct. The syntax is as
follows.

\begin{lstlisting}
Include ::= include FName [,FName]* from "URL" [with PROTOCOL]
\end{lstlisting}

This allows one to reuse existing legacy code and to interact with
third-party external applications.  As an example, the $\seller$ of
our running example can exploit an external database to implement the
functionality for price retrieval \lstinline{getPrice}, provided that
such a functionality is exposed as a service.  If the service is
located at \lstinline{"socket://myService:8000"} and accessible via
the \lstinline{"SOAP"} protocol we enable its use with the following
inclusion:

\begin{lstlisting}
include getPrice from "socket://myService:8000" with "SOAP"
\end{lstlisting}

Similarly, the $\bank$ IT system is integrated in the example by invoking the
function \lstinline{makePayment}, also exposed as a service.

This feature enables a high degree of integration since \AIOCJ
supports all protocols provided by the underlying Jolie language,
which include TCP/IP, RMI, SOAP, XML/RPC, and their encrypted
correspondents over SSL.

External services perfectly fit the theory described in previous sections,
since they are seen as functions, and thus introduced in expressions. Notice,
in particular, that Remark~\ref{remark:assumption_practice} applies.

\noindent
\textbf{Adaptation rules.}
Adaptation rules extend \AIOC{} updates, and are a key ingredient of
the \AIOCJ{} framework. The syntax of adaptation rules is as follows:

\begin{lstlisting}[xleftmargin=130pt]
Rule ::= rule {
  [ Include ]*
  on { Condition }
  do { Choreography }
}
\end{lstlisting}

The applicability condition of the adaptation rule is specified using
the keyword \lstinline{on}, while the code to install in case
adaptation is performed (which corresponds to the \AIOC{} update) is
specified using the keyword \lstinline{do}.
Optionally, adaptation rules can include functions they rely on.

The \lstinline{Condition} of an adaptation rule is a propositional
formula which specifies when the rule is applicable. To this end, it
can exploit three sources of information: local variables of the
coordinator of the update, environmental variables, and properties of
the scope to which the adaptation rule is applied. Environmental
variables are meant to capture contextual information that is not
under the control of the application (e.g., temperature, time,
available resources, \dots).  To avoid ambiguities, local variables
of the coordinator are not prefixed, environment variables are
prefixed by \lstinline{E}, and properties of the scope are prefixed
by \lstinline{N}.

For example, if we want to apply an adaptation rule only to those
scopes whose property \lstinline{name} is equal to the
string \lstinline{"myScope"} we can use as applicability condition the
formula \lstinline{N.name == "myScope"}.

\noindent
\textbf{Scopes.}
As described above, scopes in AIOC also feature a set of properties
(possibly empty).  Scope properties describe the current
implementation of the scope, including both functional and
non-functional properties. Such properties are declared by the
programmer, and the system only uses them to evaluate the applicability
condition of adaptation rules, to decide whether a given adaptation
rule can be applied to a given scope. Thus the syntax for scopes in
AIOC is:

\begin{lstlisting}[xleftmargin=130pt]
scope @Role { Choreography }
[ prop { Properties } ]?
\end{lstlisting}

\noindent where clause \lstinline{prop} introduces a list of comma-separated
assignments of the form \\\lstinline{N.ID = Expression}.

For instance, the code 

\begin{lstlisting}[xleftmargin=130pt,basicstyle=\ttfamily]
scope @R {
  // AIOC code
} prop { N.name = "myScope"}
\end{lstlisting}

\noindent specifies that the scope has a property \lstinline{name} set to the
string \lstinline{"myScope"}, thus satisfying the applicability condition
\lstinline{N.name == "myScope"} discussed above.

\noindent
\textbf{Programs.}
AIOC programs have the following structure.

\begin{lstlisting}[xleftmargin=130pt]
Program ::=
  [ Include ]*
  preamble {
    starter: Role
    [ Location ]*
  }
  aioc { Choreography }
\end{lstlisting}

where \lstinline{Include} allows one to include external
functionalities, as discussed above, and keyword \lstinline{aioc}
introduces the behaviour of the program, which is a \AIOC{} apart for
the fact that scopes may define properties, as specified above.

The keyword \lstinline{preamble} introduces deployment information,
i.e., the definition of the \lstinline{starter} of the AIOC and the
\lstinline{Location} of participants.

The definition of a \lstinline{starter} is mandatory and designs which role is
in charge of waiting for all other roles to be up and running before starting
the actual computation. Any role can be chosen as \lstinline{starter}, but the
chosen one needs to be launched first when running the distributed application.

\lstinline{Location}s define where the participants of the AIOC will be
deployed. They are specified using the keyword \lstinline{location}: 

\begin{lstlisting}[xleftmargin=130pt]
Location ::= location@Role:"URL"
\end{lstlisting}

where \lstinline{Role} is the name of a role (e.g., \lstinline{Role1}) and
\lstinline{URL} specifies where the service can be found (e.g.,
\lstinline{"socket://Role1:8001"}). When not explicitly defined, the
projection automatically assigns a distinct local TCP/IP location to each
role.

\subsection{AIOCJ Workflow}\label{sec:aiocj_preview}
Here we present a brief description of how a developer can write an
adaptable distributed system in \AIOCJ, execute it, and change its
behaviour at runtime by means of adaptation rules.
For simplicity, we reuse here the minimal example presented in the
Introduction, featuring a scope that encloses a price offer from the
$\seller$ to the $\buyer$ and an update --- here an adaptation rule
--- that provides a discount for the $\buyer$. We report the AIOC program in
the upper part and the adaptation rule in the lower part of
Figure~\ref{fig:example_AIOCJ}.

\begin{figure}[t]
\begin{lstlisting}[mathescape=true,xleftmargin=15pt,numbers=left,label=es:preview_aioc]
include getPrice from "socket://storage.seller.com:80"
preamble{ starter: Seller }
aioc {
 scope @Seller {
  order_price@Seller = getPrice( order );
  offer: Seller( order_price ) -> Buyer( prod_price )
} prop { N.scope_name = "price_inquiry" } }

\end{lstlisting}

\noindent\hfil\rule{.8\textwidth}{.4pt}\hfil

\begin{lstlisting}[mathescape=true,xleftmargin=15pt,numbers=left,label=es:preview_rule]
rule {
 include getPrice from "socket://storage.seller.com:80"
 include isValid from "socket://discounts.seller.com:80"
 on { N.scope_name == "price_inquiry" and E.season == "Fall" }
 do {
  cardReq: Seller( _ ) -> Buyer( _ );
  card_id@Buyer = getInput( "Insert your customer card ID");
  cardRes: Buyer( card_id ) -> Seller( buyer_id );
  if isValid( buyer_id )@Seller {
   order_price@Seller = getprice( order )*0.9;
  } else {
   order_price@Seller = getPrice( order )
  };
  offer: Seller( order_price ) -> Buyer( prod_price )
  }
}
\end{lstlisting}
\caption{An AIOC program (upper part) and an applicable adaptation rule (lower
part).\label{fig:example_AIOCJ}}
\end{figure}
At Line 1 of the AIOC program we have the inclusion of function
\lstinline{getPrice}, which is provided by a service within the internal
network of the Seller, reachable as a TCP/IP node at URL
\lstinline{"storage.seller.com"} on port \lstinline{"80"}.
At Line 2 of the AIOC program we have the \lstinline{preamble}. The preamble
specifies deployment information, and, in particular, defines the
\lstinline{starter}, i.e., the service that ensures that all the participants
are up and running before starting the actual computation. No locations are
specified, thus default ones are used. The actual code is at Lines 4--7, where
we declare a scope.
At Line 7 we define a property \lstinline{scope_name} of this scope
with value \lstinline{"price_inquiry"}.

Figure~\ref{fig:framework_scheme} depicts the process of compilation
\pnum{1} and execution \pnum{2} of the AIOC.
From left to right, we write the AIOC and we compile it into a set of
executable Jolie services (\textsf{Buyer service} and \textsf{Seller
service}).  To execute the projected system, we first launch the
\textsf{Adaptation Manager} and then the two compiled services, starting from
the \textsf{Seller}, which is the \lstinline{starter}. Since there is no
compiled adaptation rule, the result of the execution is the offering of the
standard price to the \lstinline{Buyer}.

Now, let us suppose that we want to adapt our system to offer discounts in the
Fall season. To do that, we can write the adaptation rule shown in the lower
part of Figure~\ref{fig:example_AIOCJ}.

Since we want to replace the scope for the \lstinline{price_inquiry}, we
define, at Line 4, that this rule applies only to scopes with property
\lstinline{scope_name} set to \lstinline{"price_inquiry"}. Furthermore, since
we want the update to apply only in the Fall season, we also specify that the
environment variable \lstinline{E.season} should match the value
\lstinline{"Fall"}. The value of \lstinline{E.season} is retrieved from the
\textsf{Environment Service}.

The rule uses two functions. At Line 2, it includes function
\lstinline{getPrice}, which is the one used also in the body of the scope. The
other function, \lstinline{isValid}, is included at Line 3 and it is provided
by a different TCP/IP node, located at URL \lstinline{"discounts.seller.com"}
on port \lstinline{"80"} within the internal network of the Seller.

The body of the adaptation rule (Lines 6--14) specifies the same behaviour
described in Figure~\ref{fig:rule_price_inquiry} in the Introduction: the
\lstinline{Seller} asks the \lstinline{Buyer} to provide its
\lstinline{card_id}, which the \lstinline{Seller} uses to provide a discount
on the price of the ordered product. We depict the inclusion of the new
adaptation rule (outlined with dashes) and the execution of the adaptation at point \pnum{3} of
Figure~\ref{fig:framework_scheme}. From right to left,
we write the rule and we compile it. The compilation of a (set of) adaptation
rule(s) in AIOCJ produces a service, called \textsf{Adaptation Server} (also outlined with dashes), that
the \textsf{Adaptation Manager} can query to fetch adaptation rules at
runtime. The compilation of the adaptation rule can be done while the
application is running. After the compilation, the generated
\textsf{Adaptation Server} is started and registers on the \textsf{Adaptation
Manager}. Since the rule relies on the environment to check its
applicability condition, we also need the \textsf{Environment service} to be
running. In order for the adaptation rule to apply we need the environment
variable \texttt{season} to have value \lstinline{"Fall"}.

\begin{figure}[t]
  \centering
  \includegraphics[width=\textwidth]{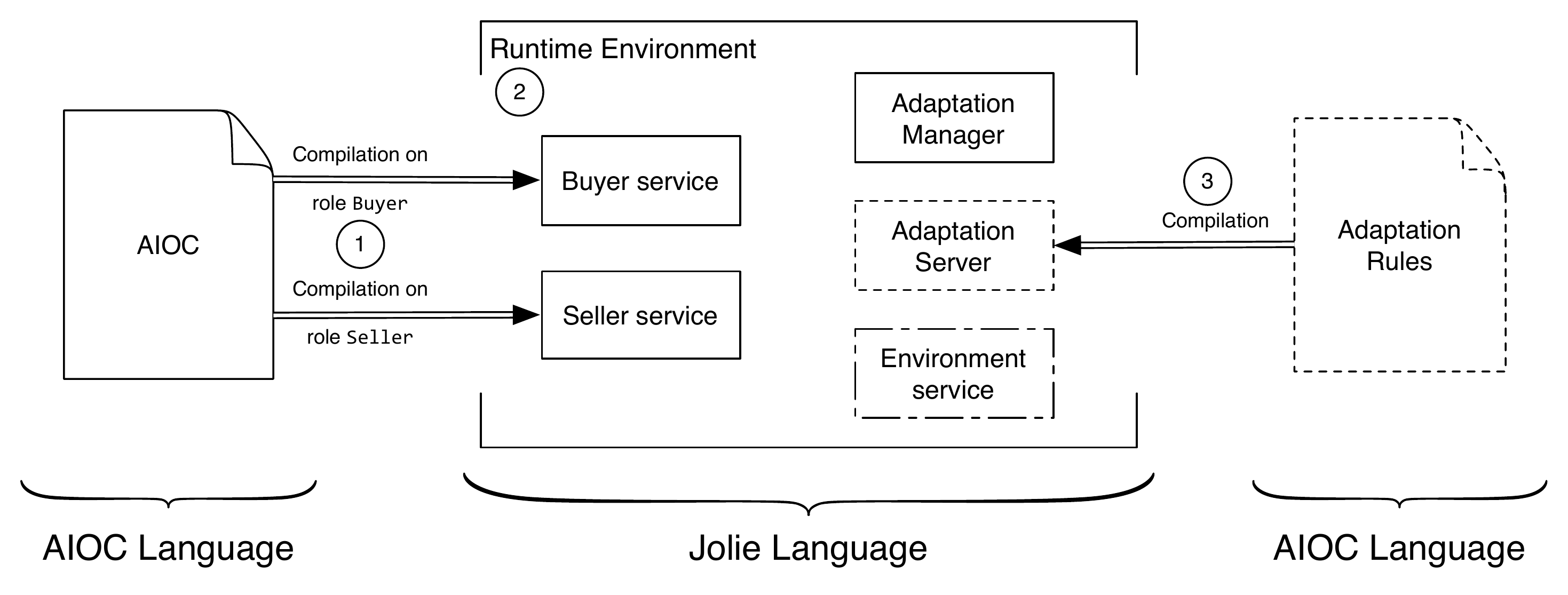}
  \caption{Representation of the AIOCJ framework --- Projection and execution
    of the example in Figure~\ref{fig:example_AIOCJ}.}
  \label{fig:framework_scheme} 
\end{figure}

\subsection{Implementation}\label{sec:implementation}
AIOCJ is composed of two elements: the AIOCJ Integrated Development
Environment (IDE), named
\toolplugin{}, and the adaptation middleware that enables AIOC programs to
adapt, called \toolmid{}.

\toolplugin{} is a plug-in for Eclipse~\cite{eclipse} based on
Xtext~\cite{xtext}. Starting from a grammar, Xtext generates the parser for
programs written in the AIOC language. Result of the parsing is an abstract
syntax tree (AST) we use to implement \emph{i}) the checker of connectedness
for AIOC programs and adaptation rules and \emph{ii}) the generation of Jolie
code for each role. Since the check for connectedness has polynomial
computational complexity (cf.~Theorem~\ref{teo:compl}) it is efficient
enough to be performed while editing the code.
Figure~\ref{fig:con_seq} shows \toolplugin{} notifying the error on the first
non-connected instruction (Line 13).
\begin{figure}[t]
	\begin{center}
		\includegraphics[width=\textwidth]{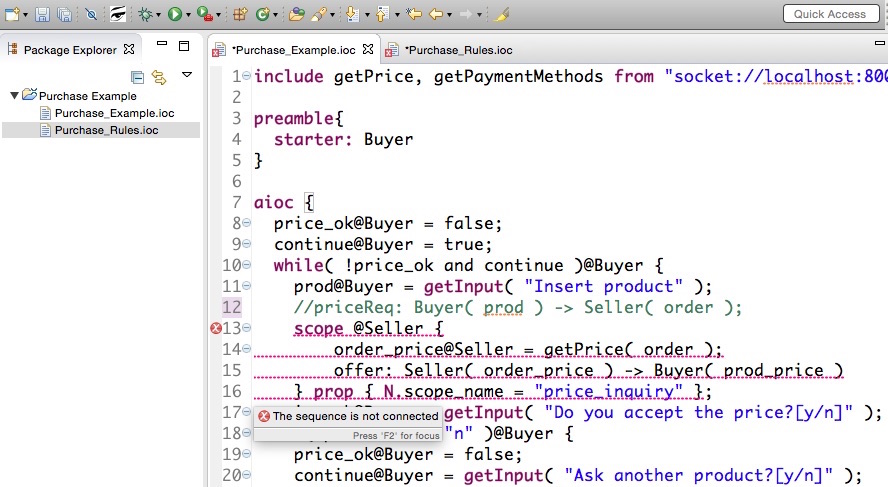}
	\end{center}
	\caption{Check for connectedness.}
	\label{fig:con_seq}
\end{figure}

As already mentioned, we chose Jolie as target language of the compilation of
\AIOCJ because its semantics and language constructs naturally lend themselves
to translate our theoretical results into practice.
Indeed, Jolie supports architectural primitives like dynamic embedding,
aggregation, and redirection, which ease the compilation of AIOCs. 

Each scope at the AIOC level is projected into
a specific sub-service for each role.  The roles run the projected
sub-services by embedding them and access them via redirection. In this way,
we implement adaptation by disabling the default sub-service and by
redirecting the execution to a new one, obtained from the \textsf{Adaptation
server}.

When at runtime the coordinator of the update reaches the beginning of that
scope, it queries the \textsf{Adaptation Manager} for adaptation rules to
apply. The \textsf{Adaptation Manager} queries each \textsf{Adaptation Server}
sequentially, based on their order of registration. On its turn, each
\textsf{Adaptation Server} checks the applicability condition of each of its
rules. The first rule whose applicability condition holds is applied. The
adaptation manager sends to the coordinator the updates that are then
distributed to the other involved roles. In each role, the new code replaces
the old one.

To improve the performance, differently from the theory, in AIOCJ adaptation
rules are compiled statically, and not by the coordinator of the update when
the update is applied.

\subsection{AIOCJ Practice}\label{sub:guidelines_for_programming_in_aiocj}
AIOCJ provides scopes as a way to specify which parts of the application
can be updated. As a consequence, deciding which parts of the code
to enclose in scopes is a relevant and non trivial decision. Roughly,
one should enclose into scopes parts of the code that may need to
be updated in the future.  This includes for instance parts of the
code which model business rules, hence may change according to
business needs, parts of the code which are location dependent, hence
may be updated to configure the software for running in different
locations, parts of the code which are relevant for performance or
security reasons, hence may be updated to improve the performance or
the security level.

One may think that a simple shortcut would be to have either a big
scope enclosing the whole code of the application or many small scopes
covering all the code, but both these solutions have relevant
drawbacks. Indeed, if one replaces a big scope in order to change a
small piece of it, (s)he has to provide again all the code, while one
would like to provide only the part that needs to change. Furthermore,
update for the big scope is no more available as soon as its execution
starts, which may happen quite earlier than when the activity to be
updated starts. Using small scopes also is not a solution, since an
update may need to cover more than one such scope, but there is no way
to ensure that multiple scopes are updated in a coordinated way. Also,
both the approaches have performance issues: in the first case large
updates have to be managed, in the second case many checks for the
updates are needed, causing a large overhead.

In order to write updates for a given scope, one needs to have some knowledge
about the scope that the update is going to replace, since the update will run
in the context where the scope is. This includes, for instance, knowing which
variables are available, their types and their meaning, and in which variables
results are expected. This information can be either tracked in the
documentation of the application, or added to the scope properties. In this
way, the adaptation rule embodying the update can check this information
before applying the update (see for instance the approach
of~\cite{JORBApaper}). The fact that this information is needed for all scopes
is another point against the approach of using many small scopes.

\section{Related work and discussion}\label{sec:related} 

Recently, languages such as Rust~\cite{rust} or SCOOP~\cite{scoop} have been
defined to provide high-level primitives to program concurrent applications
avoiding by construction some of the risks of concurrent programming. For
instance, Rust ensures that there is exactly one binding to any given resource
using ownership and borrowing mechanisms to achieve memory
safety. SCOOP instead provides the \texttt{separate} keyword to enable
programming patterns that enforce data-race freedom.
In industry, Rust has started to be used for the development of complex
concurrent/distributed systems (e.g., the file storage system of
Dropbox~\cite{dropbox_rust} or the parallel rendering engine of
Mozilla~\cite{servo_rust}). 
However, industry has not yet widely adopted any language-based solution
trading expressive power of the language for safety guarantees.

Inspired by the languages above, we have presented high-level primitives for
the dynamic update of distributed applications. We guarantee the absence of
communication deadlocks and races by construction for the running distributed
application, even in presence of updates that were unknown when the
application was started. More generally, the compilation of a \AIOC{}
specification produces a set of low-level \APOC{}s whose behaviour is
compliant with the behaviour of the originating \AIOC{}.

As already remarked, our theoretical model is very general. Indeed, whether to
update a scope or not and which update to apply if many are available is
completely non deterministic. Choosing a particular policy for decreasing the
non-determinism of updates is orthogonal with respect to our results. Our
properties are preserved by any such policy, provided that the same policy is
applied both at the \AIOC{} and at the \APOC{} level.
AIOCJ presents a possible instantiation of our general approach with a
concrete mechanism for choosing updates according to the state of the
application and of its running environment.

Our work is on the research thread of choreographies and multiparty session
types~\cite{hondaESOPext,hondaPOPL,poplmontesi,Castagna,BasuBO12,SalaunBR12}.
One can see~\cite{BETTYsurvey} for a description of the field. 
Like choreographies and multiparty session types, \AIOC{}s target
communication-centred concurrent and distributed applications, avoiding
deadlocks and communication races. Thanks to the particular structure of the
language, \AIOC{}s provide the same guarantees without the need of types.
Below we just discuss the approaches in this research thread closest to ours.
The two most related approaches we are aware of are based on multiparty
session types, and deal with dynamic software updates~\cite{DSUtypes} and with
monitoring of self-adaptive systems~\cite{coppo_multiparty_adaptation}.
The main difference between~\cite{DSUtypes} and our approach is that
\cite{DSUtypes} targets concurrent applications which are not
distributed.  Indeed, it relies on a check on the global state of the
application to ensure that the update is safe. Such a check cannot be
done by a single role, thus it is impractical in a distributed setting.
Furthermore, the language in~\cite{DSUtypes} is much more constrained than
ours, e.g., requiring each pair of participants to interact on a dedicated
pair of channels, and assuming that all the roles that are not the sender or
the receiver within a choice behave the same in the two branches.
The approach in \cite{coppo_multiparty_adaptation} is very different from
ours, too. In particular, in \cite{coppo_multiparty_adaptation}, all the
possible behaviours are available since the very beginning, both at the level
of types and of processes, and a fixed adaptation function is used to switch
between them. This difference derives from the distinction between
self-adaptive applications, as they discuss, and applications updated from the
outside, as in our case.

We also recall \cite{DiGiustoP13}, which uses types to ensure safe
adaptation. However, \cite{DiGiustoP13} allows updates only when no
session is active, while we change the behaviour of running \AIOC{}s.

We highlight that, contrarily to our approach, none of the approaches
above has been implemented.

Various tools~\cite{dsol-ghezzi,pistore_correct_and_adaptable,onthefly}
support adaptation exploiting automatic planning techniques in order to
elaborate, at runtime, the best sequence of activities to achieve a given
goal. These techniques are more declarative than ours, but, to the best of our
knowledge, they are not guaranteed to always find a plan to adapt the
application.

\cite{NeamtiuH09} presents an approach to safe dynamic software
updates for concurrent systems. According to~\cite{NeamtiuH09}, safe means
that all the traces are version consistent, namely that for each trace there
is a version of the application able to produce it. In our case, we want the
effects of the update to be visible during the computation, hence we do not
want this property to hold. Instead, for us safe means that the \AIOC{} and
the projected \APOC{} are weak system bisimilar, and that they are both free
from deadlocks and races.

Among the non-adaptive languages, Chor~\cite{poplmontesi} is the closest to
ours. Indeed, like ours, Chor is a choreographic language that compiles to
Jolie. Actually, AIOCJ shares part of the Chor code base.
Our work shares with~\cite{MontesiCompChor} the interest in
choreographies composition.  However, \cite{MontesiCompChor} uses
multiparty session types and only allows static parallel composition,
while we replace a term inside an arbitrary context at runtime.

Extensions of multiparty session types with error
handling~\cite{carboneEXC,giachinoescape} share with us the difficulties in
coordinating the transition from the expected pattern to an alternative one,
but in their case the error recovery pattern is known since the very
beginning, thus considerably simplifying the analysis.

We briefly compare to some works that exploit choreographic descriptions for
adaptation, but with very different aims. For instance, \cite{ICSOC07} defines
rules for adapting the specification of the initial requirements for a
choreography, thus keeping the requirements up-to-date in presence of run-time
changes. Our approach is in the opposite direction: we are not interested in
updating the system specification tracking system updates, but in programming
and ensuring correctness of adaptation itself.
Other formal approaches to adaptation represent choreographies as annotated
finite state automata. In \cite{DYCHOR06} choreographies are used to propagate
protocol changes to the other peers, while \cite{SCC09} presents a test to
check whether a set of peers obtained from a choreography can be reconfigured
to match a second one. Differently from ours, these works only provide change
recommendations for adding and removing message sequences.

An approach close to ours is multi-tier
programming~\cite{multi_tier,hiphop,multi_tier_web}, where the programmer
writes a single program that is later automatically distributed to different
tiers. This is similar to what we do by projecting a \AIOC{} on the different
roles.
In particular, \cite{multi_tier} considers functional programs with role
annotations and splits them into different sequential programs to be run in
parallel. Multi-tier approaches such as HipHop \cite{hiphop} or Link
\cite{multi_tier_web} have been implemented and used to develop three tier web
applications. 
Differently from ours, these works abstract away communication information as
much as possible, while we emphasise this kind of information. Furthermore,
they do not take into account code update.
It would be interesting to look for cross-fertilisation results between our
approach and multi-tier programming. For instance, one could export our
techniques to deal with runtime updates into multi-tier programming. In the
other direction, we could abstract away communications and let the compiler
manage them according to the programmed flow of data, as done in multi-tier
programming. For instance, in Figure~\ref{fig:purchase_scenario} the
communications at Lines 22, 24, and 26 could be replaced by local assignments,
since the information on the chosen branch is carried by the auxiliary
communications. In general, communications seem not strictly needed from a
purely technical point of view, however they are relevant for the systems we
currently consider. Furthermore, the automatic introduction of communications
in an optimised way is not always trivial since the problem is NP-hard even
for languages that do non support threads~\cite{multi_tier_add_com}.

On a broader perspective, our theory can be used to inject guarantees of
freedom from deadlocks and races into many existing approaches to adaptation,
e.g., the ones in the surveys~\cite{adapt-survey12,ADAPTGhezzi}. However, this
task is cumbersome, due to the huge number and heterogeneity of those
approaches. For each of them the integration with our techniques is far from
trivial. Nevertheless, we already started it. Indeed, \AIOCJ follows the
approach to adaptation described in~\cite{JORBApaper}. However, applications
in~\cite{JORBApaper} are not distributed and there are no guarantees on the
correctness of the application after adaptation.
Furthermore, in the website of the AIOCJ project~\cite{AIOCJ}, we give
examples of how to integrate our approach with
distributed~\cite{distributedAOP} and dynamic~\cite{YangCSSSM02}
Aspect-Oriented Programming (AOP) and with Context-Oriented Programming (COP)
\cite{cop}. In general, we can deal with cross-cutting concerns like logging
and authentication, typical of AOP, viewing pointcuts as empty scopes and
advices as updates. Layers, typical of COP, can instead be defined by updates
which can fire according to contextual conditions.
As future work, besides applying the approach to other adaptation mechanisms,
we also plan to extend our techniques to deal with multiparty session
types~\cite{hondaESOPext,hondaPOPL,poplmontesi,Castagna}. The main challenge here is
to deal with multiple interleaved sessions whilst each of our choreographies corresponds
to a single session. An initial analysis of the problem
is presented in~\cite{WS-SEFM}.

Also the study of more refined policies for rule selection,
e.g., based on priorities, is a topic for future work.

\bibliographystyle{abbrv}
\bibliography{biblio}

\newpage
\appendix

\section{Proof of Theorem~\ref{teo:compl}}
\label{app:complexity}

In order to prove the bound on the complexity of the connectedness
check we use the lemma below, showing that the checks to verify the
connectedness 
for a single sequence operator can be performed in linear time on the size of
the sets generated by $\transI$ and $\transF$.

\begin{lemma}\label{lemma:limit_connectedness}
Given $S, S'$ sets of multisets of two elements, checking if $\forall\ s \in
S \ . \ \forall s'\in S' \ . \ s \cap s' \neq \emptyset$ can be done in
$O(n)$ steps, where $n$ is the maximum of $|S|$ and $|S'|$.
\end{lemma}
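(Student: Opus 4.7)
The naive approach iterates over all pairs $(s,s') \in S \times S'$, taking $\Theta(|S|\cdot|S'|)$ time. My plan is to beat this bound by exploiting the fact that each element of $S$ and of $S'$ is a multiset with at most two distinct elements, which severely constrains the structure of the size-$\le 2$ ``hitting sets'' of $S$. The strategy is to compute, in $O(|S|)$ preprocessing, a compact description of the family $V$ of multisets of size $\le 2$ that intersect every $s \in S$, and then test each $s' \in S'$ against this description in $O(1)$, for a total of $O(|S|+|S'|)=O(n)$.

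The description of $V$ is obtained by case analysis on $I = \bigcap_{s \in S} s$, which can be maintained in $O(|S|)$ by initialising with an arbitrary $s_0 \in S$ (size $\le 2$) and intersecting with each subsequent $s$. If $|I|=2$, every $s \in S$ equals the same pair $\{a,b\}$, and $V$ is the family of multisets touching $\{a,b\}$. If $|I|=1$, say $I=\{a\}$, set $Z = \bigcup_{s \in S}(s \setminus \{a\})$: any $s'$ containing $a$ lies in $V$, while any $s'$ avoiding $a$ lies in $V$ iff $Z \subseteq s'$ (and no $s \in S$ is the singleton $\{a\}$), which implicitly requires $|Z| \le 2$. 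If $|I|=0$, pick $s_1 \in S$ and scan $S$ for some $s_2$ disjoint from $s_1$: if such an $s_2$ exists, each $s' \in V$ must meet both $s_1$ and $s_2$ with one element apiece, leaving at most four explicit candidates which I would vet with one further $O(|S|)$ pass; otherwise, a combinatorial argument (below) shows $S$ lies inside a three-pair ``triangle'', so $V$ again has constant size.

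The main obstacle is the triangle claim: if $I=\emptyset$ and no two elements of $S$ are disjoint, then $S \subseteq \{\{a,b\},\{a,c\},\{b,c\}\}$ for some $a,b,c$. I would prove it by taking $p_1,p_2 \in S$ with $p_1 \cap p_2 = \{x\}$ (non-empty because they intersect, of size exactly one because $I=\emptyset$), writing $p_1=\{x,y\}$ and $p_2=\{x,z\}$. For any other $p \in S$, if $x \notin p$ then $p \cap p_1 \subseteq \{y\}$ and $p \cap p_2 \subseteq \{z\}$ force $p=\{y,z\}$; and at least one such $p$ exists, otherwise $x$ would belong to every element of $S$, contradicting $I=\emptyset$. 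Combining the preprocessing with the final $O(|S'|)$ scan of $S'$ against the compactly described $V$ yields the claimed $O(n)$ bound.
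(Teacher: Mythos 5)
Your proposal is correct in its overall strategy but takes a genuinely different route from the paper's. The paper first assumes without loss of generality that $|S|\le|S'|$ and splits on whether $|S|$ exceeds a constant threshold: below the threshold, brute force over $S\times S'$ is already $O(n)$; above it, the multisets of $S$ must mention at least four distinct elements, and a short case analysis shows the check can only succeed if all multisets of $S$ and $S'$ share one common element, which is testable in linear time. You instead compute $I=\bigcap_{s\in S}s$ and build an explicit constant-size description of the family $V$ of size-$\le 2$ hitting sets of $S$, handling every configuration (common element, the at most four candidates arising from a disjoint pair, the triangle) rather than discharging the degenerate ones by brute force. Your version is more constructive and avoids the threshold device; the paper's is shorter precisely because the threshold kills the disjoint-pair and triangle configurations outright (they would force $|S'|\le 4$, impossible when $|S'|>9$). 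One step of your sketch still needs to be closed: in the triangle claim you only treat $p\in S$ with $x\notin p$, but to conclude $S\subseteq\{\{x,y\},\{x,z\},\{y,z\}\}$ you must also argue that any $p\ni x$ other than $p_1,p_2$ lies in $\{\{x,y\},\{x,z\}\}$; this follows because such a $p$ must intersect $\{y,z\}$, whose membership in $S$ you already established. With that line added the argument is sound and yields the claimed $O(|S|+|S'|)=O(n)$ bound.
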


\begin{proof}
Without loss of generality, we can assume that $|S| \leq |S'|$. If $|S| \leq
9$ then the check can be performed in $O(n)$ by comparing all the elements in
$S$ with all the elements in $S'$. If $|S| > 9$ then at least 4 distinct
elements appear in the multisets in $S$ since the maximum number of multisets
with cardinality 2 obtained by $3$ distinct elements is $9$. In this case the
following cases cover all the possibilities:

\begin{itemize}

 \item there exist distinct elements $a,b,c,d$ s.t.~$\{a,b\},\{a,c\}$, and
	$\{a,d\}$ belong to $S$. In this case for the check to succeed all the
	multisets in $S'$ must contain $a$, otherwise the intersection of the
	multiset not containing $a$ with one among the multisets $\{a,b\},\{a,c\}$,
	and $\{a,d\}$ is empty. Similarly, since $|S'| > 9$, for the check to
	succeed all the multisets in $S$ must contain $a$. Hence, if
	$\{a,b\},\{a,c\}$, and $\{a,d\}$ belong to $S$ then the check succeeds iff
	$a$ belongs to all the multisets in $S$ and in $S'$.

 \item there exist distinct elements $a,b,c,d$ s.t.~$\{a,b\}$ and $\{c,d\}$
belong to $S$. In this case the check succeeds only if $S'$ is a subset of
$\{\{a,c\},\{a,d\},\{b,c\},\{b,d\} \}$. Since $|S'| > 9$ the check can never
succeed.

 \item there exist distinct elements $a,b,c$ s.t.~$\{a,a\}$ and $\{b,c\}$
	belong to $S$. In this case the check succeeds only if $S'$ is a subset of
	$\{\{a,b\},\{a,c\} \}$. Since $|S'| > 9$ the check can never succeed.

 \item there exist distinct elements $a,b$ s.t.~$\{a,a\}$ and $\{b,b\}$
	belong to $S$. In this case the check succeeds only if $S'$ is a subset of
	$\{\{a,b\} \}$. Since $|S'| > 9$ the check can never succeed.

\end{itemize}

\noindent Summarising, if $|S| > 9$ the check can succeed iff all the multisets in $S$
and in $S'$ share a common element. The existence of such an element can be
verified in time $O(n)$.
\end{proof}

\compl*

\begin{proof}
To check the connectedness of ${\mathcal I}$ we first compute the values of
the functions $\transI$ and $\transF$
for each node of the abstract syntax tree (AST). We then check for each
sequence operator whether connectedness 
holds.

The functions $\transI$ and $\transF$ associate to each node a set of pairs
of roles.  Assuming an implementation of the data set structure based on
balanced trees (with pointers), $\transI$ and $\transF$ can be computed in
constant time for interactions, assignments, $\one$, $\zero$, and sequence
constructs. For while and scope constructs computing $\transF({\mathcal I}')$
requires the creation of balanced trees having an element for every role of
${\mathcal I}'$. Since the roles are $O(n)$, $\transF({\mathcal I}')$ can be
computed in $O(n\log(n))$.  For parallel and if constructs a union of sets is
needed. The union costs $O(n\log(n))$ since each set generated by $\transI$
and $\transF$ contains at maximum $n$ elements.

Since the AST contains $n$ nodes, the computation of the sets generated by
$\transI$ and $\transF$
can be performed in $O(n^2\log(n))$.

To check connectedness 
we have to verify that for each node
${\mathcal I}'\seqOp{\mathcal I}''$ of the AST $\forall
\rolesFuncPair{\role R_1}{\role R_2} \in \transF({\mathcal I}'), \forall
\rolesFuncPair{\role S_1}{\role S_2} \in \transI({\mathcal I}'') \; . \;
\{\role R_1,\role R_2\} \cap \{\role S_1,\role S_2\} \neq \emptyset$.
Since $\transF({\mathcal I}')$ and $\transI({\mathcal I}'')$ have $O(n)$
elements, thanks to Lemma \ref{lemma:limit_connectedness}, checking if
${\mathcal I}'\seqOp{\mathcal I}''$ is connected 
costs $O(n)$.
Since in the AST there are less than $n$ sequence operators, checking the
connectedness 
on the whole AST costs $O(n^2)$.

The complexity of checking the connectedness of the entire AST is therefore
limited by the cost of computing functions $\transI$ and $\transF$
and of checking 
the connectedness. 
All these activities have a complexity of $O(n^2\log(n))$.
\end{proof}

\section{Proofs of Section~\ref{sec:detailed_proof}}\label{appendix:proof}

\lemmaDistinct*

\begin{proof} The proof is by induction on the number $n$ of transitions, using
	a stronger inductive hypothesis: indexes are distinct but, possibly, inside
	\AIOC{} subterms of the form $\mathcal{I};\whileKey{b \at \role
	R}{\mathcal{I'}}{i}$. In this last case, the same index can occur both in
	$\mathcal I$ and in $\mathcal I'$, attached to constructs with different
	global indexes. The statement of the Lemma follows directly: first, distinct
	indexes imply distinct global indexes; second, global indexes of $\mathcal
	I$ and of $\mathcal I'$ are distinct, since $\mathcal I'$ is inside the while
	loop whilst $\mathcal I$ is not.

	In the base case ($n=0$), thanks to well annotatedness, indexes are always
	distinct. The inductive case follows directly by induction for transitions
	with label $\tick$. Otherwise, we have a case analysis on the only axiom which
	derives a transition with label different from $\tick$.

	The only difficult cases are $\did{\AIOC}{While-unfold}$ and
	$\did{\AIOC}{Up}$.

	In the case of Rule $\did{\AIOC}{While-unfold}$, note that the while is
	enabled, hence it cannot be part of a term of the form
	$\mathcal{I};\whileKey{b \at \role R}{\mathcal{I'}}{i}$. Hence, indexes of the
	body of the while loop do not occur elsewhere. As a consequence, after the
	transition no clashes are possible with indexes in the context. Note also that
	indexes of the body of the loop are duplicated, but the resulting term has the
	form $\mathcal{I};\whileKey{b \at \role R}{\mathcal{I'}}{i}$, thus global
	indexes are distinct by construction.

 	The case $\did{\AIOC}{Up}$ follows thanks to the condition
 	$\func{freshIndexes}(\mathcal{I}')$.
\end{proof}

\lemmaEv*

\begin{proof}\hfill
\begin{enumerate}
\item By definition of projection.
\item Let $\ev_1 \leqaioc \ev_2$. We have a case analysis on the condition
  used to derive the dependency.
\begin{description}
\item [Sequentiality] Consider $\mathcal{I} = \mathcal{I}'\seqOp
\mathcal{I}''$. If  events are in the same role the implication follows from
the sequentiality of the $\leqapoc$.

Let us show that there exists an event $\ev''$ in an initial interaction of
$\mathcal{I}''$ such that either $\ev'' \leqapoc \ev_2$ or $\ev'' \leqapoc
\overline{\ev_2}$. The proof is by induction on the structure of
$\mathcal{I}''$. The only difficult case is sequential composition. Assume
$\mathcal{I}'' = \mathcal{I}_1; \mathcal{I}_2$. If $\ev_2 \in
\devent(\mathcal{I}_1)$ the thesis follows from inductive hypothesis. If
$\ev_2 \in \devent(\mathcal{I}_2)$ then by induction there exists an event
$\ev_3$ in an initial interaction of $\mathcal{I}_2$ such that $\ev_3 \leqapoc
\ev_2$ or $\ev_3 \leqapoc \overline{\ev_2}$. By synchronisation
(Definition~\ref{def:causalapoc}) we have that $\overline{\ev_3} \leqapoc
\ev_2$ or $\overline{\ev_3} \leqapoc \overline{\ev_2}$. By connectedness 
we have that $\ev_3$ or $\overline{\ev_3}$ are in the same role of
an event $\ev_4$ in $\mathcal{I}'$. By sequentiality (Definition
\ref{def:causalapoc}) we have that $\ev_4 \leqapoc \ev_3$ or $\ev_4 \leqapoc
\overline{\ev_3}$. By synchronisation we have that $\overline{\ev_4} \leqapoc
\ev_3$ or $\overline{\ev_4} \leqapoc \overline{\ev_3}$. The thesis follows
from the inductive hypothesis on $\ev_4$ and by transitivity of $\leqapoc$.

Let us also show that there exists a final event $\ev''' \in
\devent(\mathcal{I}')$ such that $\ev_1 \leqapoc \ev'''$ or $\ev_1 \leqapoc
\overline{\ev'''}$. The proof is by induction on the structure of
$\mathcal{I}'$. The only difficult case is sequential composition. Assume
$\mathcal{I}' = \mathcal{I}_1; \mathcal{I}_2$. If $\ev_1 \in
\devent(\mathcal{I}_2)$ the thesis follows from inductive hypothesis. If
$\ev_1 \in \devent(\mathcal{I}_1)$ then the proof is similar to the one
above, finding a final event in $\mathcal{I}_1$ and applying sequentiality,
synchronisation, and transitivity.

The thesis follows from the two results above again by sequentiality,
synchronisation, and transitivity.
\item [Scope] it means that either (\emph{a}) $\ev_1 = \uparrow_{\idxSign \xi}$
and $\ev_2$ is an event in the scope or (\emph{b}) $\ev_1 = \uparrow_{\idxSign
\xi}$  and $\ev_2 = \downarrow_{\idxSign \xi}$, or (\emph{c}) $\ev_1$ is an
event in the scope and $\ev_2 = \downarrow_{\idxSign \xi}$. We consider case
(\emph{a}) since case (\emph{c}) is analogous and case (\emph{b})  follows by
transitivity. If $\ev_2$ is in the coordinator then the thesis follows easily.
Otherwise it follows thanks to the auxiliary synchronisations with a reasoning
similar to the one for sequentiality.
\item [Synchronisation] it means that $\ev_1$ is a sending event and $\ev_2$
is the corresponding receiving event, namely $\ev_1 = \overline{\ev_2}$ .
Thus, since $\ev_2 \leqapoc \ev_2$ then $\overline{\ev_2} \leqapoc \ev_2$.
\item [If] it means that $\ev_1$ is the evaluation of the guard and $\ev_2$
is an event in one of the two branches. Thus, if $\ev_2$ is in the coordinator
then the thesis follows easily. Otherwise it follows thanks to the auxiliary
synchronisations with a reasoning similar to the one for sequentiality.
\item [While] it means that $\ev_1$ is the evaluation of the guard and
$\ev_2$ is in the body of the while loop. Thus, if $\ev_2$ is in the
coordinator then the thesis follows easily. Otherwise it follows thanks to the
auxiliary synchronisations with a reasoning similar to the one for
sequentiality.\qedhere
\end{description}
\end{enumerate}
\end{proof}

\lemmaIOCwell*

\begin{proof} We have to prove that $\proj(\mathcal{I},\Sigma)$ satisfies the
conditions of Definition~\ref{defin:synchwa} of well-annotated \APOC{}:
\begin{description}
\item[{\sc C1}] For each global index $\xi$ there are at most two
communication events on programmer-specified operations with global index
$\xi$ and, in this case, they are matching events.
The condition follows by the definition of the projection function, observing
that in well-annotated \AIOC{}s, each interaction has its own index, and
different indexes are mapped to different global indexes.
\item[{\sc C2}] Only events which are minimal according to $\leqapoc$ may
correspond to enabled transitions. This condition follows from
Lemma~\ref{lemma:minimalevent}.
\item[{\sc C3}] For each pair of non-conflicting sending events $[f_\xi]_{\role
R}$ and $[f_{\xi'}]_{\role R}$ on the same operation $o^?$ and with the same
target $\role R'$ such that $\xi \neq \xi'$ we have $[f_\xi]_{\role R} \leqapoc
[f_{\xi'}]_{\role R}$ or $[f_{\xi'}]_{\role{R}} \leqapoc [f_{\xi}]_{\role R}$.
Note that the two events are in the same role $\role R$, thus without loss of
generality we can assume that there exist two processes $P,P'$ such that
$[f_\xi]_{\role R} \in \event(P)$ and $[f_{\xi'}]_{\role R} \in \event(P')$ and
there is a subprocess of $\net$ of one of the following forms:

\begin{itemize}
	\item $P\seqOp P'$: the thesis follows by sequentiality
	(Definition~\ref{def:causalapoc});
	\item $P \parOpP P'$: this case can never happen for the reasons below. 
	For events on program\-mer-specified operations this follow by the definition
	of projection, since the prefixes of the names of operations are different.
	For events on auxiliary operations originated by the same construct this
	follows since all the targets are different. For events on auxiliary
	operations originated by different constructs this follows since the
	prefixes of the names of the operations are different.
	\item $\ifthen{b}{P}{P'}$: this case can never happen since the events are
	non-conflicting (Definition \ref{def:conflict}).
\end{itemize}

\item[{\sc C4}] Similar to the previous case, with receiving events instead of
sending events.

\item[{\sc C5}] If $\ev$ is an event inside a scope with global index $\xi$ then
its matching events $\overline{\ev}$ (if they exist) are inside a scope with the
same global index. This case holds by definition of the projection function.

\item[{\sc C6}] If  two events have the same index but different global indexes
then one of them, let us call it $\ev_1$, is inside the body of a while loop
with global index $\xi_1$ and the other, $\ev_2$, is not. Furthermore, $\ev_2
\leqapoc \ev_{\xi_1}$ where $\ev_{\xi_1}$ is the guarding while-event of the
while loop with global index $\xi_1$. By definition of well-annotated
\AIOC{} and of projection the only case where there are two events with the same
index but different global indexes is for the auxiliary communications in the
projection of the while construct, where the conditions hold by construction.\qedhere
\end{description}
\end{proof}

\end{document}